\theoremstyle{plain}
\newtheorem{theorem}{Theorem}[section]
\newtheorem{lemma}[theorem]{Lemma}
\newtheorem{corollary}[theorem]{Corollary}
\theoremstyle{definition}
\newtheorem{definition}[theorem]{Definition}
\newcommand{\ignore}[1]{}
\newcommand{\RS}[1]{ {#1} }
\newcommand{\CS}[1]{{\mathrm{#1}}}
\newcommand{\E}[1]{\text{\normalfont E}\left[ #1 \right]}
\newcommand{\@giventhatnostar}[2]{#1\;\middle|\;#2}
\newcommand{\@giventhatstar}[3][]{#1(#2\;#1|\;#3#1)}
\newcommand{\given}{\@ifstar\@giventhatstar\@giventhatnostar}
\global\long\def\AOS{\text{AOS}}
\global\long\def\IID{\text{i.i.d.}}
\global\long\def\ALG{\text{\normalfont ALG}}
\global\long\def\OPT{\text{\normalfont OPT}}
\newcommand{\set}[1]{\left\{ #1 \right\}}
\newcommand{\Csample}{\CS{U}} % constant sample
\newcommand{\Rsample}{\RS{L}'} % random sample
\newcommand{\CEsample}{\CS{F}} % constant edge sample
\newcommand{\REsample}{\RS{E}'} %random edge sample
\newcommand{\MOnline}{M}
\newcommand{\Greedy}{\textsc{Greedy}}
\newcommand{\cC}{{\cal C}}
\newcommand{\cE}{{\cal E}}
\newcommand{\cI}{{\cal I}}
\newcommand{\cJ}{{\cal J}}
\newcommand{\cP}{{\cal P}}
\newcommand{\cU}{{\cal U}}
\newcommand{\cV}{{\cal V}}
\newenvironment{proofw}{\par
  \pushQED{\qed}%
  \normalfont \topsep7\p@\@plus6\p@\relax
  \trivlist
  \item[]\ignorespaces
}{%
  \popQED\endtrivlist\@endpefalse
}
\title{Online Weighted Matching with a Sample}
\author{
    Haim Kaplan%
    \thanks{Blavatnik School of Computer Science, Tel Aviv University, Israel.
    Email: \texttt{haimk@tau.ac.il}. 
    Supported by ISF grant.\ 1595-19 and the Blavatnik Family Foundation.}
    \and
    David Naori%
    \thanks{Computer Science Department, Technion, Israel.
    Emails: \texttt{\{dnaori,danny\}@cs.technion.ac.il}.} 
    \and Danny Raz%
    \footnotemark[2]
}
\date{}
\begin{document}

\begin{titlepage}
    \thispagestyle{empty}
    \maketitle
    \begin{abstract}
        \thispagestyle{empty}
        We study the greedy-based online algorithm for edge-weighted matching with (one-sided) vertex arrivals in bipartite graphs, and edge arrivals in general graphs. This algorithm was first studied more than a decade ago by Korula and P\'al for the bipartite case in the random-order model. While the weighted bipartite matching problem is solved in the random-order model, this is not the case in recent and exciting online models in which the online player is provided with a sample, and the arrival order is adversarial. The greedy-based algorithm is arguably the most natural and practical algorithm to be applied in these models. Despite its simplicity and appeal, and despite being studied in multiple works, the greedy-based algorithm was not fully understood in any of the studied online models, and its actual performance remained an open question for more than a decade.

We provide a thorough analysis of the greedy-based algorithm in several online models. For vertex arrivals in bipartite graphs, we characterize the exact competitive-ratio of this algorithm in the random-order model, for any arrival order of the vertices subsequent to the sampling phase (adversarial and random orders in particular). We use it to derive tight analysis in the recent adversarial-order model with a sample (AOS model) for any sample size, providing the first result in this model beyond the simple secretary problem. Then, we generalize and strengthen the black box method of converting results in the random-order model to single-sample prophet inequalities, and use it to derive the state-of-the-art single-sample prophet inequality for the problem. Finally, we use our new techniques to analyze the greedy-based algorithm for edge arrivals in general graphs and derive results in all the mentioned online models. In this case as well, we improve upon the state-of-the-art single-sample prophet inequality.
    \end{abstract}
\end{titlepage}

\section{Introduction}\label{sec:intro}

We study the online edge-weighted matching problem with vertex arrivals in bipartite graphs, and edge arrivals in general graphs.
Through the lens of the standard worst-case competitive-analysis paradigm,
any non-trivial performance guarantees cannot be achieved for these problems~\cite{DBLP:conf/soda/AggarwalGKM11}. Therefore, different online models with additional restricting assumptions are used.
These assumptions restrict the input sequences in various ways, and as a consequence, narrow the scope of the derived guarantees.

In recent years, sample-based online models have been introduced with the goal of minimizing restricting assumptions and achieving robust guarantees for realistic conditions. These models assume that the online algorithm has an access to a sample which can be extracted from historical data or by other means. This sample gives the algorithm limited information about future occurrences.
One such model is the recent adversarial-order model with a sample (AOS) by Kaplan et al.~\cite{kaplan2020competitive},
which allows the input and the arrival order to be fully adversarial, as in the standard worst-case competitive-analysis,
but provides a sample to the online player in advance.
Another related model is the single-sample prophet inequality by Azar et al.~\cite{soda/AzarKW14}.

Sample-based online algorithms have a typical structure. Given a sample, when an online element arrives and a decision has to be made,
they base their decision only on information from the sample and the current element. Naturally, using information gathered from previous online rounds is risky, as the adversary controls the arrival order and can reveal deceiving information. 
A common scheme of this form uses an offline algorithm as a black box: At every online round, run the offline algorithm on an input that consists of the sample and the arriving element. Then, treat the arriving element based on the decisions of the offline algorithm. A natural offline algorithm to apply this scheme with is the greedy algorithm.

For edge-weighted bipartite matching with (one-sided) vertex arrivals, this greedy-based online algorithm was first studied by Korula and P{\` a}l~\cite{korula2009algorithms} in the random-order model. 
In the random-order model, a sample is retrieved from a prefix of the online sequence (sampling phase).

The greedy-based algorithm has many advantages. For example, it admits a convenient and efficient form of a price-threshold policy which makes it useful for posted-price mechanisms. Despite its simplicity and appeal, the greedy-based matching algorithm was not fully understood in any relevant online model, and its actual performance remained an open problem for more than a decade. 

Azar et al.~\cite{soda/AzarKW14} observed that many algorithms which were studied in the random-order model are sample-based.
As such, these algorithms can be analyzed so that their competitive-ratio holds for any arrival order of the online sequence after the sampling phase is done (historically, some works unintentionally accomplished this). Azar et al.\ called these algorithms and analyses
order-oblivious and gave a reduction that 
obtains an algorithm for single-sample prophet inequality from an order-oblivious algorithm, while maintaining the competitive-ratio of the latter.
This demonstrates the power of order-oblivious algorithms in a wider scope than the random-order model.

In this paper, we study the greedy-based online algorithm for edge-weighted matching with vertex arrivals in bipartite graphs, and edge-arrivals in general graphs. We improve and attain tight bounds on the performance of the greedy-based algorithm in several online models. Moreover, we beat the state-of-the-art results for both problems in the single-sample prophet inequality setting. Our analysis sheds light on the performance of the classical offline greedy algorithm on random induced sub-graphs (on vertices or edges), which has potential implications beyond the scope of this paper. 

\subsection{Our Contribution}

For vertex arrivals in bipartite graphs, we provide a tight order-oblivious analysis of the greedy-based algorithm.
More concretely,
we show that the greedy-based algorithm is exactly $\left(3-2\sqrt{2}\right)\approx \frac{1}{5.83}$-competitive, improving upon both the random-order competitive-ratio of $1/8$ and the order-oblivious competitive-ratio of $\frac{1}{13.5}$ by Korula and P{\'a}l~\cite{korula2009algorithms}.
Interestingly, we show that this is the exact competitive-ratio of the algorithm for any (even for the best-case) arrival order of the vertices after the sampling phase. Our analysis  characterizes the performance of the offline greedy algorithm on random induced sub-graphs in which each online vertex is drawn independently with probability $p$. We show that it gives $p/(1+p)$-approximation for the maximum weight matching in the entire graph and that this ratio is tight.

We continue and  derive a tight analysis of the greedy-based online algorithm in the recent adversarial-order model with a sample, providing the first result in this model outside the context of the simple secretary problem. We adopt the mathematically convenient independent sampling approach by Correa et al.~\cite{correa2021secretary}, in which every potential online item (vertex in our case) is drawn to the sample independently with probability $p$ ($\AOS p$). We show that for $p \leq 1/2$, the competitive-ratio of the algorithm is $p(1-p)$, and for $p > 1/2$, we retain the competitive-ratio of $1/4$. We also show that our analysis is tight for any $p$. 

As for the performance of greedy relative to any online algorithm in the $\AOS p$ model, we know the following. An upper-bound on the competitive-ratio of any online algorithm for the problem of $\min\set{p,1/2}$ can be derived from the upper-bound on the secretary problem by Kaplan et al.~\cite{kaplan2020competitive}. This shows that for $p \leq 1/2$, the greedy-based algorithm is at most a factor of $(1-p)$ away from the best possible competitive-ratio, and at most a factor of $1/2$ away for $p > 1/2$.

Subsequently, we generalize and strengthen the reduction of Azar et al.~\cite{soda/AzarKW14} in two different ways. 
First, we generalize the reduction to apply to batches. This allows to account for dependencies within each batch. For example, in our case, all edges incident to the same online vertex constitute a batch. Second, we reduce a stronger online model than the single-sample prophet inequality,
which we call the (batched) two-faced model. In this model,  we replace the underlying weight distributions with adversarial weights. The adversary needs to choose two weight vectors for each batch. Then, one random weight vector for each batch is given as a sample. The remaining vectors, one per batch, are used online to challenge the online player in an adversarial order.
We apply our new reduction to obtain a competitive-ratio of $\frac{1}{5.83}$ for edge-weighted bipartite-matching with vertex arrivals in the (batched) two-faced model, which improves upon a recent result of $1/8$ by D{\"u}tting et al.~\cite{dutting2021prophet} in the (weaker) single-sample prophet inequality setting. 

We note that previous results for single-sample prophet inequalities actually hold in the stronger two-faced model (albeit, not with batches)~\cite{soda/AzarKW14, rubinstein2020optimal, caramanis2021single, dutting2021prophet}. In particular, Rubinstein et al.~\cite{rubinstein2020optimal} showed that for the classical single choice prophet inequality, the optimal competitive-ratio of $1/2$ with full knowledge of the distributions can be achieved in the two-faced model (and hence with a single sample). 
In contrast, in the case of edge-weighted bipartite matching with vertex arrivals, we show that there is a separation between the full knowledge prophet inequality and the two-faced model. We prove an upper-bound of $2/5$ on the competitive-ratio of any algorithm in the two-faced model, whereas Feldman et al.~\cite{feldman2014combinatorial} gave an optimal $1/2$-competitive algorithm for the problem with full knowledge of the distributions.

We then use our techniques to analyze the greedy-based algorithm for edge-weighted matching with edge arrivals in general graphs. As in the vertex arrivals case, we begin by providing an order-oblivious analysis of the algorithm. Interestingly, we show that the offline greedy algorithm on a random induced sub-graph to which each edge is drawn independently with probability $p$, is exactly $\min\set{p,1/2}$-approximation of the maximum matching in the entire graph. In particular, greedy retains its $1/2$ approximation-ratio even when each edge is discarded from the graph with probability $1/2$, which is the best possible approximation-ratio in this setting. %(the maximum matching in the sub-graph also gives only $1/2$-approximation).
Then, we use our reduction to obtain a competitive-ratio of $\frac{1}{11.66}$ in the two-faced model, improving upon recent results of $1/16$ by D{\"u}tting et al.~\cite{dutting2021prophet} for the single-sample prophet inequality setting. Finally, we derive results in the $\AOS p$ model for all values of $p$. 
\subsection{Related Work}

Online matching problems have been studied extensively over the last three decades. For unweighted bipartite graphs with one-sided vertex arrivals, the celebrated result of Karp et al.~\cite{karp1990optimal} shows that there is an optimal $(1-1/e)$-competitive online algorithm in the standard worst-case adversarial-order model.
The edge-weighted problem has been studied in various different online models, the most relevant of which are the random-order model, and stochastic online models such as the $\IID$ model from a known or unknown distribution, and prophet inequalities~\cite{mehta2013online, feldman2014combinatorial}. We survey the studies that are most related to our work.

In the random-order model, Kesselheim et al.~\cite{DBLP:conf/esa/KesselheimRTV13} gave an optimal $1/e$-competitive algorithm for vertex arrivals in bipartite graphs, reaching the best competitive-ratio possible, even for the special case of the classical secretary problem. Later, in pursuit of an algorithm that uses only ordinal information, Hofer and Kodric~\cite{hoefer2017combinatorial} studied a similar algorithm which is based on the offline greedy algorithm, but slightly different from the greedy-based algorithm we study in this paper. They showed that it achieves a competitive-ratio of $1/(2e)$. As opposed to the sample-based algorithms described in the introduction, both algorithms base their decisions on all the revealed information (and not only on the sample), which makes them susceptible to adversarial arrival orders.
For edge arrivals in general graphs, Kesselhiem et al.~showed a $1/(2e)$-competitive algorithm which was recently improved by Ezra et al.~\cite{ezra2020secretary} who gave a $1/4$-competitive (exponential time) online algorithm.
While the random order assumption allows achieving desirable constant competitive-ratios, 
in most realistic conditions, online vertices and edges arrive in a coordinated fashion over time, and 
the arrival order is far from being uniformly random.

The $\IID$ model from a known or unknown distribution relies on stronger assumptions. Indeed, the competitive-ratio of any algorithm may only improve in the $\IID$ model compared to the random-order model~\cite{mehta2013online}.
For vertex arrivals in bipartite graphs, it restricts each vertex to be drawn independently from a distribution over vertex types, which is either known or unknown to the online player. In the known distribution case, Haeupler et al.~\cite{haeupler2011online} presented a simple $(1-1/e)$-competitive algorithm for general distributions, and a better ratio for the special case of integral rates, which was later improved to $0.705$ by Brubach et al.~\cite{brubach2016new}. 

In the (batched) prophet inequality setting, for vertex arrivals in bipartite graphs, 
all edges incident to the same (online) vertex constitute a batch. The edge weights of each batch are drawn independently from a particular joint distribution for the batch.
All distributions are known to the online player in advance, and the vertices arrive online in an adversarial order. Feldman et al.~\cite{feldman2014combinatorial} gave an optimal $1/2$-competitive algorithm in this model with an oblivious adversary who chooses the arrival order without knowing the weight realizations. As in the random-order model, this result matches the best competitive-ratio for the special case of the single choice prophet inequality. For edge arrivals in general graphs, Ezra et al.~\cite{ezra2020online} presented a $0.337$-competitive algorithm (also for an oblivious adversary). While the prophet inequality provides robust guarantees against (oblivious) adversarial arrival order, it relies on 
the rather strong assumption that the distributions are known exactly.

In recent years, there is a growing interest in the study of online models with limited prior data. Many of these studies are focused on the basic single-choice prophet inequality or the secretary problem. 
Correa et al.~\cite{DBLP:conf/ec/CorreaDFS19} studied the single-choice prophet inequality in an $\IID$ model from an unknown distribution, in which the online player is equipped with a limited number of training samples from the distribution. They presented a $(1-1/e)$-competitive algorithm that uses $n$ training samples, which was later improved by \cite{correa2020two} and \cite{correa2020streaming}. 

As mentioned before, Kaplan et al.~\cite{kaplan2020competitive} introduced the adversarial-order model with a sample ($\AOS$ model), in which the input is fully adversarial, but a uniformly random sample of limited size is revealed to the online player upfront for the purpose of learning. Then, the remaining part of the input arrives online in an adversarial order.
Later, Correa et al.~\cite{correa2021secretary} introduced the $\AOS$ model with $p$-sampling ($\AOS p$), which differs from the $\AOS$ model only in the sampling procedure: Each online element is drawn to the sample independently with probability $p$. This sampling procedure makes the model more mathematically convenient. Both the $\AOS$ and $\AOS p$ were used to study the simple secretary problem.

Azar et al.~\cite{soda/AzarKW14} studied prophet inequalities with limited information, in which the assumption about the knowledge of the distributions is replaced with a limited number of samples from each distribution. They showed a $\frac{1}{6.75}$-competitive algorithm for matching in bipartite graphs with constant maximum-degree $d$, that uses $d^2$ samples from each distribution. As discussed before, Azar et al.~\cite{soda/AzarKW14} designed a method of converting an order-oblivious algorithm to an algorithm in the single-sample prophet inequality setting. The reduction does not use all the available information in the input. Therefore, there is a hope to attain better bounds by analyzing algorithms directly in the single-sample prophet inequality setting. Indeed, Rubinstein et al.~\cite{rubinstein2020optimal} obtained an optimal $1/2$-competitive algorithm for the single-choice prophet inequality with a single sample by a direct analysis, compared to a competitive-ratio of $1/4$ that can be obtained through the reduction with a simple order-oblivious algorithm. Moreover, we show that $1/4$ is in fact the best possible guarantee one can get through the reduction.

Very recently, Caramanis et al.~\cite{caramanis2021single} and D{\"u}tting et al.~\cite{dutting2021prophet} studied the greedy-based algorithm for the edge-weighted matching problem with vertex arrivals in bipartite graphs, and edge arrivals in general graphs in the single-sample prophet inequality setting with independent edge values.
Inspired by the work of Rubinstein et al.~\cite{rubinstein2020optimal}, both works aimed to improve the known bounds by analyzing the greedy-based algorithm directly in the single-sample prophet inequality setting and avoid going through the reduction by Azar et al.~\cite{soda/AzarKW14}. While we do not analyze algorithms directly in the single-sample prophet inequality setting, and go through a similar reduction, we still improve upon their results and achieve the state-of-the-art competitive-ratio for both vertex and edge arrivals.

Another related model with limited prior-knowledge was studied by Kumar et al.~\cite{kumar2018semi} for the unweighted bipartite matching. They considered a semi-online model in which a sub-graph is known in advance (the known sub-graph also arrives online) and developed algorithms with competitive-ratio that improves as the size of the known fraction of the graph increases.

In a bit more distant settings, several exciting advances were recently made. Farbach et al.~\cite{Fahrbach-focs2020} studied the edge-weighted bipartite-matching problem with vertex arrivals in the free-disposal model, where each offline vertex can be matched any number of times, but only the heaviest edge counts. In this model, greedy achieves a competitive-ratio of $1/2$. They show how to break the $1/2$ barrier by introducing an online correlated selection technique. Additional advances were made for general arrivals in the unweighted and vertex-weighted cases (see~\cite{huang2018match,huang2019tight,huang2019online,gamlath2019online} for example). We also refer the interested reader to the extensive survey by Aranyak Mehta~\cite{mehta2013online}.
\subsection{Organization of the Paper}
In Section~\ref{sec:vertex_bipartite} we present our tight
order-oblivious analysis of the greedy-based algorithm for vertex arrivals in bipartite graphs.
In Section~\ref{sec:bipartite_aos_p} we continue with vertex arrivals in bipartite graphs in the $\AOS p$ model. We begin with a formal definition of the problem, then we present tight bounds on the performance of the greedy-based algorithm, for any value of $p$, that are derived from our order-oblivious analysis.
In Section~\ref{sec:batched} we establish formal and general definitions for batched online selection problems, and present our improved black box reduction from order-oblivious algorithms to two-faced algorithms. Then, we use it to derive results for the two-faced bipartite matching with vertex arrivals, and finish by proving an upper-bound on the competitive-ratio of any algorithm for the problem.
Finally, in Section~\ref{sec:edge_general} we consider edge arrivals in general graphs. We begin with an order-oblivious analysis of the greedy-based algorithm, and then use it to derive results in the two-faced model and the $\AOS p$ model.
\section{Vertex Arrivals in Bipartite Graphs}\label{sec:vertex_bipartite}

In this section, we present a tight order-oblivious analysis of the greedy-based algorithm in the random-order model. Recall that an analysis in the random-order model is called order-oblivious, if it holds for any arrival order of the online sequence after the sampling phase is done. We provide a more formal definition of order-oblivious algorithms in Section~\ref{sec:batched} after establishing general definitions of online selection problems. The order-oblivious analysis in this section is also used in Section~\ref{sec:bipartite_aos_p} to derive tight results in the adversarial order model with a sample ($\AOS p$). 

We begin with a formal definition of the problem in the random-order model.
In the random-order online (weighted) bipartite matching problem with (one-sided) vertex arrivals, an adversary chooses a bipartite graph $\CS{G} = (\CS{L}, \CS{R}, \CS{E})$ with non-negative edge weights $w : \CS{E} \rightarrow \mathbb{R}_{\geq 0}$.\footnote{We assume that a consistent tie-breaker is available so that the set of weights $\set{w(e)}_{e\in\CS{E}}$ is totally ordered. Throughout the paper, when edge weights are compared, it is implicitly assumed that this tie-breaker is applied. }
The right side vertices $\CS{R}$ and the cardinality of the left side $\CS{|L|}$ are revealed to the online algorithm upfront. Then, the vertices of $\CS{L}$ arrive one-by-one in a uniformly random order. When a vertex $u \in \CS{L}$ arrives, its incident edges and their weights are revealed. At this point, the online algorithm must either match $u$ to an available (unmatched) neighbor in $\CS{R}$, or leave $u$ unmatched. The decision is permanent, and must be made before the next online vertex arrives. The goal is to maximize the expected total weight of the produced matching.

For an algorithm $\ALG$ and an input instance $\cI$,
let $\ALG(\cI)$ be the random variable that gets the total weight of the matching computed by the algorithm, and let $\OPT(\cI)$ be the weight of a maximum matching in $\CS{G}$. With a slight abuse of notation, we also use $\ALG(\cI)$ and $\OPT(\cI)$ to refer to the respective matchings (as sets of edges, and not only to their weights). When $\mathcal{I}$ is clear from the context, we omit it from the notation and write, for example, $\ALG$ instead of $\ALG(\mathcal{I})$.

An algorithm $\ALG$ is called $c$-competitive, if for any input instance $\cI$, $\E{\ALG(\cI)} \geq c \cdot \OPT(\cI)$, where the expectation is taken over the random arrival order of the vertices, and the internal randomness of the algorithm.

We denote by $\textsc{Greedy}$ the well-known offline greedy algorithm for weighted matching: 
$\Greedy$ processes the edges of the graph one-by-one in a non-increasing order of their weight. When it processes $e = (u,v)$, it adds $e$ to its matching if no edge incident to $u$ or $v$ was added before.

In this section we provide a tight order-oblivious analysis for the fundamental greedy-based online algorithm,
which was first studied by Korula and P{\`a}l~\cite{korula2009algorithms}. The algorithm works as follows: It begins with a sampling phase of random length $k = Binom(|\CS{L}|,p)$ in which it only collects the incoming vertices into a sample $\RS{L}'$. Afterwards, when a new vertex $u$ arrives, the algorithm selects (at most) one of its incident edges as a candidate for its output matching. It determines the candidate edge by computing the offline greedy matching on the sub-graph induced by the vertices of the offline side $\CS{R}$, the sample set $\RS{L'}$, and the arriving vertex $u$. I.e. $\textsc{Greedy}(\CS{G}[L' \cup \set{u} \cup \CS{R}])$. The edge incident to $u$ in this matching, $(u,r)$, is the algorithm's selected candidate (if there is no such edge, no candidate is chosen). Then, the candidate edge is added to the output matching if no previous online vertex was matched to $r$.  For a formal description see Algorithm~\ref{alg:greedy_v_a}.\footnote{Equivalently, Algorithm~\ref{alg:greedy_v_a} can be described with price thresholds on the vertices of $\CS{R}$ as in~\cite{korula2009algorithms}, which provides a more practical way of implementing it.  The price on each vertex $r \in \CS{R}$ is simply the weight of the edge incident to $r$ in the greedy matching on the sample $\Greedy(\CS{G}[L' \cup \CS{R}])$. Then, when an online vertex $u$ arrives, its candidate edge is the heaviest edge incident to $u$ whose weight exceeds the price of its right side vertex.}

In this section, we refer only to induced sub-graphs of $\CS{G}$ that contain the entire right side $\CS{R}$. Hence, to abbreviate the notation, for $\CS{U} \subseteq \CS{L}$ we write $\CS{G}[\CS{U}]$ instead of $\CS{G}[\CS{U} \cup \CS{R}]$.

\IncMargin{1em}
\begin{algorithm}
\caption{Order-Oblivious Greedy-Based Vertex Arrivals in Bipartite Graphs}
\label{alg:greedy_v_a}
$k \leftarrow Binom(|\CS{L}|,p)$\;

Let $\RS{L}' \subseteq \CS{L}$ be the first $k$ vertices that arrive online\tcp*{sampling phase}

$ \MOnline \leftarrow \emptyset$\;

\For {a vertex $u_{\ell}$ that arrives at round $\ell > k$ } {
$\RS{G}_\ell \leftarrow \CS{G}[{\RS{L'} \cup \set{u_\ell}}]$\;
    $\RS{M}_\ell \leftarrow \textsc{Greedy}(\RS{G}_\ell)$\;
    \uIf{$u_\ell$ is matched in $\RS{M}_\ell$} {
        Let $(u_\ell, r_\ell) \in \RS{M}_\ell$ be the corresponding edge\tcp*{a candidate edge}\label{line:candidate_edge}
        \If {$r_\ell$ is not matched in $\MOnline$} {
	        $ \MOnline \leftarrow \MOnline \cup \set{(u_\ell, r_\ell)}$\;
        }
    }
}
\Return{\RS{M}}
\end{algorithm}
\DecMargin{1em}

For the analysis, given an input graph $\CS{G}$, we define an auxiliary directed graph, which will also be useful for the case of edge arrivals in general graphs.
\begin{definition}[directed line-graph]\label{def:directed_line_graph}
Given an edge-weighted graph $\CS{G} = (\CS{V},\CS{E})$, $w : \CS{E} \rightarrow \mathbb{R}_{\geq 0}$, the \textit{directed line-graph} of $\CS{G}$ is a vertex-weighted directed graph $\CS{G_D} = (\CS{V_D},\CS{E_D})$. To avoid confusion between $\CS{G_D}$ and $\CS{G}$, we refer the elements of $\CS{V_D}$ and $\CS{E_D}$ by nodes and arcs, respectively. Each node of $\CS{V_D}$ is associated with an edge of $\CS{E}$; $\CS{V_D} = \set{v_e : e \in \CS{E}}$. The weight of each node $v_e \in \CS{V_D}$ is $w(e)$ and there is an arc $v_e \rightarrow v_{e'}$ if $e$ and $e'$ share a common vertex, i.e., $e \cap e' \neq \emptyset$ and $w(e') < w(e)$.

For $u \in \CS{V}$, $\CS{V_D}(u) = \set{v_e \in \CS{V_D} : u \in e}$ is called the \textit{cluster} of $u$. In words, $\CS{V_D}(u)$ is the set of nodes that correspond to edges incident to $u$.
\end{definition}

Clearly, $\CS{G_D}$ is a directed acyclic graph. Furthermore,
let $v_{e_1}, \dots, v_{e_m}$ be the nodes of $\CS{G_D}$ ordered in a non-increasing order of weight. Then, $v_{e_1}, \dots, v_{e_m}$ is a topological ordering of $\CS{G_D}$. 

For a given bipartite-graph $\CS{G} = (\CS{L},\CS{R},\CS{E})$, we consider its directed line-graph $\CS{G_D} = (\CS{V_D},\CS{E_D})$. We focus our attention on the vertices of the left side $\CS{L}$. Observe that $\set{\CS{V_D}(u)}_{u \in \CS{L}}$ is a partition of $\CS{V_D}$. Thus, we can identify each node $v_{(u,r)} \in \CS{V_D}$, with the cluster $\CS{V_D}(u)$ of its left side vertex $u$. 

We assume that the sampling $\RS{L'} \subseteq \CS{L}$ is done gradually by coloring the clusters $\set{\CS{V_D}(u)}_{u \in \CS{L}}$. Each cluster is colored red independently with probability $p$, and blue otherwise. When a cluster is colored, all its nodes are colored with the same color. Then, $\RS{L'}$ is the set of vertices $u$ whose cluster $\CS{V_D}(u)$ is red. 

We define the random coloring process of the clusters and the notion of an \textit{active} node
inductively over $v_{e_1},\dots,v_{e_n}$: $v_{e_1}$ is always active, and its cluster is colored randomly (red with probability $p$ and blue otherwise).
Given the active/inactive status of the nodes in $v_{e_1},\dots,v_{e_{i-1}}$, and the colors of the active nodes among them, $v_{e_{i}}$ is active if it is uncolored, and there is no incoming arc to $v_{e_{i}}$ from an active red node.
If $v_{e_{i}}$ is active, its cluster is colored randomly. At the end of the process, each remaining uncolored cluster is also colored randomly. Note that the cluster is colored when we reach the first node which is active in the cluster. At the end we color clusters without any active node.

Observe that indeed, through this coloring process an independent random decision is made to color each cluster red with probability $p$ and blue otherwise.

\begin{lemma}\label{lem:red}
Let $e_i = (u,r)$ for $u \in \CS{L}$, then $e_i \in \Greedy({\CS{G}[\Rsample \cup \{u\}]})$ if and only if $v_{e_i}$ is active.
\end{lemma}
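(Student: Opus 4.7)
My plan is to prove the lemma by induction on the weight-rank $i$, which is also a topological ordering of $\CS{G_D}$. The base case $i=1$ is immediate: $\Greedy$ always selects the heaviest edge, and $v_{e_1}$ is active by definition.

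For the inductive step, I fix $e_i=(u,r)$ and split the in-neighbors of $v_{e_i}$ in $\CS{G_D}$ into \emph{Type~(i)} edges $(u,r')$ sharing $u$ (these populate the cluster $\CS{V_D}(u)$) and \emph{Type~(ii)} edges $(u',r)$ with $u'\neq u$. The greedy condition $e_i\in\Greedy(\CS{G}[\Rsample\cup\{u\}])$ unfolds into (A') no Type~(i) $e_j$ is in greedy, and (B') no Type~(ii) $e_j$ with $u'\in\Rsample$ is in greedy (the latter because a Type~(ii) $e_j$ lies in the subgraph iff $u'\in\Rsample$). Activeness of $v_{e_i}$ unfolds analogously into (A) no Type~(i) $v_{e_j}$ is active, and (B) no Type~(ii) $v_{e_j}$ with a red cluster is active: condition (A) is exactly ``$\CS{V_D}(u)$ is uncolored at step $i$'', while (B) is the remaining content of ``no active red in-arc to $v_{e_i}$'' once (A) has eliminated Type~(i) in-arcs.

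For Type~(i), the IH applies verbatim since these edges share $u$ and therefore refer to the same subgraph $\CS{G}[\Rsample\cup\{u\}]$; this gives (A)$\Leftrightarrow$(A'). For Type~(ii), the IH speaks about $\Greedy(\CS{G}[\Rsample\cup\{u'\}])=\Greedy(\CS{G}[\Rsample])$ rather than $\Greedy(\CS{G}[\Rsample\cup\{u\}])$, and these two runs can disagree on $e_j$ in general. The key observation is that they \emph{do} agree on every edge processed strictly before $e_i$, provided (A') holds: the only edges present in $\CS{G}[\Rsample\cup\{u\}]$ and absent in $\CS{G}[\Rsample]$ are those incident to $u$, and for $k<i$ none of them has entered the matching --- Type~(i) edges are excluded by (A'), and $e_i$ itself is yet to be processed. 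A short sub-induction on $k<i$ then shows that the partial matchings produced by the two runs after $k$ steps coincide, so greedy makes identical decisions on every Type~(ii) $e_j$; combined with the IH for $e_j$ (using $u'\in\Rsample$, so $\Rsample\cup\{u'\}=\Rsample$), this yields (B)$\Leftrightarrow$(B') under (A)/(A').

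If (A)/(A') fails, both the activeness and the greedy conditions fail and the biconditional holds trivially; otherwise (A)$\Leftrightarrow$(A') and (B)$\Leftrightarrow$(B') combine to yield the lemma. The main obstacle is the Type~(ii) synchronization argument, which crucially exploits (A') to prevent any edge of $\{u\}\times\CS{R}$ from entering the greedy matching before step $i$ --- these are precisely the edges present in $\CS{G}[\Rsample\cup\{u\}]$ but not in $\CS{G}[\Rsample]$, and without ruling them out the two greedy runs could diverge on $e_j$.
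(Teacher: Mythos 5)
Your proof is correct and follows the same induction on the weight order (equivalently, a topological order of $\CS{G_D}$) as the paper's. The one place you go beyond the paper is the Type~(ii) synchronization sub-induction showing that, under (A'), the runs $\Greedy(\CS{G}[\Rsample])$ and $\Greedy(\CS{G}[\Rsample\cup\{u\}])$ agree on every edge heavier than $e_i$; the paper applies its induction hypothesis to edges $(u',r)$ as if both statements referred to the same subgraph, silently eliding exactly this point, so your extra argument is a genuine (and needed) tightening of the same proof rather than a different route.
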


\begin{proof}
We prove this by induction on $i$. For $e_1 = (u,r)$, clearly $e_1 \in \Greedy({\CS{G}[\Rsample \cup \{u\}]})$ and $v_{e_1}$ is active by definition. Now $e_i = (u,r)$ is added by $\Greedy({\CS{G}[\Rsample \cup \{u\}]})$ if and only if no heavier edge (in $e_1,\dots,e_{i-1}$), incident to $u$ or $r$, is added by $\Greedy({\CS{G}[\Rsample \cup \{u\}]})$. By the induction hypothesis, this happens if and only if there is no incoming arc to $v_{e_i}$ from an active node in $\{\CS{V_D}(w)\}_{w \in \Rsample \cup \{u\}}$. This happens if and only if $v_{e_i}$ has no incoming arcs from active red nodes (which are the nodes in $\{\CS{V_D}(w)\}_{w \in \Rsample}$), and when the coloring process reaches $v_{e_i}$, it is uncolored (i.e., nodes in $\CS{V_D}(u)$ that precede $v_{e_i}$ are not active). That is, by definition, $v_{e_i}$ is active.
\end{proof}

Next, we bound the expected performance of the algorithm in terms of the expected performance of $\Greedy$ on the random sample $\Rsample$. We first express the expected performance of $\Greedy$ using the probabilities of the nodes $v_{e_1},\dots,v_{e_m}$ to be active.

\begin{lemma}\label{lem:greedy_v_a_prob_active}
$\E{\Greedy(\CS{G}[\RS{L}'])} =  p \sum_{i=1}^{m} w(e_i) \Pr[v_{e_i}\text{ is active}]$
\end{lemma}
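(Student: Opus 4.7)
The plan is to apply linearity of expectation to decompose the expected weight of $\Greedy(\CS{G}[\Rsample])$ into contributions of individual edges, and then invoke Lemma~\ref{lem:red} together with a simple independence argument afforded by the coloring process.

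First I would write $\E{\Greedy(\CS{G}[\Rsample])} = \sum_{i=1}^{m} w(e_i)\,\Pr[e_i \in \Greedy(\CS{G}[\Rsample])]$ and observe that for $e_i = (u,r)$ with $u \in \CS{L}$, the edge $e_i$ belongs to $\Greedy(\CS{G}[\Rsample])$ precisely when (a) $u \in \Rsample$, i.e.\ the cluster $\CS{V_D}(u)$ is colored red, and (b) $e_i \in \Greedy(\CS{G}[\Rsample \cup \{u\}])$. Indeed, if $u \notin \Rsample$ then $e_i$ is not even present in $\CS{G}[\Rsample]$, while if $u \in \Rsample$ then $\Rsample \cup \{u\} = \Rsample$ and the two greedy matchings coincide. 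Combined with Lemma~\ref{lem:red}, condition (b) is equivalent to $v_{e_i}$ being active. Hence $\Pr[e_i \in \Greedy(\CS{G}[\Rsample])] = \Pr[v_{e_i}\text{ active and }\CS{V_D}(u)\text{ red}]$.

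The heart of the proof — and where I expect the main subtlety to lie — is to show that this joint probability factors as $p\cdot\Pr[v_{e_i}\text{ is active}]$. The key observation is that whenever $v_{e_i}$ is active it must be the \emph{first} active node encountered in the cluster $\CS{V_D}(u)$: otherwise an earlier active node of $\CS{V_D}(u)$ would have triggered a coloring of the cluster, contradicting the requirement in the definition of ``active'' that $v_{e_i}$ be uncolored at step $i$. Consequently, the random color of $\CS{V_D}(u)$ has not yet been sampled by the time the process reaches $v_{e_i}$, so it is drawn afresh as red with probability $p$, independently of the history (active/inactive statuses and cluster colors of $v_{e_1},\dots,v_{e_{i-1}}$) that determines whether $v_{e_i}$ is active.

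Putting these pieces together yields $\Pr[e_i \in \Greedy(\CS{G}[\Rsample])] = p\cdot\Pr[v_{e_i}\text{ is active}]$, and summing over $i$ by linearity of expectation gives the claimed identity. The only non-routine step is formalizing the independence between the coloring of $\CS{V_D}(u)$ and the activeness of $v_{e_i}$; once the ``first active node in its cluster'' observation is in hand, this follows directly from the inductive definition of the coloring process.
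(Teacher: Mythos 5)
Your proposal is correct and follows essentially the same route as the paper: linearity of expectation, Lemma~\ref{lem:red} to identify $e_i \in \Greedy(\CS{G}[\Rsample \cup \{u\}])$ with $v_{e_i}$ being active, $u\in\Rsample$ with the cluster being red, and the observation that the cluster's color is a fresh coin flip (red with probability $p$) conditionally on $v_{e_i}$ being active. The only difference is cosmetic: you spell out explicitly \emph{why} the coin is fresh — that an active $v_{e_i}$ must be the first active node in its cluster — whereas the paper leaves this as an immediate consequence of the definition of the coloring process.
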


\begin{proof}
For $e_i = (u,r)$, we have $u \in \Rsample$ if and only if $v_{e_i}$ is red. Together with Lemma~\ref{lem:red}, we have $\Pr[e_i \in \Greedy({\CS{G}[\Rsample]})] = \Pr[v_{e_i}\text{ is active and red}]$. By the definition of the coloring process, when $v_{e_i}$ is active, it is colored red independently with probability $p$. Hence, we have
\begin{align*}
    \Pr[e_i \in \Greedy({\CS{G}[\Rsample]})] =  \Pr\left[\given{v_{e_i} \text{ is red}}{  v_{e_i}\text{ is active}}\right] \Pr[v_{e_i}\text{ is active}] = p \cdot \Pr[v_{e_i}\text{ is active}], 
\end{align*}
and so
\begin{align}
    \E{\Greedy(\CS{G}[\RS{L}'])} &= \sum_{i=1}^{m} w(e_i) \Pr[e_i \in \Greedy({\CS{G}[\Rsample]})] =p \sum_{i=1}^{m} w(e_i) \Pr[v_{e_i}\text{ is active}].\qedhere
\end{align}
\end{proof}

We now proceed to analyze the performance of the algorithm. We first point out a useful observation about the adversary. The sample $\Rsample$ determines a candidate edge (see line~\ref{line:candidate_edge} of Algorithm~\ref{alg:greedy_v_a}) for each $u \in \CS{L} \setminus \Rsample$ (possibly none). For each $r \in \CS{R}$ the algorithm matches $r$ to the first vertex $u \in \CS{L} \setminus \RS{L'}$ with a candidate edge $(u,r)$ that arrives (after the sampling phase). Therefore, in the worst-case arrival order of the vertices in $\CS{L} \setminus \RS{L}'$, among all vertices with a candidate edge incident to $r$ in $\CS{L} \setminus \RS{L'}$, the vertex with the lightest edge to $r$ arrives first. We, therefore, assume without loss of generality that the adversary reveals the vertices in this order. In other words, for $r \in \CS{R}$, a candidate edge $(u,r)$ is added by the algorithm if and only if $(u,r)$ is the lightest candidate edge incident to $r$.

Next, we lower bound the probability that $e_i = (u,r)$ is added by the algorithm. We begin by observing that $v_{e_i}$ is active and blue if and only if $e_i$ is a candidate edge: For $e_i$ to be a candidate edge we need $u \in \CS{L} \setminus \Rsample$ and $e_i \in \Greedy({\CS{G}}[\Rsample \cup \set{u}] )$. First we have $u \in \CS{L} \setminus \Rsample$ if and only if $v_{e_i}$ is blue, and by Lemma~\ref{lem:red}, $e_i \in \Greedy({\CS{G}}[\Rsample \cup \set{u}] )$ if and only if $v_{e_i}$ is active.

To account for the contribution of $e_i$ to $\MOnline$ (the output matching of the algorithm) we define the notion of a qualifying edge. We say that $e_i$ \textit{qualifies} if $v_{e_i}$ is active, blue, and it has no outgoing arcs to other active blue nodes.
By our observations above, when $e_i$ qualifies, it is a candidate edge and there are no candidate edges of smaller weight that intersect $e_i$ (a candidate edge $e_j$ of smaller weight that intersect $e_i$ corresponds to an active and blue node $v_{e_j}$ with an arc $v_{e_i} \rightarrow v_{e_j}$). Hence by our assumption on the adversary, when $e_i$ qualifies, it is added by the algorithm to $\MOnline$. We have
\begin{align}\label{eq:alg_actives}
    \E{\ALG} &\geq \sum_{i=1}^{m} w(e_i) \Pr[e_i\text{ qualifies}].
\end{align}

We are now ready to relate the expected performance of the algorithm to the expected performance of $\Greedy$ on the sample.
\begin{lemma}\label{lem:alg_vs_greedy_v_a}
For $p \in [0,1]$, $\E{\ALG} \geq  (1-p)\E{\Greedy(\CS{G}[\RS{L}'])} $ for any arrival order of the vertices in $\CS{L} \setminus \Rsample$.
\end{lemma}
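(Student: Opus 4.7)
The plan is to establish the pointwise bound $\Pr[e_i\text{ qualifies}] \geq p(1-p)\Pr[v_{e_i}\text{ is active}]$ for each $i$. Combined with~\eqref{eq:alg_actives} and Lemma~\ref{lem:greedy_v_a_prob_active} this yields
\[
\E{\ALG}\geq \sum_i w(e_i)\Pr[e_i\text{ qualifies}]\geq p(1-p)\sum_i w(e_i)\Pr[v_{e_i}\text{ is active}] = (1-p)\,\E{\Greedy(\CS{G}[\Rsample])},
\]
which is the lemma. I split the pointwise inequality into two factors, $\Pr[v_{e_i}\text{ is blue}\mid v_{e_i}\text{ is active}]$ and $\Pr[\text{no outgoing arc of }v_{e_i}\text{ to an active blue node}\mid v_{e_i}\text{ is active and blue}]$, aiming to bound the first by $1-p$ and the second by $p$.

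Fix $e_i=(u,r)$. Under the adaptive coloring process, on $\{v_{e_i}\text{ is active}\}$ the cluster $\CS{V_D}(u)$ is uncolored throughout steps $1,\dots,i-1$ and is colored at step $i$ by a fresh $\mathrm{Bernoulli}(p)$ coin independent of the history; hence $\Pr[v_{e_i}\text{ is blue}\mid v_{e_i}\text{ is active}]=1-p$. Outgoing arcs of $v_{e_i}$ inside $\CS{V_D}(u)$ necessarily point to inactive nodes once $v_{e_i}$ is active (their cluster is colored at step $i$), so only the set $S_i$ of nodes of $\CS{V_D}(r)\setminus\CS{V_D}(u)$ lighter than $v_{e_i}$ is relevant. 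List $S_i$ in topological order as $v_{e_{j_1}},v_{e_{j_2}},\dots$, write $e_{j_k}=(u_{j_k},r)$, and let $J_k$ denote the event that $v_{e_{j_k}}$ is the first active node of $S_i$. Since $\CS{V_D}(r)$ is a transitive tournament in the DAG, any active red node of $\CS{V_D}(r)$ blocks every lighter node of $\CS{V_D}(r)$ from being active; so on $J_k\cap\{v_{e_{j_k}}\text{ is red}\}$ there is no active blue node in $S_i$. The events $\{J_k\}_k$ together with ``$S_i$ has no active node'' partition the sample space, so, conditional on $v_{e_i}$ being active and blue,
\[
\Pr[\text{no active blue in }S_i]\geq \Pr[\text{no active in }S_i]+\sum_k p\,\Pr[J_k]\geq p,
\]
provided we can show $\Pr[v_{e_{j_k}}\text{ is red}\mid J_k,\,v_{e_i}\text{ active and blue}]=p$ for every $k$.

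To prove this conditional-independence claim I would use the adaptive nature of the coloring. On the conditioning event, no heavier node of $\CS{V_D}(r)$ is active and red: heavier-than-$v_{e_i}$ candidates are excluded by $v_{e_i}$ being active, $v_{e_i}$ itself is blue, and the $v_{e_{j_\ell}}$ with $\ell<k$ are inactive. Hence $v_{e_{j_k}}$ being active is equivalent to ``$\CS{V_D}(u_{j_k})$ is uncolored at step $j_k$'', an event determined by the process history without involving the color of $\CS{V_D}(u_{j_k})$. Thus on $J_k$ the cluster $\CS{V_D}(u_{j_k})$ is colored for the first time precisely at step $j_k$, by a fresh $\mathrm{Bernoulli}(p)$ trial independent of everything determining $J_k$, the activity of $v_{e_i}$, and the color of $\CS{V_D}(u)$. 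The main obstacle is exactly this conditional-independence step: a priori the color of $\CS{V_D}(u_{j_k})$ could affect earlier activities through chains in the DAG, but the adaptive coloring neutralizes this because on $J_k$ the coin for $\CS{V_D}(u_{j_k})$ has not been tossed before step $j_k$.
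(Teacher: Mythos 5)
Your proposal is correct and follows essentially the same route as the paper: it reduces the lemma to the pointwise bound $\Pr[e_i\text{ qualifies}]\geq p(1-p)\Pr[v_{e_i}\text{ is active}]$, handles $\CS{V_D}(u)$ by noting the whole cluster is blue, and handles $\CS{V_D}(r)$ by arguing that the first active node among the lighter nodes of $\CS{V_D}(r)$ is red with a fresh probability $p$, blocking all later ones. Your partition over $J_k$ is just a more formal rendering of the paper's ``continue the coloring process until reaching an active node in $\CS{V_D}(r)$'' stopping-time argument.
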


\begin{proof}
We lower bound the probability that an edge $e_i = (u,r)$ qualifies. Consider the coloring process until reaching $v_{e_i}$ such that $v_{e_i}$ is active and blue. For $e_i$ to qualify, $v_{e_i}$ must not have outgoing arcs to active blue nodes. The outgoing arcs of $v_{e_i}$ are only to later nodes in the process $v_{e_{i+1}},\dots,v_{e_m}$. Moreover, all the outgoing arcs of $v_{e_i}$ are to nodes in $\CS{V_D}(u) \cup \CS{V_D}(r)$.
Since $v_{e_i}$ is colored blue, all the nodes in $\CS{V_D}(u)$ are colored blue with it. So, nodes from $\CS{V_D}(u)$ later in the process cannot be active (as a node must be uncolored to be active).

We continue the coloring process until reaching an active node $v_{e_j}$ in $\CS{V_D}(r)$. If there is no such node, $v_{e_i}$ qualifies (with probability $1$). Otherwise, if $v_{e_j}$ is colored red, which happens independently with probability $p$, all future nodes in $\CS{V_D}(r)$ will be inactive (as $v_{e_j}$ has an outgoing arc to each one of them), and there will be no outgoing arcs from $v_{e_i}$ to active blue nodes in $\CS{V_D}(r)$.
We get that conditioned on the event that $v_{e_i}$ is active and blue, $e_i$ qualifies with probability at least $p$. To conclude,
\begin{align}
\begin{split}
    \Pr[e_i\text{ qualifies}] &=  \Pr[v_{e_i}\text{ is active and blue}] \cdot \Pr[{e_i} \text{ qualifies}\mid v_{e_i} \text{ is active and blue}] \\
    &\geq \Pr[v_{e_i}\text{ is active}] \cdot (1-p) p.
\end{split}
\label{eq:qualifies_prob_v_a}
\end{align}
Finally, by Lemma~\eqref{lem:greedy_v_a_prob_active} and Inequalities~\eqref{eq:alg_actives} and~\eqref{eq:qualifies_prob_v_a}, we obtain
\begin{align*}
    \frac{\E{\ALG}}{\E{\Greedy(\CS{G}[\Rsample])}} &\geq
    \frac{\sum_{e \in \CS{E}} w(e)\Pr[v_{e}\text{ is active}] \cdot (1-p) p}{\sum_{e \in \CS{E}} w(e)  \Pr[v_{e}\text{ is active}] \cdot p} = 1-p.\qedhere
\end{align*}
\end{proof}

Next, we establish a tight bound on the expected approximation-ratio of $\Greedy$ on a random sample of vertices.
\begin{lemma}\label{lem:p/(1+p)}
Let $\CS{G} = (\CS{L},\CS{R},\CS{E})$ be a bipartite graph and let $p \in [0,1]$. Let $\RS{L'} \subseteq \CS{L}$ such that each $u \in \CS{L}$ is drawn to $\RS{L'}$ independently with probability $p$. Then $\E{\Greedy(\CS{G}[\RS{L}'])} \geq \OPT \cdot  p / (1+p) $. Moreover, there is a sequence of bipartite graphs $\CS{G}_1,\CS{G}_2, \dots$ on which $\E{\Greedy(\CS{G}_k[\RS{L}'])}/ \OPT$ approaches $p/(1+p)$ as $k \rightarrow{\infty}$.
\end{lemma}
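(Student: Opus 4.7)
The plan is to prove the lower bound by induction on $|\CS{E}|$, pivoting on the heaviest edge of $\CS{G}$, and to obtain tightness by constructing a self-similar family. Let $e_1=(u_1,r_1)$ be the heaviest edge of $\CS{G}$ and set $\CS{G}_1:=\CS{G}\setminus\{u_1,r_1\}$ and $\CS{G}_2:=\CS{G}\setminus\{u_1\}$. Since $u_1\in\RS{L}'$ with probability $p$ independently of the sampling on $\CS{L}\setminus\{u_1\}$, and since $\Greedy$ always processes $e_1$ first whenever $u_1\in\RS{L}'$, I obtain the exact recursion
\[
\E{\Greedy(\CS{G}[\RS{L}'])} \;=\; p\bigl(w(e_1)+\E{\Greedy(\CS{G}_1[\RS{L}'])}\bigr)+(1-p)\,\E{\Greedy(\CS{G}_2[\RS{L}'])}.
\]

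Two matching-theoretic inequalities then close the inductive step. Because $e_1$ is heaviest, the edge of an optimal matching of $\CS{G}$ incident to $u_1$ has weight at most $w(e_1)$, so removing it yields $\OPT(\CS{G})\le w(e_1)+\OPT(\CS{G}_2)$; the same argument applied to the edge of an optimal matching of $\CS{G}_2$ incident to $r_1$ gives $\OPT(\CS{G}_2)\le w(e_1)+\OPT(\CS{G}_1)$. Plugging the inductive hypothesis $\E{\Greedy(\CS{G}_i[\RS{L}'])}\ge \frac{p}{1+p}\OPT(\CS{G}_i)$ into the recursion and substituting $\OPT(\CS{G}_1)\ge\OPT(\CS{G}_2)-w(e_1)$, a short algebraic simplification collapses the right-hand side to $\frac{p}{1+p}(w(e_1)+\OPT(\CS{G}_2))\ge \frac{p}{1+p}\OPT(\CS{G})$, closing the induction. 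For the tight family, the recursion together with the self-similarity ansatz $\E{\Greedy(\CS{G})}=r\,\OPT(\CS{G})$ (and the same relation for both sub-instances) and the OPT identity $\OPT(\CS{G}_2)=w(e_1)+\OPT(\CS{G}_1)$ forces the fixed-point equation $r(1+p)=p$, i.e.\ $r=p/(1+p)$. I accordingly build $\CS{G}_k$ iteratively from $\CS{G}_{k-1}$ by attaching a new heaviest edge $(u^\star,r^\star)$ together with a ``spare'' edge $(u',r^\star)$ of the same weight, where $u'\in\CS{L}(\CS{G}_{k-1})$ is a left vertex not saturated by an optimal matching of $\CS{G}_{k-1}$, so that deleting $u^\star$ adds exactly the new heaviest weight to $\OPT$.

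The main obstacle will be the tight construction: naive candidates (disjoint star gadgets, caterpillar-like chains, or triangular bipartite graphs) all yield ratio $p$, strictly above $p/(1+p)$, because their recursive sub-instances fail the OPT identity $\OPT(\CS{G}_2)=w(e_1)+\OPT(\CS{G}_1)$. The self-similarity must be engineered so that the spare-edge structure propagates at every level of the recursion without disturbing this identity, so that all inductive inequalities become asymptotically tight and the ratio approaches $p/(1+p)$ as $k\to\infty$.
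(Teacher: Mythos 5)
Your lower-bound argument is correct and takes a genuinely different route from the paper. The recursion
$\E{\Greedy(\CS{G}[\RS{L}'])}=p\bigl(w(e_1)+\E{\Greedy(\CS{G}_1[\RS{L}'])}\bigr)+(1-p)\E{\Greedy(\CS{G}_2[\RS{L}'])}$
is valid: conditioned on $u_1\in\RS{L}'$, greedy accepts the globally heaviest edge $e_1$ first and then runs exactly on $\CS{G}\setminus\{u_1,r_1\}$ with a fresh independent $p$-sample, while conditioned on $u_1\notin\RS{L}'$ the instance is literally $\CS{G}\setminus\{u_1\}$. The two deletion inequalities $\OPT(\CS{G})\le w(e_1)+\OPT(\CS{G}_2)$ and $\OPT(\CS{G}_2)\le w(e_1)+\OPT(\CS{G}_1)$ both hold because $e_1$ is heaviest, and the algebra closes since $p\bigl(1-\tfrac{p}{1+p}\bigr)=\tfrac{p}{1+p}$. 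The paper instead uses a local charging argument: each optimal edge $(u,r)$ splits its weight in ratio $\alpha:(1-\alpha)$ between its endpoints, a case analysis over which endpoint greedy covers yields $\min\{p,p\alpha,1-\alpha\}$ per edge (using a coupling to show $\Pr[A_{e,r}\mid u\notin\RS{L}']\ge\Pr[A_{e,r}\mid u\in\RS{L}']$), and $\alpha=1/(1+p)$ is optimized. Your induction is arguably more elementary and pins down exactly where slack can occur; the paper's charging argument is local and is the template it reuses for edge arrivals (Lemma~\ref{lem:min_p_1/2}) and for the primal-dual remark.

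The tightness half has a genuine gap. For the ratio to approach $p/(1+p)$, \emph{both} deletion inequalities in your chain must be asymptotically tight, in particular $\OPT(\CS{G})=w(e_1)+\OPT(\CS{G}\setminus\{u_1\})$, i.e.\ removing $u^\star$ must \emph{decrease} $\OPT$ by $w(e_1)$. In your construction $u^\star$ has degree one and its unique neighbour $r^\star$ is rescued by the spare edge $(u',r^\star)$ to an otherwise unsaturated $u'$, so $\OPT(\CS{G}_k\setminus\{u^\star\})=\OPT(\CS{G}_k)$ and the inequality is slack by a full $w(e_1)$; your recursion then only certifies a ratio strictly above $p/(1+p)$. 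Separately, the ``self-similarity ansatz'' requires both sub-instances $\CS{G}_1$ and $\CS{G}_2$ (which are different graphs) to themselves be asymptotically tight, and a single chain $\CS{G}_{k-1}\mapsto\CS{G}_k$ does not supply this: you would need a family closed, up to controlled error, under both deletions. The paper sidesteps all of this with an explicit family --- the weighted Karp--Vazirani--Vazirani triangular graph ($u_i$ adjacent to $r_1,\dots,r_{k-i+1}$, weights near $1$ ordered so greedy packs the sampled vertices into the top of $\CS{R}$) --- and a direct Chernoff-bound count showing greedy's matching has size $(1+o(1))kp/(1+p)$ against $\OPT\approx k$. You should either adopt such an explicit family or substantially rework the construction so that the left endpoint of each level's heaviest edge is genuinely needed by the optimal matching.
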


\begin{proof}
For a matching $M$ we split the weight of each edge $e = (u,r) \in M$ between its two endpoints: Let $c(M,u) = \alpha w(e)$ and $c(M,r) = (1-\alpha) w(e)$ for $\alpha \in [0,1]$. For an unmatched vertex $v$, we define $c(M,v) = 0$. We have $w(M) = \sum_{u \in \CS{V}} c(M,u)$. For convenience of notation, let $M_{L'} = \Greedy(\CS{G}[L'])$.

Let $M^*$ be a maximum matching in $\CS{G}$. Fix $e =(u,r) \in M^*$. We consider three possible events. 
\begin{enumerate*}[label=(\Roman*)]
    \item $e$ is in the greedy matching $M_{L'}$\label{event:a_1}
    \item There is an edge $e_u$ incident to $u$ and heavier than $e$ (i.e., $w(e_u) > w(e)$) in $M_{L'}$. And\label{event:a_u} 
    \item There is an edge $e_r$ incident to $r$ and heavier than $e$ in $M_{L'}$.\label{event:a_r}
\end{enumerate*} 
Let $A_{e}$ denote Event~\ref{event:a_1}. We consider three combinations of events~\ref{event:a_u} and~\ref{event:a_r}: Let $A_{e,u,r} =  \text{\ref{event:a_u}} \land \text{\ref{event:a_r}}$, i.e., $A_{e,u,r}$ is the event that both~\ref{event:a_u} and~\ref{event:a_r} occur. Let $A_{e,u} = \text{\ref{event:a_u}} \land \neg \text{\ref{event:a_r}}$, i.e., $A_{e,u}$ is the event that~\ref{event:a_u} occurs, and~\ref{event:a_r} does not, and finally, let $A_{e,r} =  \neg \text{\ref{event:a_u}} \land \text{\ref{event:a_r}}$. Observe that $A_{e}$, $A_{e,u,r}$, $A_{e,u}$ and $A_{e,r}$ are disjoint.

If $A_{e}$ occurs, $c(M_{L'},u) + c(M_{L'},r) = w(e)$, if $A_{e,u}$ occurs, $c(M_{L'},u) \geq \alpha w(e)$, if $A_{e,r}$ occurs, $c(M_{L'},r) \geq (1-\alpha) w(e)$, and if $A_{e,u,r}$ occurs, $c(M_{L'},u) + c(M_{L'},r) \geq w(e)$. We have 
\begin{align}
\begin{split}
    \E{\given{c(M_{L'},u) + c(M_{L'},r)}{ u \in L'}} &\geq 
     w(e) \Pr\left[\given{A_{e}}{ u \in \RS{L}'}\right] \\
     & \quad + w(e) \Pr\left[\given{A_{e,u,r}}{ u \in \RS{L}'}\right] \\
     & \quad + \alpha w(e) \Pr\left[\given{A_{e,u} }{ u \in \RS{L}'}\right] \\ & \quad + (1-\alpha) w(e) \Pr\left[\given{A_{e,r}}{ u \in \RS{L}'}\right].\label{eq:exp_u_in}
\end{split}
\end{align}
Observe that by the definition of greedy, given a sample $\RS{L}' = \CS{U}$ with $u\in \CS{U}$ such that $A_{e,r}$ occurs, $A_{e,r}$ still occurs when removing $u$ from $\CS{U}$.
So, the mapping $f(\CS{U})=\CS{U}\setminus \{u\}$ from the subspace $X=\{L'=\CS{U}\mid u\in \CS{U}\}$ to the subspace $Y=\{L'=\CS{U}\mid u\not\in \CS{U}\}$ is such that 1) if $A_{e,r}$ holds for $\CS{U}$ then it holds for $f(\CS{U})$ and 2) the probability of $\CS{U}$ in $X$ is the same as the probability of $f(\CS{U})$ in $Y$. Therefore, $\Pr\left[ \given{ A_{e,r} }{ u \notin \RS{L}' }\right] \geq \Pr\left[\given{A_{e,r}}{ u \in \RS{L}'}\right]$. We have
\begin{align}
\begin{split}
    \E{\given{c(M_{L'},u) + c(M_{L'},r)}{u \notin L'} } &\geq (1-\alpha) w(e) \Pr\left[ \given{A_{e,r}}{ u \notin \RS{L}'} \right] \\
    &\geq (1-\alpha) w(e) \Pr \left[ \given{A_{e,r}}{ u \in \RS{L}'} \right].\label{eq:exp_u_out}
\end{split}
\end{align}
By law of total expectation,
\begin{align}
\begin{split}
    \E{c(M_{L'},u) + c(M_{L'},r)} &=  p \E{c(M_{L'},u) + c(M_{L'},r) \mid u \in L' } \\
    &\quad +  (1-p)\E{c(M_{L'},u) + c(M_{L'},r) \mid u \notin L' },
\end{split}\label{eq:exp_total}
\end{align}
and by replacing Inequalities~\eqref{eq:exp_u_in} and~\eqref{eq:exp_u_out} in~\eqref{eq:exp_total} we obtain 
\begin{align}
\begin{split}
    \E{c(M_{L'},u) + c(M_{L'},r)} &\geq  p w(e) \left(\Pr\left[\given{A_{e}}{ u \in \RS{L}'}\right] + \Pr\left[\given{A_{e,u,r}}{ u \in \RS{L}'}\right] \right) \\
    &\quad + p \alpha w(e) \Pr\left[\given{A_{e,u}}{ u \in \RS{L}'}\right] + (1-\alpha) w(e) \Pr\left[\given{A_{e,r}}{ u \in \RS{L}'}\right] \\
    &\geq w(e)\min\set{p,p\alpha,(1-\alpha)} \cdot \left(     \Pr\left[\given{A_{e}}{ u \in \RS{L}'}\right]  + \Pr\left[\given{A_{e,u,r}}{ u \in \RS{L}'}\right] \right. \\
    & \quad \left. + \Pr\left[\given{A_{e,u}}{ u \in \RS{L}'}\right] + \Pr\left[\given{A_{e,r}}{ u \in \RS{L}'}\right] \right)\label{eq:exp_before_1}
\end{split}
\end{align}
Now observe that conditioned on $u \in L'$, exactly one of the events $A_e$, $A_{e,u,r}$, $A_{e,u}$ or $A_{e,r}$ must occur. Hence
\begin{align}
    \Pr\left[\given{A_{e}}{ u \in \RS{L}'}\right] + \Pr\left[\given{A_{e,u,r}}{ u \in \RS{L}'}\right] +  \Pr\left[\given{A_{e,u}}{ u \in \RS{L}'}\right] + \Pr\left[\given{A_{e,r}}{ u \in \RS{L}'}\right]  = 1.\label{eq:sum_to_one}
\end{align}
Using Inequalities~\eqref{eq:exp_before_1} and~\eqref{eq:sum_to_one}, we get that 
\begin{align*}
\begin{split}
    \E{c(M_{L'},u) + c(M_{L'},r)} &\geq \min\set{p, p\alpha , (1-\alpha)} w(e) \geq \frac{p}{1+p} w(e),
\end{split}
\end{align*}
where the last inequality is obtained for $\alpha = 1/ (1+p)$.
To sum up,
\begin{align*}
    \E{w(M_{L'})} &\geq \sum_{(u,r) \in M^*} \E{c(M_{L'}, u) + c(M_{L'}, r)}
    \geq \sum_{(u,r) \in M^*}  \frac{p}{1+p} w(u,r) 
    \geq \frac{p}{1+p}  w(M^*).
\end{align*}

We now proceed to show that the bound $p/(1+p)$ is tight. To this end, we construct a weighted version of the worst-case example for the classical RANKING algorithm for online unweighted bipartite matching by Karp et al.~\cite{karp1990optimal}. The weights are used only for tie-breaking against greedy. $\CS{G}_k = (\CS{L}, \CS{R}, \CS{E})$, $\CS{L} = \set{u_1,\dots,u_k}$ and $\CS{R} = \set{r_1,\dots, r_k}$. Each $u_i \in \CS{U}$ is connected only to $\set{r_1,\dots,r_{k-i+1}}$. All the edges incident to $u_i$ are heavier than the edges incident to $u_{i+1}$. Also, the edges of $u_i$ are decreasing in weight $w(u_i,r_1) > \dots > w(u_i,r_{k-i+1})$. Nevertheless, the weights of all edges are arbitrarily close to $1$. 

Clearly, there is a perfect matching in $\CS{G}_k$, and $\OPT$ is arbitrarily close to $k$. Now, consider the vertices $\CS{L}_t = \set{u_1, \dots ,u_{k - t}}$ and $\CS{R}_t = \set{r_1,\dots, r_t}$ for $t = \lceil k p/(1+p)  \rceil  $.
Greedy processes the edges incident to the vertices in $\RS{L}' \cap L_t$ first, and matches them to the first $|\RS{L}' \cap \CS{L}_t|$ vertices in $\CS{R}$. The vertices in $\set{u_{k-(t+1)},\dots,u_k}$ can only be matched to the remaining, unmatched vertices in $\CS{R}_t$. So greedy gets at most $\max\set{t, |L'\cap L_t|}$ edges in its matching. Thus, the ratio between $\E{\Greedy(\CS{G}_k[\RS{L}'])}$ and $\OPT$ is upper bounded by $\E{\max\set{t, |L'\cap L_t|}}/ k$. 

The expected number of vertices in $\RS{L}' \cap L_t$ is $\mu = (k - t)p \leq kp(1 - p/(1+p)) = kp/(1+p)$. By applying a Chernoff bound we get that $\Pr[|L'\cap L_t| > (1+\delta) kp/(1-p)] \leq e^{-\delta^2 \mu / 3} $. For $\delta = \sqrt{\log{k}/k}$ we obtain that for $k \rightarrow \infty$, with probability $1-o(1)$, $|L'\cap L_t| = (1+o(1))kp/(1+p)$. Thus, $\E{\max\set{t, |L'\cap L_t|}}/k\leq p/(1+p) + o(1)$.
\end{proof}
We note that the lower-bound in Lemma~\ref{lem:p/(1+p)} can also be proved using the technique of randomized primal-dual analysis by Devanur et al.~\cite{devanur2013randomized}.

We now derive the competitive-ratio of the algorithm.
\begin{theorem}\label{thm:greedy_v_a_c_r}
For $p \in [0,1]$, Algorithm~\ref{alg:greedy_v_a} is order-oblivious $p(1-p)/(1+p)$-competitive.
\end{theorem}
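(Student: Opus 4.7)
The theorem follows by chaining the two main lemmas already established in the section. The plan is simply to combine Lemma~\ref{lem:alg_vs_greedy_v_a} and Lemma~\ref{lem:p/(1+p)}: the former lower-bounds the expected weight of the algorithm's output by $(1-p)$ times the expected weight of the offline greedy matching on the random sample, \emph{for any arrival order of the non-sampled vertices}, while the latter lower-bounds the expected weight of offline greedy on the random sample by $p/(1+p)$ times $\OPT$.

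First I would recall that in Algorithm~\ref{alg:greedy_v_a} the sample $\Rsample$ is obtained by including each vertex of $\CS{L}$ independently with probability $p$ (since the sample size is $\mathrm{Binom}(|\CS{L}|,p)$ and the arrival order during the sampling phase is arbitrary, this is exactly the sampling distribution used in Lemma~\ref{lem:p/(1+p)}). Hence Lemma~\ref{lem:p/(1+p)} applies and gives
\begin{equation*}
    \E{\Greedy(\CS{G}[\Rsample])} \;\geq\; \frac{p}{1+p}\,\OPT.
\end{equation*}
Next, Lemma~\ref{lem:alg_vs_greedy_v_a} (whose proof explicitly notes that the bound holds for any arrival order of the vertices in $\CS{L}\setminus \Rsample$) yields
\begin{equation*}
    \E{\ALG} \;\geq\; (1-p)\,\E{\Greedy(\CS{G}[\Rsample])}.
\end{equation*}
Combining these two inequalities gives $\E{\ALG} \geq \frac{p(1-p)}{1+p}\,\OPT$ for any adversarial arrival order after the sampling phase, which is the definition of order-oblivious $\frac{p(1-p)}{1+p}$-competitiveness.

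There is no real obstacle here; the work has already been done in the two preceding lemmas, and the only thing to check is that the ``any arrival order'' qualifier from Lemma~\ref{lem:alg_vs_greedy_v_a} matches the order-oblivious requirement, and that the independent-$p$ sampling assumed in Lemma~\ref{lem:p/(1+p)} matches the sampling rule of Algorithm~\ref{alg:greedy_v_a}. Both are immediate, so the proof reduces to a one-line chain of inequalities.
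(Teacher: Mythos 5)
Your proof is correct and is exactly the paper's argument: the paper's proof of Theorem~\ref{thm:greedy_v_a_c_r} is precisely the one-line chain combining Lemma~\ref{lem:alg_vs_greedy_v_a} with Lemma~\ref{lem:p/(1+p)}. Your additional checks (that the $\mathrm{Binom}(|\CS{L}|,p)$ sampling matches the independent-$p$ sampling of Lemma~\ref{lem:p/(1+p)}, and that the ``any arrival order'' qualifier gives order-obliviousness) are sound and simply make explicit what the paper leaves implicit.
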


\begin{proof}
Let $p \in [0,1]$. By combining Lemma~\ref{lem:alg_vs_greedy_v_a} with Lemma~\ref{lem:p/(1+p)} we obtain that for any arrival order of the vertices in $\CS{L} \setminus \Rsample$,
$\E{\ALG} \geq (1-p)\E{\Greedy(G[L'])} \geq \OPT \cdot p(1-p)/(1+p)$. \end{proof}

\begin{corollary}\label{cor:greedy_v_a_c_r}
Algorithm~\ref{alg:greedy_v_a} with $p = \sqrt{2} - 1 \approx 0.41$ is order-oblivious $\left(3-2\sqrt{2}\right)\approx \frac{1}{5.83}$-competitive.
\end{corollary}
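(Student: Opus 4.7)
The plan is to derive this as a direct consequence of Theorem~\ref{thm:greedy_v_a_c_r}, which establishes that the algorithm is order-oblivious $f(p)$-competitive where $f(p) = p(1-p)/(1+p)$. Since this guarantee holds for every $p \in [0,1]$, the algorithm designer is free to choose the $p$ that maximizes $f$, so all that remains is a one-variable optimization problem.

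First I would differentiate $f(p) = (p-p^2)/(1+p)$. The quotient rule gives
\[
f'(p) = \frac{(1-2p)(1+p) - (p-p^2)}{(1+p)^2} = \frac{1 - 2p - p^2}{(1+p)^2}.
\]
Setting the numerator to zero yields the quadratic $p^2 + 2p - 1 = 0$, whose positive root is $p^* = \sqrt{2}-1$. Since $f(0) = f(1) = 0$ and $f$ is positive in the interior of $[0,1]$, this critical point is the global maximizer on $[0,1]$.

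Next I would substitute $p^* = \sqrt{2}-1$ into $f$. Using $1 - p^* = 2 - \sqrt{2} = \sqrt{2}(\sqrt{2}-1)$ and $1 + p^* = \sqrt{2}$, we get
\[
f(p^*) \;=\; \frac{(\sqrt{2}-1)\cdot \sqrt{2}(\sqrt{2}-1)}{\sqrt{2}} \;=\; (\sqrt{2}-1)^2 \;=\; 3 - 2\sqrt{2},
\]
which is approximately $1/5.83$. Combining this with Theorem~\ref{thm:greedy_v_a_c_r} for the chosen value $p = \sqrt{2}-1$ yields the claimed bound. There is no real obstacle here: the whole work lies in Theorem~\ref{thm:greedy_v_a_c_r} (and in particular in Lemmas~\ref{lem:alg_vs_greedy_v_a} and~\ref{lem:p/(1+p)}); this corollary is simply a calculus exercise of tuning the free parameter $p$.
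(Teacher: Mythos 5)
Your proposal is correct and matches the paper's (implicit) argument: the corollary is just Theorem~\ref{thm:greedy_v_a_c_r} instantiated at the maximizer of $p(1-p)/(1+p)$, and your calculus and the evaluation $f(\sqrt{2}-1)=(\sqrt{2}-1)^2=3-2\sqrt{2}$ are both right. Strictly speaking the corollary only needs the substitution $p=\sqrt{2}-1$, but the optimization correctly explains why that value is chosen.
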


Finally, we construct a tight example to show that our analysis of Algorithm~\ref{alg:greedy_v_a} is tight for all $p \in [0,1]$. This construction will also be useful to prove that our analysis is tight in the $\AOS p$ model, for all values of $p$.

\begin{theorem}\label{thm:alg_upper_bound}
For $p \in [0,1]$, there is an infinite sequence of bipartite graphs $\CS{G}_1, \CS{G}_2,\dots$ with increasing number of vertices, where the competitive-ratio of Algorithm~\ref{alg:greedy_v_a} on $\CS{G}_{k}$ approaches $p(1-p)/(1+p)$ as $k \rightarrow{\infty}$. Furthermore, this is true for any arrival order of the vertices in $\CS{L} \setminus \RS{L'}$.
\end{theorem}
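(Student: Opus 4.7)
The plan is to reuse the graph family $\CS{G}_k$ constructed in the proof of Lemma~\ref{lem:p/(1+p)}: $\CS{L}=\{u_1,\dots,u_k\}$, $\CS{R}=\{r_1,\dots,r_k\}$, $u_i$ is adjacent to $\{r_1,\dots,r_{k-i+1}\}$, all edges incident to $u_i$ are heavier than every edge incident to $u_{i+1}$, within each row $w(u_i,r_1)>\cdots>w(u_i,r_{k-i+1})$, and all edge weights lie in an interval $[1,1+\epsilon_k]$ with $\epsilon_k\to 0$. Then $\OPT(\CS{G}_k)=(1+o(1))k$ and the weight of every matching equals its cardinality up to a $(1+o(1))$ factor. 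Because the algorithm fills each $r\in\CS{R}$ with whichever valid candidate arrives first, the \emph{size} of $\MOnline$ is unaffected by the arrival order of $\CS{L}\setminus\RS{L}'$ (only the identities of the winners shift), so it suffices to show that $\E{|\MOnline|}=(1+o(1))\,kp(1-p)/(1+p)$ over the sample-randomness alone.

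Next I would pin down the candidate structure. Let $i_1<\cdots<i_s$ be the sorted indices of $\RS{L}'$, with the convention $i_0=0$ and $i_{s+1}=k+1$, and set $\tau=\max\{j:i_j\leq k-j+1\}$, which by the analysis in Lemma~\ref{lem:p/(1+p)} equals the cardinality of $\Greedy(\CS{G}_k[\RS{L}'])$. A short induction on rank, using that $\Greedy$ on this graph processes rows in index order and that a row's index determines the extent of its available columns, shows that $\Greedy(\CS{G}_k[\RS{L}'\cup\{u_\ell\}])$ matches the element at rank $j$ to $r_j$ for every $j$ up to the first rank at which the validity constraint fails. Writing $j^\ast(\ell)=1+|\RS{L}'\cap[1,\ell-1]|$ for the rank of $u_\ell$ in the extended sample, this yields that the candidate edge produced by the algorithm is $(u_\ell, r_{j^\ast(\ell)})$, and whenever $j^\ast(\ell)\leq\tau$ one has $\ell\leq i_{j^\ast(\ell)}-1\leq k-j^\ast(\ell)$, so the candidate is automatically valid and all preceding ranks succeed. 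Hence every online $u_\ell$ in the gap $(i_{j-1},i_j)$ with $j\leq\tau$ proposes $r_j$, and $r_j$ ends up occupied in $\MOnline$ exactly when this gap contains at least one online vertex. Up to an $O(1)$ boundary contribution from $j=\tau+1$, $|\MOnline|$ equals the number of non-empty gaps among $\{(i_{j-1},i_j)\}_{j=1}^{\tau}$.

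Finally, to estimate this count I would invoke Bernoulli memorylessness: conditional on the prefix of sample positions up to $i_{j-1}$, the very next position $i_{j-1}+1$ is sampled with probability $p$, so gap $j$ is empty with probability $p$ and non-empty with probability $1-p$. Combined with a Chernoff bound showing that $\tau$ concentrates around $kp/(1+p)$ (exactly as in the proof of Lemma~\ref{lem:p/(1+p)}), this yields $\E{|\MOnline|}=(1+o(1))(1-p)\cdot kp/(1+p)$. Dividing by $\OPT(\CS{G}_k)=(1+o(1))k$ gives $\E{\ALG(\CS{G}_k)}/\OPT(\CS{G}_k)\to p(1-p)/(1+p)$, matching the lower bound from Theorem~\ref{thm:greedy_v_a_c_r}, and the argument is order-agnostic as noted in the first paragraph. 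The main obstacle in executing this rigorously is the coupling between the event $j\leq\tau$ (which depends on $i_j$) and the event ``gap $j$ is non-empty'' (which depends on $i_{j-1}+1$); the cleanest workaround is to restrict the sum to $j\leq(1-\delta)kp/(1+p)$, on which Chernoff guarantees $j\leq\tau$ with probability $1-o(1)$, apply the Bernoulli-independence argument to the trials immediately following $i_{j-1}$ unconditionally, and then let $\delta\to 0$ as $k\to\infty$.
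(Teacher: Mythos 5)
Your proposal is correct, and it reaches the theorem by a genuinely different construction than the paper's. The paper (Appendix~\ref{apx:tight_example}) builds a dedicated three-part graph $\CS{L}_1\cup\CS{L}_2\cup\CS{L}_3$ with $|\CS{R}|=k+k/p$, engineered so that the right vertices that can ever receive a candidate edge are exactly $\set{r_j : j\le k,\ u_j\notin \RS{L}'}$ together with $\set{r_{k+1},\dots,r_{s_1+s_2+1}}$; the count is then $\max\set{s_2,o_1}+1$, which concentrates at $(1-p)k=|\CS{R}|\cdot p(1-p)/(1+p)$ by a single Chernoff bound, so no per-vertex probabilistic reasoning is needed. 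You instead recycle the graph from the tightness part of Lemma~\ref{lem:p/(1+p)} and do a finer analysis: the candidate of an online $u_\ell$ is $r_{j^*(\ell)}$ for its sample-rank $j^*(\ell)$, the vertex $r_j$ is matched iff the $j$-th inter-sample gap is nonempty, each gap is nonempty with probability $1-p$ by memorylessness, and the number of relevant gaps is $\tau\approx kp/(1+p)$. Your structural claims check out: validity at rank $j^*\le\tau$ does propagate to all earlier ranks (since $i_t<i_{j^*}\le k-j^*+1\le k-t+1$), ranks beyond $\tau+1$ produce no candidates, and $|\MOnline|$ is order-independent because each online vertex has at most one candidate, determined by the sample alone. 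You also correctly identify the one real technical issue -- the positive correlation between $\set{j\le\tau}$ and gap-$j$ emptiness -- and your truncation at $(1\pm\delta)kp/(1+p)$ plus Chernoff is the right fix. What each approach buys: the paper's graph makes the counting essentially deterministic at the cost of a bulkier construction and a case analysis of where candidates can land; yours is more economical (one family serves both Lemma~\ref{lem:p/(1+p)} and this theorem) and exhibits the two loss factors, $p/(1+p)$ from greedy on the sample and $1-p$ from empty gaps, as a product over the same set of right vertices, at the cost of the rank/gap bookkeeping. One small caveat: the proof of Lemma~\ref{lem:p/(1+p)} only \emph{upper-bounds} $|\Greedy(\CS{G}_k[\RS{L}'])|$ by $\max\set{t,|\RS{L}'\cap \CS{L}_t|}$ and does not state the exact identity $|\Greedy(\CS{G}_k[\RS{L}'])|=\tau$, so you should derive that identity from your own rank induction rather than cite it.
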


We first explain the intuition behind the construction. To this end, we also use the equivalent description of Algorithm~\ref{alg:greedy_v_a} as a price-threshold algorithm.
Following our analysis, there are two paths through which Algorithm~\ref{alg:greedy_v_a} may lose compared to the maximum matching. The first one is due to its reliance on the offline greedy solution on the sample (together with each arriving vertex) to compute candidate edges. This restricts the algorithm to produce matchings that are somewhat similar to the greedy matching on the sample. As Lemma~\ref{lem:p/(1+p)} suggests, this leaves the algorithm with a $p/(1+p)$-fraction of the maximum matching. 
The second path is typical to threshold algorithms and occurs when the heaviest potential match to some $r \in \CS{R}$ is in the sample. This makes the threshold of  $r$  high and thereby prevents $r$ from being matched. This component incurs a loss of a factor of $p$, and leaves the algorithm with a $(1-p)$-fraction of the weight, as Lemma~\ref{lem:alg_vs_greedy_v_a} suggests.

To prove the ratio of $p(1-p)/(1+p)$, the algorithm needs to lose these two factors on one instance. In our construction, the left side consist of three parts $\CS{L}_1,\CS{L}_2$ and $\CS{L}_3$. Figure~\ref{fig:hard_instance_1} depicts the three parts of the same graph $\CS{G}_1$ for $p=1/3$ (the first graph from the sequence in the theorem's statement), and Figure~\ref{fig:hard_instance_2} depicts the third graph $\CS{G}_3$. Firstly, black edges are heavier than gray edges, and dashed edges are the lightest. Secondly, thicker edges correspond to heavier weights. Nevertheless, the weights of all edges are arbitrarily close to $1$.
{
\def\lu{1}
\def\lv{2}
\def\k{4}
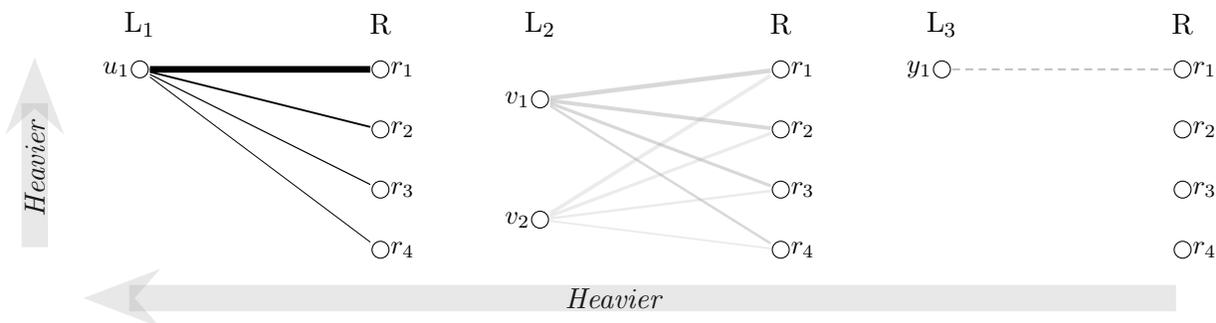
\begin{figure}[t]
\begin{center}
{
\begin{subfigure}[t]{0.03\textwidth}
\begin{tikzpicture}[scale=0.8]
    \node (a) at (0,-0.5) {};
    \node (b) at (0,-\k) {};
    \node[rotate=90] (c) at (0,-2.5) {\textit{Heavier}};
    \draw [-{Stealth[fill=none, scale=0.6]}, line width = 10, solid, opacity=0.18, gray] (b) -> (a);
\end{tikzpicture}
\end{subfigure}
}
{
\begin{subfigure}[t]{0.3\textwidth}
\begin{center}
\begin{tikzpicture}[scale=0.8]
    \node[draw, color=white] at (1,-0.25) {\textcolor{black}{$\text{L}_1$}};
    \node[draw, color=white] at (5,-0.25) {\textcolor{black}{$\text{R}$}};
    
    \foreach \number in {1,...,\lu}
    \draw[] (1,-\number ) node (u-\number) {} node[left, font=\small] {$u_\number$} circle (4pt);
    
    \foreach \number in {1,...,\k}
    \draw[] (5,-\number * 1) node (r-\number) {} node[right, font=\small] {$r_{\number}$} circle (4pt);

    \foreach \number in {1,...,\lu}{
    	\draw [-, solid, line width = (2.5) ] (u-\number) -- (r-\number);
    	\foreach \i in {1,...,\k}{
    		\draw [-, line width = (1.2  - \i / \k), solid, opacity=1] (u-\number) -- (r-\i);
    	}
    }
\end{tikzpicture}
\end{center}
\end{subfigure}
} {
\begin{subfigure}[t]{0.3\textwidth}
\begin{center}
\begin{tikzpicture}[scale=0.8]
    \node[draw, color=white] at (1,-0.25) {\textcolor{black}{$\text{L}_2$}};
    \node[draw, color=white] at (5,-0.25) {\textcolor{black}{$\text{R}$}};
    
    \foreach \number in {1,...,\lv}
    \draw[] (1,-\number * 2 + 0.5) node (v-\number) {} node[left, font=\small] {$v_\number$} circle (4pt);
    
    \foreach \number in {1,...,\k}
    \draw[] (5,-\number * 1) node (r-\number) {} node[right, font=\small] {$r_{\number}$} circle (4pt);

    \foreach \number in {\lv,...,1}{
    	\tikzmath{ \tone = 17 * (\lv - \number + 1);}
    	\foreach \i in {1,...,\k}{
    		\draw [-, line width =  (2.2 - \i / \k - \number/4 ), solid, black!\tone, opacity=0.45] (v-\number) -- (r-\i);
    	}
    }
\end{tikzpicture}
\end{center}
\end{subfigure}
} {
\begin{subfigure}[t]{0.3\textwidth}
\begin{center}
\begin{tikzpicture}[scale=0.8]
    \node[draw, color=white] at (1,-0.25) {\textcolor{black}{$\text{L}_3$}};
    \node[draw, color=white] at (5,-0.25) {\textcolor{black}{$\text{R}$}};
    
    \foreach \number in {1,...,\lu}
    \draw[] (1,-\number * 1) node (y-\number) {} node[left, font=\small] {$y_\number$} circle (4pt);
    
    \foreach \number in {1,...,\k}
    \draw[] (5,-\number * 1) node (r-\number) {} node[right, font=\small] {$r_{\number}$} circle (4pt);

    \foreach \number in {1,...,\lu}{
    	\foreach \i in {1,...,\lu}{
    		\draw [-, line width = (0.9 - \i / 6), densely dashed, lightgray] (y-\number) -- (r-\i);
    	}
    }
\end{tikzpicture}
\end{center}
\end{subfigure}
}
\hfill
{
\begin{subfigure}[t]{0.9\textwidth}
\begin{tikzpicture}[scale=0.8]
    \node (a) at (1,-1) {};
    \node (b) at (19.5,-1) {};
    \node[] (c) at (10,-1) {\textit{Heavier}};
    \draw [-{Stealth[fill=none, scale=0.6]}, line width = 10, solid, opacity=0.18, gray] (b) -> (a);
\end{tikzpicture}
\end{subfigure}
}
\caption{The constructed graph $\CS{G_1}$ for $p = 1/3$}~\label{fig:hard_instance_1}
\end{center}
\end{figure}
\let\lu\relax
\let\lv\relax
\let\k\relax
}
{
\def\lu{3}
\def\lv{6}
\def\k{12}
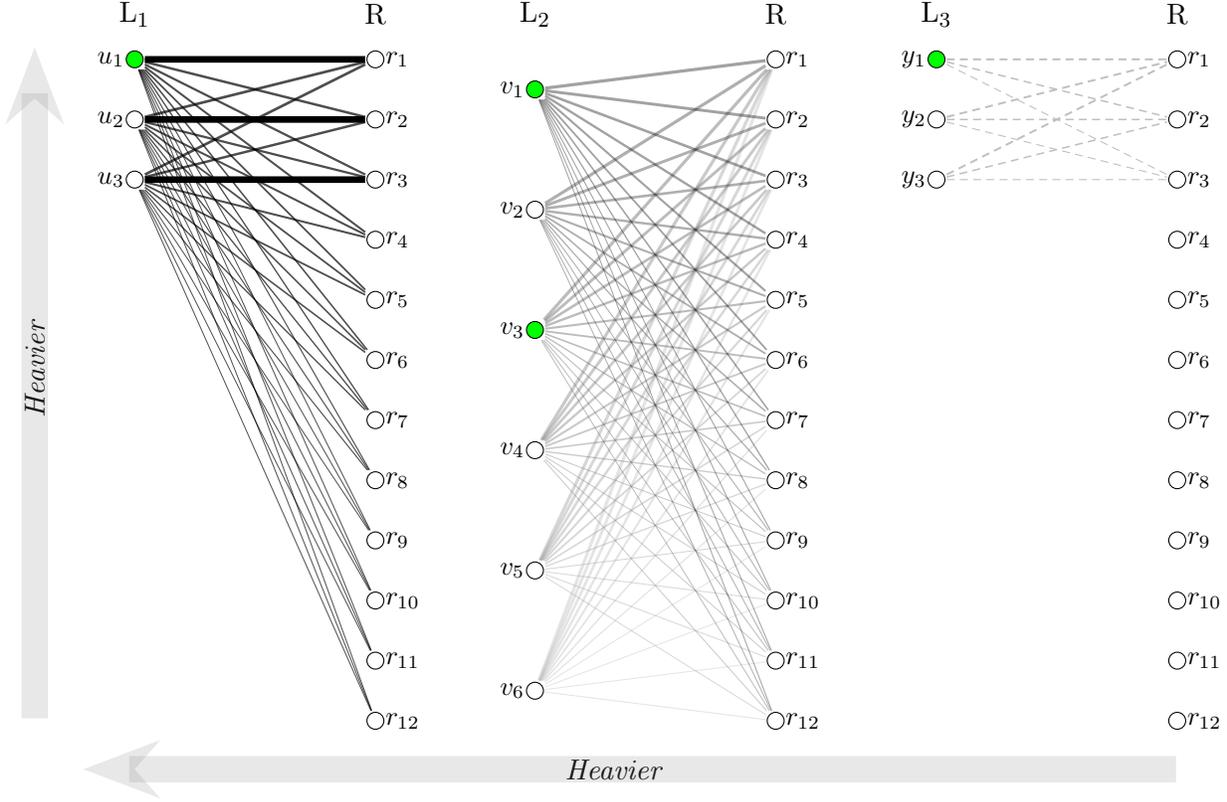
\begin{figure}[t]
\begin{center}
{
\begin{subfigure}[t]{0.03\textwidth}
\begin{tikzpicture}[scale=0.8]
    \node (a) at (0,-0.5) {};
    \node (b) at (0,-\k) {};
    \node[rotate=90] (c) at (0,-\k/2) {\textit{Heavier}};
    \draw [-{Stealth[fill=none, scale=0.6]}, line width = 10, solid, opacity=0.18, gray] (b) -> (a);
\end{tikzpicture}
\end{subfigure}
}
{
\begin{subfigure}[t]{0.3\textwidth}
\begin{center}
\begin{tikzpicture}[scale=0.8]
    \node[draw, color=white] at (1,-0.25) {\textcolor{black}{$\text{L}_1$}};
    \node[draw, color=white] at (5,-0.25) {\textcolor{black}{$\text{R}$}};
    
    \foreach \number in {1,...,\lu} {
        \ifthenelse{\number=1} {
            \draw[fill=green] (1,-\number ) node (u-\number) {} node[left, font=\small] {$u_\number$} circle (4pt);
        } {
            \draw[] (1,-\number ) node (u-\number) {} node[left, font=\small] {$u_\number$} circle (4pt);
        }
    }

    \foreach \number in {1,...,\k}
    \draw[] (5,-\number * 1) node (r-\number) {} node[right, font=\small] {$r_{\number}$} circle (4pt);
    
    \foreach \number in {1,...,\lu}{
    	\draw [-, solid, line width = (2.5) ] (u-\number) -- (r-\number);
    	\foreach \i in {1,...,\k}{
    		\draw [-, line width = (1.1  - \i / 12), solid, opacity=0.7] (u-\number) -- (r-\i);
    	}
    }
\end{tikzpicture}
\end{center}
\end{subfigure}
} {
\begin{subfigure}[t]{0.3\textwidth}
\begin{center}
\begin{tikzpicture}[scale=0.8]
    \node[draw, color=white] at (1,-0.25) {\textcolor{black}{$\text{L}_2$}};
    \node[draw, color=white] at (5,-0.25) {\textcolor{black}{$\text{R}$}};
    
    \foreach \number in {1,...,\lv} {
        \ifthenelse{\number=1 \OR \number=3} {
            \draw[fill=green] (1,-\number * 2 + 0.5) node (v-\number) {} node[left, font=\small] {$v_\number$} circle (4pt);
        } {
            \draw[] (1,-\number * 2 + 0.5) node (v-\number) {} node[left, font=\small] {$v_\number$} circle (4pt);
        }
    }
    \foreach \number in {1,...,\k}
    \draw[] (5,-\number * 1) node (r-\number) {} node[right, font=\small] {$r_{\number}$} circle (4pt);

    \foreach \number in {\lv,...,1}{
    	\tikzmath{ \tone = 13 * (\lv - \number + 2);}
    	\foreach \i in {1,...,\k}{
    		\draw [-, line width =  (1.3 - \i / 10), solid, black!\tone, opacity=0.4] (v-\number) -- (r-\i);
    	}
    }
\end{tikzpicture}
\end{center}
\end{subfigure}
} {
\begin{subfigure}[t]{0.3\textwidth}
\begin{center}
\begin{tikzpicture}[scale=0.8]
    \node[draw, color=white] at (1,-0.25) {\textcolor{black}{$\text{L}_3$}};
    \node[draw, color=white] at (5,-0.25) {\textcolor{black}{$\text{R}$}};
    
    \foreach \number in {1,...,\lu} {
        \ifthenelse{\number=1} {
            \draw[fill=green] (1,-\number * 1) node (y-\number) {} node[left, font=\small] {$y_\number$} circle (4pt);
        } {
            \draw[] (1,-\number * 1) node (y-\number) {} node[left, font=\small] {$y_\number$} circle (4pt);
        }
    
    }
    
    \foreach \number in {1,...,\k}
    \draw[] (5,-\number * 1) node (r-\number) {} node[right, font=\small] {$r_{\number}$} circle (4pt);
    
    \foreach \number in {1,...,\lu}{
    	\foreach \i in {1,...,\lu}{
    		\draw [-, line width = (0.9 - \i / 6), densely dashed, lightgray] (y-\number) -- (r-\i);
    	}
    }
\end{tikzpicture}
\end{center}
\end{subfigure}
}
\hfill
{
\begin{subfigure}[t]{0.9\textwidth}
\begin{tikzpicture}[scale=0.8]
    \node (a) at (1,-1) {};
    \node (b) at (19.5,-1) {};
    \node[] (c) at (10,-1) {\textit{Heavier}};
    \draw [-{Stealth[fill=none, scale=0.6]}, line width = 10, solid, opacity=0.18, gray] (b) -> (a);
\end{tikzpicture}
\end{subfigure}
}
\caption{The constructed graph $\CS{G_3}$ for $p = 1/3$}~\label{fig:hard_instance_2}
\end{center}
\end{figure}
\let\lu\relax
\let\lv\relax
\let\k\relax
}

In general, the edges incident to vertices in $\CS{L}_1$ are heavier than the edges incident to $\CS{L}_{2}$, and the latter are heavier than those incident to $\CS{L_3}$. Additionally, the edges are getting heavier as we move upwards towards vertices with smaller indices. The only exceptions are the edges of the form $(u_i,r_i)$ which are the heaviest in the graph. Both $\CS{L_1}$ and $\CS{L_3}$ consists of $k$ vertices, and $\CS{L_2}$ consists of $k(1-p)/p$ vertices. The right side $\CS{R}$ consists of $k + k/p$ vertices.\footnote{Here, for simplicity, we assume that $1/p$ and $(1-p)/p$ are integers.} Note that $|\CS{L}_1| + |\CS{L}_2| + |\CS{L}_3| = |\CS{R}|$.

In our constructed graphs, there is a perfect matching in which the vertices of $\CS{L_3}$ are matched to the first $k$ vertices on the right, and the vertices of $\CS{L_1}$ and $\CS{L_2}$ (which are connected to all vertices on the right) are matched to the remaining (lower) vertices on the right.

To see how the algorithm performs, first consider the execution of $\Greedy$ on the sample. On average, a $p$-fraction of the vertices from each part appear in the sample. For the first part $\CS{L_1}$, since the edges of the form $(u_i,r_i)$ are the heaviest edges incident to $u_i$ and $r_i$, greedy always matches $u_i$ to $r_i$. Then, since the edges incident to $\CS{L}_2$ are heavier than the edges incident to $\CS{L_3}$, and since the edges incident to higher vertices on the right are heavier, $\Greedy$ matches all the vertices from $\CS{L_2}$ in the sample to the highest available vertices on the right. Hence, the vertices from $\CS{L_1}$ and $\CS{L_2}$ occupy, on average, the first $p |\CS{L_1}| + p|\CS{L_2}| = k$ vertices on the right. So, on average, the vertices from $\CS{L_3}$ in the sample do not have free neighbors to be matched to, and are left unmatched by $\Greedy$. Note that $k = |\CS{R}| \cdot p/(1+p)$. In our depicted example graph $\CS{G_3}$ (Figure~\ref{fig:hard_instance_2}), assume that the sample consists of the following (green) vertices: $u_1,v_1,v_3$ and $y_1$ (which constitutes $1/3$ of the vertices from each part). Then, $\Greedy$ on the sample matches $(u_1,r_1),(v_1,r_2)$ and $(v_3,r_3)$.

Now consider the vertices that arrive after the sampling phase. Note that when $u_i$ is in the sample, $r_i$'s threshold will be set too high, and there will be no candidate edges incident to $r_i$. Hence, in this case, $r_i$ will not be matched by the algorithm (this accounts for the loss of a factor of $p$). In our example, no vertex after the sampling phase will exceed the threshold of $r_1$.

We claim that all candidate edges are incident to vertices in $\CS{R}$ which are matched by $\Greedy$ on the sample, except for one additional vertex -- the highest unmatched vertex in $\CS{R}$. Since $\Greedy$ on the sample matches, on average, only $p/(1+p)$-fraction of the vertices on the right side, and, on average, $p$-fraction of them will not be matched due to their threshold being set too high by the $p\cdot k = |\CS{R}| \cdot p^2 / (1+p) $ $u_i$'s in the sample, we obtain the claimed ratio of $p(1-p)/(1+p)$. The contribution of the one extra matched vertex on the right becomes negligible as the size of the graph grows. 

To see that the claim holds, observe that the candidate edges of the vertices from $\CS{L_1}$ and $\CS{L_3}$ can only be incident to the first $k$ vertices on the right, $r_1,\dots,r_k$ for which $u_i$ is not in the sample (as those set the threshold too high). In our example $u_2,u_3,y_2,y_3$ may only have candidate edges to $r_2$ and $r_3$.
As argued above the first $k$ vertices on the right are matched to the sample.

The candidate edge of $v_i \in \CS{L_2}$ is to the highest $r \in \CS{R}$ such that the weight $(v_i,r)$ exceeds $r$'s threshold. This $r$ is either matched by $\Greedy$ on the sample to some $v_j$ such that $j > i$, or $r$ is the first vertex on the right which is not matched by $\Greedy$ on the sample. Therefore, the claim follows.
In our example, the candidate edges of $v_2$ is $(v_2,r_3)$ and for $v_i \in \set{v_4,v_5,v_6}$ is $(v_i,r_4)$. For a formal proof of Theorem~\ref{thm:alg_upper_bound}, see Appendix~\ref{apx:tight_example}.

\medskip

We conclude this section with an upper-bound on the competitive-ratio of any order-oblivious algorithm for bipartite matching with vertex arrivals. To this end, we adapt an upper-bound 
for the secretary problem in the $\AOS$ model by Kaplan et al.~\cite{kaplan2020competitive}, and obtain an upper-bound of $1/4$ on the competitive-ratio of any order-oblivious algorithm for the special case of the secretary problem. Since there is a $1/e$-competitive algorithm for bipartite matching with vertex arrivals in the random-order model~\cite{DBLP:conf/esa/KesselheimRTV13}, this upper-bound provides a separation between order-oblivious algorithms, and general random-order algorithms for edge-weighted bipartite matching with vertex arrivals.

\begin{theorem}\label{thm:secretary_ob_upperbound}
Any order-oblivious online algorithm for the secretary problem has a competitive-ratio of at most $\frac{1}{4}\left(1 - \frac{5}{n-1}\right) = 1/4 + o(1)$, where $n$ is the size of the online sequence.
\end{theorem}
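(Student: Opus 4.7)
The plan is to adapt the $\AOS$-model upper bound of Kaplan et al.~\cite{kaplan2020competitive} for the secretary problem to the order-oblivious setting. The key observation is that an order-oblivious algorithm which fixes its sample to be the prefix of length $k$ effectively operates in the same regime as an $\AOS$ algorithm endowed with a uniformly random size-$k$ sample followed by an adversarially-ordered online phase. By Yao's minimax principle, it then suffices to exhibit a distribution $\mathcal{D}$ over length-$n$ secretary inputs such that every deterministic algorithm---at every choice of sample size $k$---has expected value at most $\frac{1}{4}\left(1-\frac{5}{n-1}\right)\cdot \E{\OPT}$.

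For $\mathcal{D}$, I would follow Kaplan et al.\ and use an instance parameterized by two distinguished values whose positions are drawn uniformly at random among the $n$ slots, with the remaining entries zero. A four-way case analysis on how the two distinguished values split between the sample and the online phase shows that the algorithm must cope with two antagonistic constraints: when the larger value lands in the sample it is lost forever (so the algorithm can gain only via the smaller value), and when the larger value lands online the adversary can permute the online phase to present the smaller distinguished value before the larger one, baiting the algorithm---whose sample-based threshold cannot distinguish the two scenarios---into committing prematurely. Summing the contributions from each case against the hypergeometric split of the distinguished values yields a competitive-ratio upper bound that is maximized at $k\approx n/2$ and matches $\frac{1}{4}\left(1-\frac{5}{n-1}\right)$ there, with the explicit correction reflecting the finite-$n$ hypergeometric probabilities.

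The main obstacle is to handle rigorously the case in which both distinguished values appear online, since there the algorithm is a priori free to behave adaptively. Following Kaplan et al., this is resolved by considering a symmetric pair of adversarial orderings and bounding the algorithm's value by their average: for any deterministic adaptive strategy, at least one of the two orderings forces the algorithm either to take the smaller value (missing the larger) or to skip the smaller value only to encounter a zero-filled tail (missing both). With this lemma in hand, the $\frac{5}{n-1}$ correction is a matter of bookkeeping---tracking the probabilities that both distinguished values land in the sample (a total loss), that exactly one lands in the sample (a partial loss), and the boundary effects at the extremes of $k$---rather than a conceptual hurdle. Taking the maximum of the resulting bound over the algorithm's choice of sample size $k$ then completes the proof.
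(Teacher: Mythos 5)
Your proposal heads in the right spirit but takes a noticeably harder route than the paper and leaves two load-bearing steps open.

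The paper's proof is essentially a decomposition-and-cite argument: fix the (possibly random) sampling-phase length $k$, notice that conditioned on $k=i$ the order-oblivious algorithm \emph{is} an $\AOS$ secretary algorithm with history of size $i$, and then simply invoke Kaplan et al.'s Theorem~2.2 to bound $\Pr[\ALG \text{ accepts the max} \mid k=i]$ by $\frac{i}{n}\cdot\frac{n-i}{n-1}$ for $i\geq 1$. For $i=0$ the classical purely-adversarial bound of $\frac{1}{n-1}$ is used instead. Averaging over the algorithm's distribution of $k$ and maximizing the quadratic $\frac{i(n-i)}{n(n-1)}$ at $i=n/2$ gives the claimed bound. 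Crucially, this is all preceded by an appeal to the Moran et al.\ result to pass to an infinite set $\cV$ on which the algorithm depends only on ordinal information; this is what licenses the adversary to choose $\alpha_1$ arbitrarily larger than $\alpha_2$ and thereby reduce the competitive-ratio bound to a bound on the acceptance probability of the maximum.

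Your plan instead tries to re-derive the Kaplan et al.\ fixed-$k$ bound from scratch via a Yao-style two-distinguished-values construction, rather than cite it. That is a legitimate aspiration, but two gaps remain. First, you never confront the fact that an order-oblivious algorithm may use the \emph{numerical} values it observes, not just their ranks. Your baiting argument (``whose sample-based threshold cannot distinguish the two scenarios'') implicitly assumes the algorithm is ordinal; with fixed distinguished values $a>b$, the algorithm can simply recognize $a$ on sight and behave differently. The paper addresses this head-on through the Moran et al.\ Ramsey-type reduction, and you need either that reduction, a random-valued construction, or an explicit argument for why value information does not help. Second, you never treat the $k=0$ branch. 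An order-oblivious algorithm is free to set $k=0$ with some probability, and in that branch the bound $\frac{i(n-i)}{n(n-1)}$ evaluates to $0$, which is vacuously wrong in your direction; the paper uses the separate classical bound $\frac{1}{n-1}$ for $k=0$, and it is precisely this term that produces $\frac{5}{4(n-1)}$ rather than $\frac{1}{4(n-1)}$ in the final correction. Finally, a minor point: Yao's minimax is awkward to invoke here because the built-in randomness is the sampling phase, not the input; the paper avoids Yao entirely by fixing a single worst-case ordinal input $\cC$ and conditioning on $k$. If you patch in the ordinal reduction and the $k=0$ case, your construction should indeed recover Kaplan et al.'s Theorem~2.2, but you would then be re-proving something the paper treats as a black box.
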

The proof for Theorem~\ref{thm:secretary_ob_upperbound} is given in Appendix~\ref{proof:secretary_ob_upperbound}.
\section{Vertex Arrivals in Bipartite Graphs in the AOS\texorpdfstring{$p$}{p} Model}\label{sec:bipartite_aos_p}
In this section, we use our order-oblivious analysis from Sections~\ref{sec:vertex_bipartite} to provide a tight analysis for the greedy-based algorithm in the $\AOS p$ model. 
The bipartite matching problem with vertex arrivals in the $\AOS p$ model is defined as the following game between an online player and an adversary:

\begin{enumerate}
\item The adversary generates a bipartite graph $\CS{G}=(\CS{L},\CS{R},\CS{E})$ and a weight function $w:\CS{E}\rightarrow \mathbb{R}_{\geq 0}$.

\item Each vertex $u \in \CS{L}$ is chosen to the sample set $\RS{H}$ independently with probability $p$. $\RS{H}$ is also called the \textit{history set}. $p$, and the induced sub-graph on the vertices $H\cup \CS{R}$ are given to the online player upfront. The \textit{online set}, denoted by $\RS{O}$, is the set of remaining vertices, i.e., $\RS{O}=\CS{L} \setminus \RS{H}$.

\item The vertices in the online set are presented to the online player one-by-one in an adversarial order (that can depend on $\RS{H}$). When a vertex arrives, its incident edges are revealed along with their weights, and the online player has to make an immediate and irrevocable decision whether to match the vertex to an available neighbor, or to leave it unmatched.
\end{enumerate}

An input instance for the problem is defined by $\mathcal{I} = (\CS{G},p)$. The expected performance of the algorithm is compared to the expected total weight of the maximum matching in $\CS{G}[\RS{O}]$ (recall that we write $\CS{G}[\RS{O}]$ instead of  $\CS{G}[\RS{O} \cup \CS{R}]$). More formally, let $\OPT(\cI)$ be the random variable that gets the weight of the maximum matching in $\CS{G}[\RS{O}]$. An algorithm $\ALG$ is called $c$-competitive if for any input instance $\mathcal{I}$, $\E{\ALG(\mathcal{I})}/ \E{\OPT(\mathcal{I})} \geq c$, where the expectation is taken over the random choice of the history set (and the online set), and possibly the internal randomness of the algorithm.

We begin with the case where $p \leq 1/2$, and later show how to generalize the results for $p > 1/2$.\footnote{Algorithm~\ref{alg:aos_p_greedy_v_a} can be viewed as one possible generalization of the optimal algorithm for the secretary problem in the $\AOS$ model \cite{kaplan2020competitive}.}

\IncMargin{1em}
\begin{algorithm}
\caption{AOS$p$ Greedy-Based Vertex Arrivals in Bipartite Graphs ($p \leq 1/2$)}
\label{alg:aos_p_greedy_v_a}
$\RS{L'} \leftarrow \RS{H}$\;
$ \MOnline \leftarrow \emptyset$\;
\For {a vertex $u_{\ell}$ that arrives at round $\ell$ } {
$\RS{G}_\ell \leftarrow \CS{G}[{\RS{L'} \cup \set{u_\ell}}]$\;
    $\RS{M}_\ell \leftarrow \textsc{Greedy}(\RS{G}_\ell)$\;
    \uIf{$u_\ell$ is matched in $\RS{M}_\ell$} {
        Let $(u_\ell, r_\ell) \in \RS{M}_\ell$ be the corresponding edge\tcp*{a candidate edge}
        \If {$r_\ell$ is not matched in $\MOnline$} {
	        $ \MOnline \leftarrow \MOnline \cup \set{(u_\ell, r_\ell)}$\;
        }
    }
}
\Return{\RS{M}}
\end{algorithm}
\DecMargin{1em}

Note that the algorithm does not need to know the sampling probability $p$.

\begin{theorem}\label{thm:p_aos_p(1-p)} For $p \in [0,1/2]$, Algorithm~\ref{alg:aos_p_greedy_v_a} is $p(1-p)$-competitive in the $\AOS p$ model.
\end{theorem}

\begin{proof}
In the $\AOS p$ model, $\RS{H}$ is given to the online player upfront, and the performance of the algorithm is measured compared with $\E{\OPT(\CS{G}[\RS{O}])}$. Each vertex $u \in \CS{L}$ is chosen to $\RS{H}$ independently with probability $p$. Therefore, Algorithm~\ref{alg:aos_p_greedy_v_a} is equivalent to Algorithm~\ref{alg:greedy_v_a} with $\RS{L}' = \RS{H}$. So, by Lemma~\ref{lem:alg_vs_greedy_v_a}, we have $\E{\ALG} \geq (1-p) \E{\Greedy(\CS{G}[H])}$. 

Next, we relate $ \E{\Greedy(\CS{G}[H])}$ with $\E{\OPT(\CS{G}[\RS{O}])}$. Consider the following random process for drawing subsets from $\CS{L}$. For each $u \in \CS{L}$, we draw $x_u \in [0,1]$ uniformly at random. Let $X = \set{u : x_u \leq p}$ and let $Y = \set{u : x_u \leq 1-p}$. Clearly, every vertex is chosen for $X$ independently with probability $p$. Similarly, every vertex is chosen for $Y$ independently with probability $1-p$. Therefore, $\E{\textsc{Greedy}(\CS{G}[\RS{X}])} = \E{\textsc{Greedy}(\CS{G}[\RS{H}])}$ and $\E{\OPT(\CS{G} [\RS{Y}] ) } = \E{\OPT(\CS{G}[\RS{O}])}$. Furthermore, since $p \leq (1-p)$, $\Pr[u \in X \mid u \in Y] = \Pr[x_u \leq p \mid x_u \leq 1- p ] = p /(1-p) $. 

We now apply Lemma~\ref{lem:p/(1+p)} with vertex sampling probability of $q = p/(1-p)$, and obtain that $\E{ \given{\textsc{Greedy}(\CS{G} [\RS{ X }] )}{Y} } \geq  q/(1+q) \cdot \OPT(\CS{G} [\RS{ Y }] )  = p \cdot \OPT(\CS{G} [\RS{ Y }] ) $. By taking the expectation over $Y$ we get that $\E{\textsc{Greedy}(\CS{G}[\RS{X}])} \geq p \cdot \E{\OPT(\CS{G} [\RS{ Y }] )}$. Therefore, we have $\E{\textsc{Greedy}(\CS{G}[\RS{H}])} \geq p \cdot \E{\OPT(\CS{G}[\RS{O}])}$. Overall,
\begin{align*}
    \E{\ALG} &\geq (1-p)\E{\Greedy(\CS{G}[H])} \geq p(1-p)\cdot \E{\OPT(\CS{G}[\RS{O}])}. \qedhere
\end{align*}
\end{proof}
We now show that our analysis is tight.

\begin{theorem}\label{thm:aosp_alg_upper_bound}
For $p\leq 1/2$ there is an infinite sequence of bipartite graphs $\CS{G}_1, \CS{G}_2,\dots$ with increasing number of vertices, where the competitive-ratio of Algorithm~\ref{alg:aos_p_greedy_v_a} on $\CS{G}_{k}$ approaches $p(1-p)$ as $k \rightarrow{\infty}$.
\end{theorem}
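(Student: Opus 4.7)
The plan is to exhibit, for each $p\in[0,1/2]$, an infinite sequence of bipartite graphs $\CS{G}_1,\CS{G}_2,\dots$ whose $\AOS p$ competitive-ratios approach $p(1-p)$. Since Theorem~\ref{thm:p_aos_p(1-p)} already provides the matching lower bound, the content is to construct graphs on which the two losses combined in that theorem are simultaneously tight: the $(1-p)$ factor from threshold blocking in Lemma~\ref{lem:alg_vs_greedy_v_a}, and the $p$ factor from $\E{\Greedy(\CS{G}[\RS{H}])}\ge p\cdot\E{\OPT(\CS{G}[\RS{O}])}$ (derived via the coupling argument in the proof of Theorem~\ref{thm:p_aos_p(1-p)}).

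A natural starting point is the three-part construction $L_1\cup L_2\cup L_3$ used in the proof of Theorem~\ref{thm:alg_upper_bound}, which gives $\E{\ALG}\approx k(1-p)$ and $\OPT(\CS{G})\approx k(1+p)/p$, yielding the tight ratio $p(1-p)/(1+p)$ against the full matching. In the $\AOS p$ setting, however, the denominator is $\E{\OPT(\CS{G}[\RS{O}])}$, and on that graph every vertex of $\CS{L}$ is matchable by $\OPT$, so $\E{\OPT(\CS{G}[\RS{O}])}=(1-p)\OPT(\CS{G})$, which forces the $\AOS p$ ratio to be $p/(1+p)$---strictly larger than $p(1-p)$. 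To close this gap I would adjust the three-part construction so that a larger fraction of $\OPT(\CS{G})$ survives random $p$-sampling of $\CS{L}$: concretely, by blowing up the portions of $\CS{L}$ that serve only as ``matching backup'' and leaving $|\CS{R}|$ chosen so that $\E{\OPT(\CS{G}[\RS{O}])}/\OPT(\CS{G})$ becomes $1/(1+p)$ rather than $(1-p)$, while preserving the algorithm's output $\E{\ALG}\approx k(1-p)$.

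With the parameters fixed, the analysis proceeds in three steps. First, bound $\E{\ALG}$ from above by repeating the argument of Theorem~\ref{thm:alg_upper_bound}: candidate edges of the algorithm land only on right vertices matched by $\Greedy(\CS{G}[\RS{H}])$ (plus at most one extra), and a $p$-fraction of these are blocked by the high thresholds imposed by the heaviest diagonal edges of $L_1$. Second, compute $\E{\OPT(\CS{G}[\RS{O}])}$ by checking Hall's condition on the random induced sub-graph $\CS{G}[\RS{O}]$ together with a Chernoff concentration argument (as in the tight-example portion of the proof of Lemma~\ref{lem:p/(1+p)}), to show the maximum matching attains the target size with high probability. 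Third, combine the two to deduce that $\E{\ALG}/\E{\OPT(\CS{G}[\RS{O}])}\to p(1-p)$ as $k\to\infty$.

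The main obstacle is finding a parameterization that makes both losses tight at once. Enlarging $|L_3|$ to lift $\E{\OPT(\CS{G}[\RS{O}])}$ risks creating new candidate edges from $L_3\cap\RS{O}$ that would also raise $\E{\ALG}$; the scaling must be chosen so that, with high probability, those extra candidates are assigned (in the offline greedy on $\RS{H}\cup\{u\}$) to right vertices whose thresholds, set by the heavier $L_1$ or $L_2$ matches in the sample, they cannot overcome---so these new $L_3$-candidates contribute nothing to $\E{\ALG}$. Once this delicate balance between the sizes of $L_1,L_2,L_3$ and $\CS{R}$ is achieved, the remainder of the argument is a routine counting and concentration calculation that I would defer to an appendix.
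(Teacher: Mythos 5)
Your proposal follows essentially the same route as the paper's proof in Appendix~\ref{apx:aosp_tight_example}: the paper likewise reuses the construction of Theorem~\ref{thm:alg_upper_bound} and enlarges only the ``backup'' part $\CS{L}_3$ (to size $k^2/(1-p)$) so that with high probability $\RS{O}$ contains at least $k$ vertices of $\CS{L}_3$ to saturate $r_1,\dots,r_k$, which drives $\E{\OPT(\CS{G}[\RS{O}])}$ up to $\approx k/p$ while the upper bound $\E{\ALG}\leq(1+o(1))(1-p)k$ of Inequality~\eqref{eq:o_o_ALG_performance} is unaffected, giving the ratio $p(1-p)$ exactly as you compute. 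The one imprecision is your explanation of why the extra $\CS{L}_3$ vertices do not inflate $\E{\ALG}$: it is not that their candidate edges fail to clear thresholds (some do become candidates), but that any such candidate can only land on a right vertex $r_j$ with $u_j\in\RS{O}$, a set already counted in the bound, so the number of matchable right vertices is independent of $|\CS{L}_3|$.
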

The proof of Theorem~\ref{thm:aosp_alg_upper_bound} appears in Appendix~\ref{apx:aosp_tight_example}

\medskip

For $p > 1/2$, using all of $\RS{H}$ as sample (that is, $\RS{L'} \leftarrow \RS{H}$ as in Algorithm~\ref{alg:aos_p_greedy_v_a}) achieves a weaker competitive-ratio than the case where $p=1/2$. Instead, we can randomly discard vertices from $\RS{H}$ and achieve the same competitive-ratio as in the $p=1/2$ case.

\begin{theorem}
For $p > 1/2$, let $\RS{H}' \subseteq \RS{H}$ such that each $u \in \RS{H}$ is drawn to $\RS{H'}$ independently with probability $(1-p)/p$. Then, Algorithm~\ref{alg:aos_p_greedy_v_a} with $\RS{L'} \leftarrow \RS{H'}$ (instead of $L' \leftarrow \RS{H}$) is $1/4$-competitive.
\end{theorem}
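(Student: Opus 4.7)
The plan is to reduce the $p>1/2$ case to the $p=1/2$ case of Theorem~\ref{thm:p_aos_p(1-p)} by a conditioning argument. First, observe that the two underlying independent trials---whether $u\in\RS{H}$, and then whether $u\in\RS{H}'$ given $u\in\RS{H}$---place each vertex $u\in\CS{L}$ independently of all others into exactly one of three events: into $\RS{L}'=\RS{H}'$ with probability $p\cdot (1-p)/p = 1-p$; into $\RS{O}=\CS{L}\setminus\RS{H}$ with probability $1-p$; and into $\RS{H}\setminus\RS{H}'$ with probability $2p-1$. Vertices of the third type are neither sampled nor presented online, so they are invisible to the algorithm.

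Let $\RS{Y}=\RS{L}'\cup\RS{O}$. Then each vertex of $\CS{L}$ lies in $\RS{Y}$ independently with probability $2(1-p)$, and a short calculation from the three-way split above shows that, conditional on $\RS{Y}$, each $u\in\RS{Y}$ lies in $\RS{L}'$ independently with probability exactly $1/2$ (and otherwise in $\RS{O}=\RS{Y}\setminus\RS{L}'$). Consequently, conditional on $\RS{Y}$, the situation is \emph{exactly} an AOS$p$ instance on the induced bipartite graph $\CS{G}[\RS{Y}]$ with sampling probability $1/2$: the sample $\RS{L}'$ is a uniform-random half of $\RS{Y}$, and $\RS{O}$ is revealed online in an adversarial order. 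The only subtlety is that the adversary's ordering may depend on all of $\RS{H}$ rather than only on $\RS{L}'$; but Theorem~\ref{thm:p_aos_p(1-p)} applies to any adversarial arrival order, so this extra information does not affect the guarantee.

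Applying Theorem~\ref{thm:p_aos_p(1-p)} with $p=1/2$ to $\CS{G}[\RS{Y}]$ therefore yields, conditional on $\RS{Y}$,
\[
    \E{\ALG \,\middle|\, \RS{Y}} \;\geq\; \tfrac{1}{2}\bigl(1-\tfrac{1}{2}\bigr)\,\E{\OPT(\CS{G}[\RS{O}]) \,\middle|\, \RS{Y}} \;=\; \tfrac{1}{4}\,\E{\OPT(\CS{G}[\RS{O}]) \,\middle|\, \RS{Y}}.
\]
Taking expectation over $\RS{Y}$ then gives the claimed $\E{\ALG}\geq\tfrac{1}{4}\,\E{\OPT(\CS{G}[\RS{O}])}$. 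The one step that requires care---and the only nontrivial point of the argument---is the conditional independence claim: for a fixed $Y_0\subseteq\CS{L}$ and any $S\subseteq Y_0$, the joint event $\{\RS{L}'=S,\RS{O}=Y_0\setminus S\}$ has probability $(1-p)^{|Y_0|}(2p-1)^{|\CS{L}|-|Y_0|}$, independent of which subset $S$ of $Y_0$ is chosen, so conditioning on $\RS{Y}=Y_0$ makes $\RS{L}'$ a uniformly random subset of $Y_0$. Everything else is a direct invocation of the previously established $p=1/2$ result.
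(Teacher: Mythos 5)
Your argument is correct and is essentially identical to the paper's own proof: both condition on $\RS{Y}=\RS{L}'\cup\RS{O}$, verify that each vertex of $\RS{Y}$ lies in $\RS{L}'$ independently with probability exactly $1/2$, apply the $p=1/2$ guarantee of Theorem~\ref{thm:p_aos_p(1-p)} to the conditional instance, and take expectation over $\RS{Y}$. Your explicit three-way split and the computation of the joint probability $(1-p)^{|Y_0|}(2p-1)^{|\CS{L}|-|Y_0|}$ just spell out the independence claim that the paper states more tersely.
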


\begin{proof}
Observe that for $\Csample \subseteq \CS{L}$,
conditioned on $\RS{O} \cup \RS{L'} = \Csample$, for each $u \in \Csample$ we have $\Pr[u \in \RS{L'} \mid \RS{O} \cup \RS{L'} = \Csample] = 1/2$ independently of all other vertices $u \in \Csample$. Therefore, conditioned on $\RS{O} \cup \RS{L'} = \Csample$ we have an instance of the problem with vertex sampling probability of $1/2$. Hence, we get that $\E{\ALG \mid \RS{O} \cup \RS{L'} = \Csample} \geq\E{\OPT \mid  \RS{O} \cup \RS{L'} = \Csample} / 4$. The competitive-ratio of $1/4$ follows by taking the expectation over $\RS{O} \cup \RS{L}'$.
\end{proof}

An upper-bound of $\min\{p,1/2\}$ on the performance of any online algorithm for bipartite-matching with vertex arrivals can be derived from an upper-bound for the special case of the secretary problem in the $\AOS$ model by Kaplan et al.~\cite{kaplan2020competitive}.

\begin{theorem}\label{thm:sec_min_p_1/2}
For $p \in [0,1]$, any online algorithm for the secretary problem in the $\AOS p$ model has a competitive-ratio of at most $\min \left\{ p + \frac{1}{n-1} \cdot \frac{p^n}{1-p}, \frac{1}{2} \left(1+ \frac{5}{n-1}\right)\right\} = \min \{p, 1/2\} + o(1)$, where $n$ is the total number of elements.
\end{theorem}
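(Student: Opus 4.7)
The plan is to invoke Yao's minimax principle: for each of the two expressions inside the $\min$, we will exhibit a distribution over instance-and-arrival-order pairs on which no deterministic online algorithm exceeds the corresponding ratio in expectation.

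The bound $\tfrac{1}{2}(1 + \tfrac{5}{n-1})$ will be obtained by adapting the construction of Kaplan et al.~\cite{kaplan2020competitive} for the fixed-sample $\AOS$ model. The key observation is that, conditioned on the sample size $|\RS{H}|=k$, the $\AOS p$ sampling is identical to a uniformly random sample of size $k$; hence any bound that holds in the $\AOS$ model uniformly in $k$ automatically transfers to $\AOS p$ after taking expectation over $|\RS{H}|\sim\mathrm{Binom}(n,p)$. Kaplan et al.'s $\tfrac{1}{2}(1+O(1/n))$ upper bound has this uniform-in-$k$ form, yielding the second term.

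The bound $p + \tfrac{p^n}{(n-1)(1-p)}$ (binding for small $p$) will be obtained from a new distribution tailored to the independent sampling structure of $\AOS p$. The plan is to place $n$ weights $w_1 \ll w_2 \ll \ldots \ll w_n$ (e.g., $w_i = M^i$ with $M\to\infty$, so that only the maximum matters for $\E{\OPT}$) on a uniformly random permutation of positions; this way, the multiset of sample weights reveals only \emph{which} of $\{w_1, \ldots, w_n\}$ lie in history, not the identities of individual online positions. The adversary then chooses an arrival order (as a function of $\RS{H}$) that thwarts any threshold strategy. A direct calculation will show that the algorithm picks the maximum online weight with probability at most $p$ plus a correction arising from atypically-large-sample events, equal to the binomial upper tail term $\tfrac{p^n}{(n-1)(1-p)}$.

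The main technical obstacle lies in the second part: a naive construction with distinct weights would let the algorithm identify the online maximum by elimination (since all sample weights are revealed), giving a trivial ratio of $1$. The position-permutation symmetry above is designed precisely to avoid this failure mode. A secondary challenge is to derive the exact correction $\tfrac{p^n}{(n-1)(1-p)}$ (rather than a weaker $o(1)$ slack), which relies on a tight estimate of the binomial upper tail. Both the full construction and the detailed calculation will be deferred to the appendix.
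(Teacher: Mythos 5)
The central gap is in your construction for the $p + \frac{p^n}{(n-1)(1-p)}$ term. Permuting the positions of a \emph{fixed} weight multiset $\{w_1,\dots,w_n\}$ does not defeat the elimination attack you are worried about: under Yao's principle the deterministic algorithm knows the distribution, hence knows that the value multiset is always $\{w_1,\dots,w_n\}$. Upon seeing the sample it computes the complement, learns the exact value of the maximum online element, and simply waits for an arrival with that value --- succeeding with probability $1$ regardless of how positions were permuted or how the adversary orders the online elements. The randomization must be over the \emph{values themselves}, not over positions; the paper instead sidesteps this by invoking the Ramsey-type result of Moran et al.\ (as in Theorem~\ref{thm:secretary_ob_upperbound}) to restrict attention to ordinal algorithms on a value set with $\alpha_1$ arbitrarily larger than $\alpha_2$. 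Without one of these two devices your first bound collapses to a trivial ratio of $1$. Relatedly, your target quantity is off: what must be bounded is the probability of accepting the \emph{overall} maximum $\alpha_1$, which the paper shows is at most $p(1-p)+o(1)$ via the per-$k$ bound $\frac{k}{n}\cdot\frac{n-k}{n-1}$ of Kaplan et al.\ and the exact identity $\E{K(n-K)}=n(n-1)p(1-p)$ for $K\sim Binom(n,p)$; this is then divided by $\E{\OPT(\cI)}/\alpha_1\approx 1-p$, since the benchmark in $\AOS p$ is the maximum of the \emph{online} set. The correction term comes from the $k=0$ boundary case (where one falls back on the no-sample adversarial bound $\frac{1}{n-1}$), not from a binomial tail estimate.

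For the $\frac12(1+\frac{5}{n-1})$ term, conditioning on $|\RS{H}|=k$ and ``taking expectation'' is not enough as stated. The per-$k$ bound that transfers cleanly (acceptance probability of $\alpha_1$ at most $\frac{k}{n}\cdot\frac{n-k}{n-1}$) yields, after normalizing by the online-max benchmark, a conditional ratio of roughly $\frac{k}{n-1}$, which exceeds $1/2$ for $k>n/2$; so there is no uniform-in-$k$ bound of $1/2$ against the $\AOS p$ benchmark available to average, and averaging conditional ratios is in any case not the same as bounding the ratio of expectations. The paper instead obtains this term globally, by contrasting $\E{\ALG}$ with the order-oblivious upper bound of $\frac14\left(1+\frac{5}{n-1}\right)$ against the overall maximum (Theorem~\ref{thm:secretary_ob_upperbound}) and relating $\E{\OPT(\cI)}$ to $\OPT$. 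You would also need a single instance that is simultaneously hard for all $k$, since the adversary must commit to the input before the binomial sample is drawn; the paper ensures this by fixing one instance $\cC$ and one family of decision functions before splitting over $k$.
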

The proof of Theorem~\ref{thm:sec_min_p_1/2} is given in Appendix~\ref{proof:thm_sec_min_p_1/2}.
\section{Single Sample Prophet Inequalities and the Two-Faced Model}\label{sec:batched}

Azar et al.~\cite{soda/AzarKW14} introduced a method of obtaining single-sample prophet inequalities from order-oblivious algorithms for online selection problems in the random-order model (which are also called secretary algorithms). We generalize and modify this method to obtain stronger guarantees from order-oblivious algorithms, and specifically from our analyses in Sections~\ref{sec:vertex_bipartite} and~\ref{sec:edge_general}. For future applications, we formulate the method for a wide range of online problems which we call batched online selection problems. Ezra et al.~\cite{ezra2020online} defined a somewhat similar notion of batched prophet inequalities.

\begin{definition}[environment for batched online selection]
An \textit{environment} for a \textit{batched online selection problem} is given by $\cE = (\cU, \cP, \cJ)$, where $\cU = \set{1,\dots,m}$ is the universe of elements, $\cP = \set{P_1,\dots,P_n}$ is a partition of $\cU$ into $n$ disjoint subsets which we call \textit{items}, and $\cJ \subseteq 2^\cU$ is a family of feasible subsets of $\cU$. 
\end{definition}

Given non-negative weights $w= (w_1,\dots,w_m)$ for the elements of $\cU$, the items $P_{j_1},\dots,P_{j_n}$ are presented to an online algorithm one-by-one in some order (which depends on the online model). When an item $P_{j_i}$ arrives, its elements and their weights are revealed to the online algorithm. Upon arrival, the algorithm must decide immediately and irrevocably (before the arrival of the next item) whether to accept or reject each element of the item. The set of accepted elements must be in $\cJ$ at all times. The objective is to maximize the total weight of the accepted elements. Note that online selection problems, as defined in~\cite{soda/AzarKW14}, are a special case of batched online selection problems in which $\cP$ consists of singletons.

We denote an input instance (which also depends on the online model) by $\cI$. For an algorithm $\ALG$, let $\ALG(\cI)$ be the random variable that gets the total weight of the algorithm's final set of accepted elements. Also, let $\OPT(\cI)$ be the maximum weight of a set in $\cJ$ (according to $w$).\footnote{When $w$ is a random variable, $\OPT(\cI)$ is also a random variable.} The objective is to maximize the total weight of $\ALG(\cI)$.

To fully define an online batched selection problem (and an instance for it), we need to specify the online model. The online model determines the way the weights $w$ are selected, the arrival order of the items, and which additional information is revealed to the online algorithm. 

Before we proceed, we give some examples for environments: For $\cJ = \set{\set{i}: i \in \cU}$, and $\cP = \set{\set{i}: i \in \cU}$ we get the classical problem of selecting a single element from a sequence, with the aim of maximizing its weight (also known as the secretary problem in the random-order model). If we take $\cU$ to be the edge set of a bipartite-graph $\CS{G} = (\CS{L},\CS{R},\CS{E})$, and $\cJ$ to be the family of feasible matchings in $\CS{G}$. Then, for $\cP = \set{\set{e : u \in e} : u \in \CS{L}}$ we get the online weighted bipartite-matching problem with (one-sided) vertex arrivals, and for $\cP = \set{\set{e}: e \in \CS{E}}$ we get the online weighted bipartite-matching problem with edge arrivals.

\begin{definition}[random-order batched online selection]\label{def:ro_batched}
In the \textit{random-order} model, given an environment $\cE$ for a batched online selection problem, the weights of the elements $w$ are chosen by an adversary, and the items arrive in a uniformly random order, i.e., $(i_1,\dots,i_n)$ is a uniformly random permutation of $[n]$. In this model, an algorithm is called $c$-competitive, if on any input instance for the problem $\cI = (\cE, w)$, $\E{\ALG(\cI)} \geq c \cdot \OPT(\cI)$, where the expectation is taken over the random arrival order of the items, and the internal randomness of the algorithm. 
\end{definition}

\begin{definition}[single-sample batched prophet inequality]
\label{def:sspi}
In the \textit{single-sample batched prophet inequality}, given an environment $\cE$ for a batched online selection problem, an adversary generates for each item $P_j\in \cP$ a distribution $D_j$ over weight vectors for the elements of $P_j$. The online algorithm gets to draw in advance a single-sample $s=(s_1,\dots,s_n)$ from the product distribution $D = D_1 \times \cdots \times D_n$ for the purpose of learning. Then, the weights of the elements that arrive online, $w=(w_1,\dots,w_m)$, are also drawn from $D$, and the arrival order of the items is chosen by the adversary. In this model, an algorithm for a batched online selection problem is called $c$-competitive, if on any input instance for the problem $\cI = (\cE,D)$, $\E{\ALG(\cI)} \geq c \cdot \E{\OPT(\cI)}$, where the expectation is taken over $s,w \sim D$, and the internal randomness of the algorithm.
\end{definition}

We stress that in contrast to the single-sample prophet inequality~\cite{soda/AzarKW14}, the batched setting allows for correlations between the weights of elements in an item. Observe that   Definition~\ref{def:sspi} generalizes the standard single-sample prophet inequality, in which the weight of each element of $\cU$ is drawn independently from a distribution over its possible weights.

Next, we define a new online model which strengthens the single-sample batched prophet inequality, which we call the two-faced model.
\begin{definition}[two-faced batched online selection]
In the \textit{two-faced} model, given an environment $\cE$ for a batched online selection problem, an adversary chooses for each item $P_i$ a pair of weight vectors for its elements $f_{i,1}, f_{i,2}$. Then, for each $i$, $s_{i} \in \set{f_{i,1}, f_{i,2}}$ is chosen uniformly at random, and let $w_{i} \in \set{f_{i,1}, f_{i,2}} \setminus \set{s_{i}}$ be the remaining vector. The items $(P_1,\dots,P_n)$ weighted by $s = (s_{1},\dots,s_{n})$ are given to the online algorithm upfront as a sample for the purpose of learning. Then, the items weighted by $w = (w_{1},\dots,w_{n})$ arrive online in an adversarial order (that may depend on the realization of $s$ and $w$). Let $f_1 = (f_{1,1},\dots,f_{n,1})$ and $f_2 = (f_{1,2},\dots, f_{n,2})$. In this model, an algorithm for a batched online selection problem is called $c$-competitive, if on any input instance for the problem $\cI = (\cE,f_1,f_2)$, $\E{\ALG(\cI)} \geq c \cdot \E{\OPT(\cI)}$, where the expectation is taken over the random selection of $s$ (and $w$), and the internal randomness of the algorithm.
\end{definition}

Clearly, the two-faced model provides stronger guarantees than single sample batched prophet inequality. Indeed, any $c$-competitive algorithm for a two-faced batched online selection problem is also (at least) $c$-competitive as a single sample batched prophet inequality.\footnote{To show this, consider the following way of drawing the two independent samples $s = (s_1,\dots,s_n)$, $w = (w_1,\dots,w_n)$ from the product distribution $D = D_1 \times \cdots \times D_n$. First, we draw two samples $f_1, f_2 \sim D$ independently, $f_1 = (f_{1,1},\dots,f_{n,1})$, $f_2 = (f_{1,2},\dots,f_{n,2})$. Then, for each $i$, $s_{i} \in \set{f_{i,1}, f_{i,2}}$ is chosen uniformly at random, and let $w_{i} \in \set{f_{i,1}, f_{i,2}} \setminus \set{s_{i}}$ be the remaining vector.

Conditioned on the values of $f_1$ and $f_2$, since the algorithm is $c$-competitive in the two-faced model, we have $\E{\ALG(\cI) \mid f_1,f_2} \geq c \cdot \E{\OPT(\cI) \mid f_1,f_2}$. By taking the expectation over $f_1,f_2 \sim D$, we obtain the competitive-ratio of $c$ for the single sample batched prophet inequality. 
}

Next, we formally define the notion of an order-oblivious algorithm and its competitive-ratio for a random-order batched online selection (see Definition~\ref{def:ro_batched}).

\begin{definition}[order-oblivious~\cite{soda/AzarKW14}]\label{def:order-oblivious}
An online algorithm for a random-order batched online selection problem and its competitive-ratio are called \textit{order-oblivious} if: \begin{enumerate}
    \item \textit{(algorithm)} The algorithm sets an integer $k$ (possibly random), and collects the first $k$ arriving items $P_{i_1},\dots,P_{i_k}$ into a sample (sampling phase). Then, when later items arrive, it uses information from the sample to make accept/reject decisions.
    
    \item \textit{(competitive-ratio)} The competitive-ratio of the algorithm holds for any (possibly adversarial) arrival-order of the items after the sampling phase, $\cP \setminus \set{P_{i_1},\dots,P_{i_k}}$.
    \end{enumerate}
\end{definition}

Now, given an order-oblivious algorithm for a random-order batched online selection problem, we construct an algorithm for the problem in the two-faced model that achieves the same competitive-ratio. The construction is similar to the one in~\cite{soda/AzarKW14}, but works with the items $P_1,\dots,P_n$ instead of single elements.

\begin{algorithm}
\caption{From Order-Oblivious ($\ALG$) to Two-Faced ($\ALG'$)}
\label{alg:two_faced}
\tcp{Offline stage}
\SetKwInOut{Input}{input}
\Input{$\cI$, $P_1,\dots,P_n$ weighted by a sample $s_1,\dots,s_n$.}
Let $k$ be the length of the sampling phase of $\ALG$\; 
Let $j_1,\dots,j_n$ be a uniformly random permutation of $[n]$\;\label{line:j_1}

Pass $P_{j_1},\dots,P_{j_k}$ with the weights  $s_{j_1},\dots,s_{j_k}$ to $\ALG$ as the $k$ first items\;

\tcp{Online stage}
\Input{$P_{i_1},\dots,P_{i_n}$ weighted by $w_{i_1},\dots,w_{i_n}$ arrive online (in adversarial order) }
\For{each arriving $P_{i_j}$ with weights $w_{i_j}$} {
    \uIf{$i_j \in \set{j_1,\dots,j_k}$} {
        \tcp{the index has already been processed as a ``sample''}
        Ignore and continue\;
    } 
    \Else {
        Pass $P_{i_j}$ to $\ALG$ with weight  $w_{i_j}$ and accept/reject in the same way\;
    }
}
\end{algorithm}

\begin{theorem}\label{thm:order_oblivious_two_faced}
Let $\ALG$ be an order-oblivious $c$-competitive algorithm for a random-order batched online selection problem. Then, $\ALG'$ (Algorithm~\ref{alg:two_faced}) is a $c$-competitive algorithm for the problem in the two-faced model.
\end{theorem}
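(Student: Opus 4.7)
The plan is to couple the execution of $\ALG'$ on a two-faced instance $\cI=(\cE,f_1,f_2)$ with the execution of $\ALG$ on a synthetic random-order instance $\cI^*=(\cE,w^*)$ over the same environment, where $w^*=(w_1^*,\ldots,w_n^*)$ is defined by $w_i^*:=s_i$ if $i\in\{j_1,\ldots,j_k\}$ and $w_i^*:=w_i$ otherwise. By inspection of Algorithm~\ref{alg:two_faced}, the items handed to $\ALG$ inside $\ALG'$ are exactly $P_{j_1},\ldots,P_{j_k}$ during the simulated sampling phase, followed by those items $P_i$ with $i\notin\{j_1,\ldots,j_k\}$ in the adversary's online order, each weighted by $w^*$.

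I will rely on two claims. First, I will show $\ALG'(\cI)=\ALG(\cI^*)$ pathwise. By Definition~\ref{def:order-oblivious}, $\ALG$ accepts nothing during its sampling phase, so every element it selects belongs to some item $P_i$ with $i\notin\{j_1,\ldots,j_k\}$; for such $i$ we have $w_i^*=w_i$ by construction, so the real weights that $\ALG'$ collects agree with the weights $\ALG$ believes it sees, and the two algorithms produce the same feasible set with the same total weight.

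Second, I will argue that $w^*$ has the same joint distribution as $w$. Conditional on $(f_1,f_2)$, for each $i$ the variable $s_i$ is uniform on $\{f_{i,1},f_{i,2}\}$ and $w_i$ is its complement (hence also uniform); the pairs $(s_i,w_i)$ are independent across $i$; and the permutation $j$ is drawn independently of $(s,w)$. Hence, whether or not $i$ lies in the first $k$ positions of $j$, the coordinate $w_i^*$ is uniform on $\{f_{i,1},f_{i,2}\}$, and the coordinates are mutually independent, so $w^*\stackrel{d}{=}w$ and therefore $\E{\OPT(\cI^*)}=\E{\OPT(\cI)}$. Since $\{j_1,\ldots,j_k\}$ is a uniformly random $k$-subset of $[n]$, the input fed to $\ALG$ constitutes a valid random-order instance, and order-obliviousness gives $\E{\ALG(\cI^*)\mid w^*}\geq c\cdot\OPT(\cI^*)$ for any post-sampling arrival order. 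Integrating over $w^*$ and chaining with the first claim yields
\begin{equation*}
\E{\ALG'(\cI)}=\E{\ALG(\cI^*)}\geq c\cdot\E{\OPT(\cI^*)}=c\cdot\E{\OPT(\cI)}.
\end{equation*}

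The main delicate point will be reconciling two kinds of adversarial power: the two-faced adversary may correlate the online arrival order with the realizations of $s$ and $w$, whereas the random-order guarantee of $\ALG$ concerns a uniformly random order. It is exactly the order-oblivious hypothesis that bridges this gap, upgrading the random-order competitive-ratio to a bound holding for \emph{any} post-sample arrival order, including ones chosen adaptively in the two-faced game. The complementary distributional point—that pairing items into two faces and revealing one uniformly makes the ``hidden'' and ``revealed'' halves exchangeable—is what keeps $\OPT$ undisturbed in expectation and explains why the reduction is naturally tailored to the two-faced setting with exactly two faces per item.
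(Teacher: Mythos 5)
Your proof is correct and follows essentially the same route as the paper's: your coupling variable $w^*$ is exactly the weight vector $z$ on which the paper conditions via the event $\cE_z$, and your two claims (pathwise equality of the outputs, and $w^*\stackrel{d}{=}w$ together with independence of $w^*$ from the permutation, so that the sample remains a uniformly random prefix conditionally on $w^*$) correspond precisely to the paper's computations $\Pr[\cE_z]=2^{-n}$ and $\Pr[s(J_k)=z(\pi)\mid\cE_z]=\Pr[J_k=\pi]$. The only step worth making fully explicit is that the bound $\E{\ALG(\cI^*)\mid w^*}\ge c\cdot\OPT(\cI^*)$ requires independence of $J$ from $w^*$ (not merely the unconditional uniformity of $J_k$), but that independence is immediate from your observation that the conditional law of $w^*$ given $J$ does not depend on $J$.
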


\begin{proof}
Fix $z = (z_1,\dots,z_n)$ such that $z_i \in \set{f_{i,1},f_{i,2}}$ (one arbitrary weight vector from each pair). In our proof we refer only to weight vectors. The corresponding items can be deduced from their index ($z_i$ is a weight vector of the item $P_i$). 

We consider the set of all prefixes of permutations on $[n]$, that is, $\Pi = \{ (\sigma_1,\dots, \sigma_t) : (\sigma_1,\dots,\sigma_n) \in S_n,\ 1 \leq t\leq n \}$.\footnote{$S_n$ is the set of all permutations on $[n]$.} 
Let $J_k = (j_1,\dots,j_k)$ and $J_{n-k} = (j_{k+1},\dots,j_{n})$ (where $j_1,\dots,j_k$ are the first $k$ elements of a random permutation on $[n]$, see Line~\ref{line:j_1} of Algorithm~\ref{alg:two_faced}). For $\pi = (\sigma_1,\dots,\sigma_t) \in \Pi$ and a sequence $x = (x_1,\dots,x_n)$ let $x(\pi) = x(\sigma_1,\dots,\sigma_n) = (x_{\sigma_1},\dots, x_{\sigma_t})$. 

Let $\CS{E}_z$ be the event that the input instance that $\ALG$ gets is weighted by $z=(z_1,\dots,z_n)$, that is, $s(J_k) = z(J_k)$ and $w(J_{n-k}) = z(J_{n-k})$. We have $\Pr[\CS{E}_z] = \sum_{\pi \in \Pi} \Pr\left[ \given{ \CS{E}_z }{ J_k = \pi} \right] \Pr[J_k = \pi]$. Conditioned on $J_k = \pi$, there is exactly one realization of $s = (s_1,\dots,s_n)$ for which $\CS{E}_z$ occurs. Therefore, 
$\Pr[\CS{E}_z]= \sum_{\pi \in \Pi} 2^{-n}\Pr[J_k = \pi] = 2^{-n}$.

We now show that conditioned on $\CS{E}_z$, $s(J_k)$ is distributed as the first $k$ elements in a random permutation of $z_1,\dots,z_n$. For $\pi \in \Pi$, we have
\begin{align*}
    \Pr\left[\given{s(J_k) = z(\pi)} {\CS{E}_z} \right] &= \frac{\Pr[s(J_k) = z(\pi) \land \CS{E}_z]}{\Pr[\CS{E}_z]}\\
    &=  2^n \Pr[s(J_k) = z(\pi) \land s(J_k) = z(J_k) \land w(J_{n-k}) = z(J_{n-k})] \\
    &= 2^n \Pr[J_k = \pi \land s(J_k) = z(J_k) \land w(J_{n-k}) = z(J_{n-k})]  \\
    &= 2^n \Pr\left[ \given{s(J_k) = z(J_k) \land w(J_{n-k}) = z(J_{n-k}) }{ J_k = \pi} \right] \Pr[J_k = \pi] \\ 
    &= 2^n \cdot 2^{-n} \Pr[J_k = \pi] = \Pr[J_k = \pi],
\end{align*}
where the second equality is by the definition of $\CS{E}_z$ and the fact that $\Pr[\CS{E}_z] = 2^{-n}$. The third equality is due to the fact that the events $\set{s(J_k) = z(\pi) \land s(J_k) = z(J_k)}$ and $\set{J_k = \pi \land s(J_k) = z(J_k)}$ are equivalent.

Therefore, conditioned on $\CS{E}_z$, $\ALG$ gets the input instance $\cI_z = (\cE, z)$ and its sampling phase consists of the first $k$ elements in a random permutation of $z$. Since $\ALG$ is a $c$-competitive order-oblivious algorithm, conditioned on $\CS{E}_z$,  the set it selects has expected weight of at least $c \cdot \OPT(\cI_z)$, regardless of the arrival order of the items after the sampling phase. Since $\ALG'$ (Algorithm~\ref{alg:two_faced}) accepts the same set of elements as $\ALG$, we get $\E{\given{\ALG'(\cI)}{ \CS{E}_z}} \geq c\cdot \OPT(\cI_z)$ (where $\cI = (\cE,f_1,f_2)$, and the expectation is taken over the sampling phase $s(J_k)$). By taking the expectation over all possible selections of $z$, we get that $\E{\ALG'(\cI)} \geq c \cdot \E{\OPT(\cI)}$.
\end{proof}

By Theorem~\ref{thm:order_oblivious_two_faced} together with Corollary~\ref{cor:greedy_v_a_c_r} we get the following result in the two-faced model.

\begin{corollary}
There is an algorithm with competitive-ratio of $\left(3-2\sqrt{2}\right) \approx \frac{1}{5.83}$ for the two-faced online bipartite matching with vertex arrivals.
\end{corollary}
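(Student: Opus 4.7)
The plan is to derive this corollary by a direct application of the black-box reduction (Theorem~\ref{thm:order_oblivious_two_faced}) to Algorithm~\ref{alg:greedy_v_a} tuned at $p=\sqrt{2}-1$. First I would cast edge-weighted bipartite matching with vertex arrivals as a batched online selection problem in the sense of Section~\ref{sec:batched}: take $\cU=\CS{E}$, let each item $P_u=\{e\in\CS{E}:u\in e\}$ be the bundle of edges incident to an online vertex $u\in\CS{L}$, and let $\cJ$ be the set of matchings in $\CS{G}$. With this dictionary, choosing/weighting an item exactly corresponds to revealing an online vertex together with its incident edge weights, so the two-faced model on this environment is precisely the two-faced bipartite matching with vertex arrivals.

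Next I would verify that Algorithm~\ref{alg:greedy_v_a} is an order-oblivious algorithm on this environment, in the strict sense of Definition~\ref{def:order-oblivious}: its sampling phase collects the first $k\sim\mathrm{Binom}(|\CS{L}|,p)$ arriving vertices into $\RS{L}'$ and only uses information from $\RS{L}'$ together with the currently arriving vertex to decide on candidate edges. Crucially, Theorem~\ref{thm:greedy_v_a_c_r} (and hence Corollary~\ref{cor:greedy_v_a_c_r} with $p=\sqrt{2}-1$) explicitly states that the $(3-2\sqrt{2})$-competitiveness holds for \emph{any} arrival order of the vertices in $\CS{L}\setminus\RS{L}'$ after the sampling phase, which is exactly the order-oblivious guarantee required.

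I would then plug $\ALG=$ Algorithm~\ref{alg:greedy_v_a} into Theorem~\ref{thm:order_oblivious_two_faced}, obtaining an algorithm $\ALG'$ (Algorithm~\ref{alg:two_faced}) for the two-faced batched online bipartite matching whose competitive-ratio is at least $3-2\sqrt{2}$. Since the reduction preserves the competitive ratio exactly, this yields the claimed bound of $(3-2\sqrt{2})\approx 1/5.83$.

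There is no real obstacle here beyond the bookkeeping of matching the formalisms; the only subtlety worth a sentence is to emphasize that the batched framework of Section~\ref{sec:batched} is the right level of generality for vertex arrivals (all incident edges of a vertex form one batch), so the dependencies among edges sharing an online vertex are correctly handled and the reduction does not need any independence assumption across edges within the same online vertex.
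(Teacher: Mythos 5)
Your proposal is correct and matches the paper's argument exactly: the paper derives this corollary by combining Theorem~\ref{thm:order_oblivious_two_faced} with Corollary~\ref{cor:greedy_v_a_c_r}, precisely the composition you describe. Your observation that edges incident to a common online vertex form a batch (so the batched framework is the right formalism) is also the intended reading.
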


We now proceed to show that for bipartite-matching with vertex arrivals (in contrast to the secretary problem) there is a separation between the full knowledge prophet inequality and the two-faced model. We prove an upper-bound of $2/5$ on the competitive-ratio of any algorithm in the two-faced model, whereas Feldman et al.~\cite{feldman2014combinatorial} gave an optimal $1/2$-competitive algorithm in the (full knowledge) prophet inequality setting.

\begin{theorem}
Any algorithm for the two-faced bipartite matching with vertex arrivals has a competitive-ratio at most $2/5$.
\end{theorem}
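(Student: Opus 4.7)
The plan is to exhibit an explicit two-faced instance on a small bipartite graph and argue that, on it, no online algorithm can extract more than a $2/5$ fraction of $\E{\OPT}$. The mechanism that enables this separation from the full-knowledge $1/2$ prophet bound of~\cite{feldman2014combinatorial} is that in the two-faced model the algorithm sees only the sample $s$ (not the adversary's pairs $(f_{i,1},f_{i,2})$) and the arrival order is adversarially chosen after $s$ is fixed. Thus, if two distinct face strategies induce the same sample distribution but different ``best'' matching decisions, the algorithm must commit to a decision that is wrong for at least one of them, and the adversary can order arrivals to punish the commitment.

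Concretely, I would take $\CS{R}=\{r_1,r_2\}$ and two online items $u_1,u_2$, each connected to both $r_1$ and $r_2$. The face pairs will be chosen so that one sample realization $s^\star$ is consistent with two essentially distinct online realizations $w$: in one of them $u_1$'s heavy edge is to $r_1$ and $u_2$'s heavy edge is to $r_2$ (OPT attained by a perfect matching of large total weight), while in the other both heavy edges point at the same $r_j$ (OPT attained by a single large edge). If necessary, I would take a mixture of two adversary strategies that produce the same sample distribution (invoking Yao's minimax principle), so that the algorithm cannot tailor its sample-conditioned behavior to one strategy without sacrificing performance on the other. With the adversary presenting the ``ambiguous'' item first, the algorithm's only choices are to accept its heaviest available edge or to skip; the numerical weights in the face pairs are tuned so that either choice loses a $3/5$ fraction of OPT in expectation over the two possible online realizations.

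The analysis is then a finite case check: for each sample realization, enumerate the algorithm's (at most two) possible first-move responses, let the adversary pick the worst arrival order, and upper-bound the conditional expected payoff. Summing over sample realizations yields $\E{\ALG}\le (2/5)\E{\OPT}$; since randomized algorithms are convex combinations of deterministic ones, the same bound transfers. The main obstacle is calibrating the face pairs (and possibly the mixture over adversary strategies) so that the two incompatible-decision branches each lose exactly the $3/5$ fraction needed; this is a small linear-programming exercise whose primal corresponds to the algorithm's mixed strategy per sample and whose dual corresponds to the adversary's mixture. Verifying that no richer algorithm (e.g.\ one that exploits subtle structure of $s$ beyond the ambiguous realization) can do better amounts to checking that the LP's optimum value is exactly $2/5$.
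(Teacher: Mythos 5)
There is a genuine gap here. The high-level mechanism you identify---the algorithm sees only the sample $s$, so the adversary can present instances that are indistinguishable through round $t$ yet demand incompatible commitments---is indeed the engine of the paper's proof. But your concrete construction (two online items on $\CS{R}=\{r_1,r_2\}$, with ambiguity only about where the heavy edge of the second item points) does not reach $2/5$. With only the ``blocking'' tension in play, the algorithm can hedge: if the heavy weight $\alpha$ is large it simply skips the first item and the ratio tends to $1$; if $\alpha$ is moderate, matching the first item and then taking the heavy edge when available already gives roughly $(1+\alpha/2)/(1+\alpha)\ge 1/2$. A short calculation shows this family of two-item instances caps out well above $2/5$ (around $2/3$ at the optimizing $\alpha$), and your claim that the weights can be ``tuned so that either choice loses a $3/5$ fraction'' is exactly the step that is not substantiated and, for this construction, appears false.

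The missing ingredient is a preliminary structural constraint that the paper extracts before any case analysis: append a third online vertex $z$ with all-zero weights that always arrives \emph{last}, and compare the instance to a modified one in which the sampled face of $(z,r)$ hides an arbitrarily large weight. Since the algorithm cannot distinguish the two instances until the final round, any $c$-competitive algorithm must leave \emph{each} $r\in\CS{R}$ unmatched through all earlier rounds with probability at least $c$. The $2/5$ bound then comes from playing this forced-availability constraint against the blocking tension: a deterministic weight-$1$ item $u_1$ arrives first, the ambiguous $\alpha$-item $u_2$ second (with the adversary choosing among three instances the one for which the algorithm's match of $u_1$ blocks $u_2$'s heavy edge with probability at least $1/2$), and $z$ last; combining $c(1+\alpha)\le \E{\ALG}$ with $\Pr[r_1 \text{ free before the last round}]\ge c$ and setting $\alpha=2$ yields $c\le 2/5$. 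One further technical point: because the competitive ratio is a ratio of expectations per fixed instance, you cannot simply invoke Yao's principle over a mixture of instances; the paper instead selects, for the given algorithm, the single worst instance (and in the symmetric case argues that at least one conditional ratio must itself be at least $c$). Without the $z$-gadget and this per-instance selection, your LP would not have optimum $2/5$.
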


\begin{proof}
Let $\ALG$ be a $c$-competitive algorithm. We construct several simple instances with right side that consists of two vertices $\CS{R} = \set{r_1,r_2}$, and a slightly different set of online vertices. For an edge $e$, let $f_1(e)$ and $f_2(e)$ be the two possible weights of $e$. Also, let $s(e) \in \set{f_1(e),f_2(e)}$ be the random variable that gets the weight of $e$ in the sample, and let $w(e)$ be the online weight of $e$. 

We consider instances that contain a vertex $z \in \CS{L}$ with zero edge weights ($f_1(z,r) = f_2(z,r) = 0$ for all $r \in \CS{R}$).
Consider an adversary who always reveals $z$ last in the online sequence. We first show that for any such an instance $\cI$, $\ALG(\cI)$ must leave each right side vertex $r \in \CS{R}$ unmatched until the very last online round with probability at least $c$.

To see this, fix $r \in \CS{R}$ and let $\cI$ be an instance as described above. We modify $\cI$ to $\cI'$ by changing the weight $f_1(z,r)$ to an arbitrary large value, so that the weight of any matching in $\cI'$ without $f_1(z,r)$ is negligible  compared to $f_1(z,r)$. More concretely, let $\varepsilon$ be the maximum weight of a matching in $\cI'$ without $f_1(z,r)$. Since the weight $f_1(z,r)$ arrives online with probability $1/2$ (i.e., $w(z,r) = f_1(z,r)$), we have $\E{\OPT(\cI')} \geq f_1(z,r)/2$. On the other hand,
\begin{align*}
\E{\ALG(\cI')} \leq \frac{1}{2} f_1(z,r) \cdot\Pr\left[\given{(z,r) \in \ALG }{ w(z,r) = f_1(z,r)}\right] +  \varepsilon.
\end{align*}
By combining the fact that $\ALG$ is $c$-competitive, we get that
\begin{align*}
    c \leq \frac{\E{\ALG(\cI')}}{\E{\OPT(\cI')}} \leq \Pr\left[\given{(z,r) \in \ALG }{ w(z,r) = f_1(z,r)}\right] + \frac{2\varepsilon}{f_1(z,r)}.
\end{align*}
We can choose $f_1(z,r)$ to be arbitrarily large, therefore, we obtain $\Pr[(z,r) \in \ALG \mid w(z,r) = f_1(z,r)] \geq c$. For $\tilde{\cI} \in \set{\cI, \cI'}$, let $A_{\tilde{\cI},r}$ be the event that $\ALG(\tilde{\cI})$ leaves $r$  unmatched until the last round. Since $z$ arrives last in the sequence, the event $\set{(z,r) \in \ALG }$ implies $A_{\cI',r}$. 
We have $\Pr[ A_{\cI',r} \mid w(z,r) = f_1(z,r)]  \geq c$.

Until reaching the last round, the algorithm cannot distinguish between getting the input $\cI'$ conditioned on $\set{ w(z,r) = f_1(z,r)}$, and getting the input $\cI$ (unconditioned). This is because in both cases the algorithm observes the exact same information. Therefore, 
$\Pr\left[ A_{\cI,r}\right] = \Pr\left[ \given{A_{\cI',r}}{ w(z,r) = f_1(z,r)}\right] \geq c$

We now consider instances $\cI$ with exactly three vertices in $\CS{L}$. To this end, we use the  following notation: For a vertex $u \in \CS{L}$ and $i \in \set{1,2}$, let $f_i(u) = (f_i(u,r_1), f_i(u,r_2))$. The first vertex is $z \in \CS{L}$ as before ($f_1(z) = f_2(z) = (0,0)$). For the second vertex $u_1 \in \CS{L}$, both edges have a weight of $1$ in both faces, i.e., $f_1(u_1) = f_2(u_1) = (1,1)$. The last vertex, $u_2 \in \CS{L}$, takes several forms (each form takes place in a different instance). In each face of $u_2$ ($f_1(u_2)$ and $f_2(u_2)$), it has exactly one edge of weight $\alpha > 1$ (to either $r_1$ or $r_2$) and one edge of weight $0$. That is, $f_1(u_2), f_2(u_2) \in \set{(0,\alpha), (\alpha,0)}$. Note that there are four possible choices for $u_2$ (two choices for $f_1(u_2)$ and two choices for $f_2(u_2)$), but $\set{f_1(u_2) = (0,\alpha),f_2(u_2) = (\alpha,0)}$ and $\set{f_1(u_2) = (\alpha,0),f_2(u_2) = (0,\alpha)}$ lead to the same instance. Therefore, we are left with three different instances.

We consider the arrival order where $u_1$ arrives first, $u_2$ second and $z$ arrives last. The algorithm always observes the same sample weights for $u_1$ and $z$. For $u_2$, let $s_2 = (s(u_1,r_1), s(u_1,r_2))$. The algorithm either observes $s_2 =  (0,\alpha)$ or $s_2 = (\alpha,0)$. 

When $u_1$ arrives online, the algorithm has three possible options: match $(u_1,r_1)$, match $(u_1,r_2)$ or leave $u_1$ unmatched. The decision can be done based only on the sample and the online value of the edges incident to  $u_1$. Since all the values are the same except for $s_2$ which may take two values, the algorithm may have only two different strategies. One for the case $s_2 = (\alpha,0)$ and one for $s_2 = (0,\alpha)$. Let $p_{i}$ be the probability that the algorithm matches $(u_1,r_i)$ conditioned on seeing $s_2 = (\alpha,0)$ and let $p = p_1 + p_2$. Likewise, Let $p'_{i}$ be the probability that the algorithm matches $(u_1,r_i)$ conditioned on seeing $s_2 = (0, \alpha)$ and let $p' = p'_1 + p'_2$. 

From our three possible instances, we choose an instance as follows: If $p_1 \geq p/2$ we take the instance with $f_1(u_2) = f_2(u_2) = (\alpha,0)$. If $p'_2 \geq p'/2$ (and $p_1 < p/2$) we take the instance with $f_1(u_2) = f_2(u_2) = (0,\alpha)$. Otherwise, we have $p_2 \geq p/2$ and $p'_1 \geq p'/2$. In this case we take the instance with $f_1(u_2) = (\alpha,0)$ and $f_2(u_2) = (0,\alpha)$. Roughly speaking, we choose an instance in which whenever $\ALG$ matches $u_1$, it blocks the $\alpha$-weight edge of $u_2$ with probability at least $1/2$.

We begin with the case $p_1 \geq p/2$. Let $q$ be the probability that $\ALG$ adds $(u_2,r_1)$ (the edge of weight $\alpha$) to its matching, conditioned on $r_1$ being available at the second online round. We can write $\E{\ALG} = \E{\ALG(r_1)} + \E{\ALG(r_2)}$ where $\ALG(r_i)$ is the random variable that gets the weight of the edge incident to $r_i$ in the matching produced by $\ALG$. We have $\E{\ALG(r_2)} \leq p_2$ and $\E{\ALG(r_1)} \leq p_1 \cdot 1 + (1-p_1)q \cdot \alpha$. We get that
\begin{align*}
\E{\ALG} \leq (p_1+ p_2)\cdot1  + (1-p_1)  q\cdot \alpha \leq p + (1-p/2) q \cdot \alpha.
\end{align*}
Since $\E{\OPT} \geq 1+\alpha$ and since $\ALG$ is $c$-competitive, it holds that 
\begin{align}
    p + (1-p/2) \cdot q \cdot \alpha &\geq c(1+\alpha) \notag \\ 
    q &\geq \frac{c(1+\alpha) - p}{\alpha(1-p/2)}. \label{eq:q_bound}
\end{align}
Still, the competitive-ratio is upper-bounded by the probability that $r_1$ is not matched before the last round, which is the probability that $u_1$ and $u_2$ are not matched to $r_1$. Thus 
\begin{align*}
    c &\leq 1 - (p_1 + (1-p_1)q) = 1 - p_1(1-q) - q \leq 1-p(1-q)/2 -q \\
    &= 1 - q(1-p/2) -p/2 \leq 1 - \frac{c(1+\alpha) - p}{\alpha} - p/2,
\end{align*}
where the second inequality follows from the fact that $p_1 \geq p/2$ and the last inequality is due to Inequality~\eqref{eq:q_bound}.
We get that 
\begin{align*}
    c(1 + (1+\alpha)/\alpha) &\leq (1 + p/\alpha - p/2) \\
    c &\leq \frac{ (1 + p/\alpha - p/2) }{1 + (1+\alpha)/\alpha}.
\end{align*}
For $\alpha = 2$ we get the upper-bound $c \leq 2 / 5$.

Observe that the second case ($p'_1 \geq p'/2$) is symmetric. So it remains to analyze the third case ($p_2 \geq p/2$ and $p'_1 \geq p'/2$). We have $\E{\ALG} = \frac{1}{2}\E{\ALG \mid s_2= (\alpha,0)} + \frac{1}{2} \E{\ALG \mid s_2 = (0,\alpha)}$. Since 
\begin{align*}
\frac{\E{\ALG}}{\E{\OPT}} = \frac{1}{2}\frac{\E{\ALG \mid s_2= (\alpha,0)}}{\E{\OPT}} + \frac{1}{2}\frac{\E{\ALG \mid s_2= (0,\alpha)}}{\E{\OPT}} \geq c,
\end{align*}
at least one of the terms $\E{\ALG \mid s_2= (\alpha,0)}/ \E{\OPT}$ or $\E{\ALG \mid s_2 = (0,\alpha)} / \E{\OPT}$ must have a value at least $c$. Without loss of generality, $\E{\ALG \mid s_2= (\alpha,0)}/ \E{\OPT} \geq c$. We can now analyze this term in the same way we analyzed the first case ($p_1 \geq p/2$). The only difference is that roles of $r_1$ and $r_2$ are reversed. Hence, we obtain $c \leq 2/5$.
\end{proof}
\section{Edge Arrivals in General Graphs}\label{sec:edge_general}

In the random-order online matching in general graphs with edge arrivals, an adversary chooses a graph $\CS{G} = (\CS{V},\CS{E})$ with non-negative edge weights $w : \CS{E} \rightarrow \mathbb{R}_{\geq 0}$. The cardinality of the edge set $|\CS{E}|$ is revealed to the online algorithm upfront. Then, the edges of $\CS{E}$ arrive one-by-one in a uniformly random order. When an edge $e = (u,v) \in \CS{E}$ arrives, its weight is revealed and the algorithm must either add $e$ to its output matching (subject to the constraint that the set of accepted edges are vertex-disjoint), or reject $e$. The decision is permanent and must be made before the next edge arrives.

In this section, we study the analogue of Algorithm~\ref{alg:greedy_v_a} for edge-arrival in general graphs.
\IncMargin{1em}
\begin{algorithm}
\caption{Order-Oblivious Greedy-Based Edge Arrivals in General Graphs}
\label{alg:greedy_e_a}
$k \leftarrow Binom(|\CS{E}|,p)$\;
Let $\RS{E}' \subseteq \CS{E}$ be the first $k$ edges that arrive online\tcp*{sampling phase}
$ \MOnline \leftarrow \emptyset$\;
\For {an edge $e_{\ell} = (u_\ell,v_\ell)$ that arrives at round $\ell > k$ } {
$\RS{G}_\ell \leftarrow \CS{G}[{\RS{E}' \cup \set{e_\ell}}]$\;
    $\RS{M}_\ell \leftarrow \textsc{Greedy}(\RS{G}_\ell)$\;
    \uIf{$e_\ell \in \RS{M}_\ell$} {
        \tcp{A candidate edge}
        \If {$u_\ell$ and $v_\ell$ are not matched in $\MOnline$} {
	        $ \MOnline \leftarrow \MOnline \cup \set{e_\ell}$\;
        }
    }
}
\Return{\RS{M}}
\end{algorithm}
\DecMargin{1em}
Our analysis of Algorithm~\ref{alg:greedy_e_a} shares many similarities with the analysis of Algorithm~\ref{alg:greedy_v_a} from Section~\ref{sec:vertex_bipartite}.

For a given graph $\CS{G} = (\CS{V},\CS{E})$ consider its directed line graph $\CS{G_D} = (\CS{V_D}, \CS{E_D})$ (see Definition~\ref{def:directed_line_graph}). Let $v_{e_1}, \dots, v_{e_m}$ be the nodes of $\CS{G_D}$ ordered in a non-increasing order of weight. Recall that $v_{e_1}, \dots, v_{e_m}$ is a topological ordering of $\CS{G_D}$. 

We assume that the sampling of $\RS{E'} \subseteq \CS{E}$ is done gradually by coloring the nodes $v_{e_1}, \dots, v_{e_m}$ one-by-one (in this order), each node is colored red independently with probability $p$, and blue otherwise. Then, $\RS{E'}$ is the set of edges $e$ whose node $v_e$ is red.

We define the notion of an \textit{active} node inductively: $v_{e_1}$ is always active. Given the colors of $v_{e_1},\dots,v_{e_{i-1}}$ and their active/inactive status, $v_{e_{i}}$ is active if there is no incoming arc to $v_{e_{i}}$ from an active red node. 

\begin{lemma}\label{lem:red_e_a}
$e_i \in \Greedy({\CS{G}[\REsample \cup \{e_i\}]})$ if and only if $v_{e_i}$ is active.
\end{lemma}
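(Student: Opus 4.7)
The plan is to mirror the inductive proof of Lemma~\ref{lem:red} (the vertex-arrival version), with the simplification that the coloring procedure here treats each node $v_{e_i}$ individually rather than in clusters. I would prove both implications simultaneously by induction on $i$, the rank of $e_i$ in the non-increasing weight ordering of the edges (which is also a topological ordering of $\CS{G_D}$).

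For the base case $i=1$, the node $v_{e_1}$ is active by definition, and because $e_1$ has the largest weight in the graph, it is added by $\Greedy$ on any sub-graph containing it, hence $e_1 \in \Greedy(\CS{G}[\REsample \cup \{e_1\}])$.

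For the inductive step, I would first observe the key monotonicity fact: whether a heavier edge $e_j$ (with $j<i$) belongs to $\Greedy(\CS{G}[\REsample \cup \{e_i\}])$ is determined before $e_i$ is considered in the greedy order, so it coincides with whether $e_j \in \Greedy(\CS{G}[\REsample])$, which, when $v_{e_j}$ is red (i.e.\ $e_j \in \REsample$), equals $\Greedy(\CS{G}[\REsample \cup \{e_j\}])$. By definition of $\Greedy$, $e_i$ is selected by $\Greedy(\CS{G}[\REsample \cup \{e_i\}])$ iff no heavier edge that shares a vertex with $e_i$ has already been selected. Such a heavier edge $e_j$ must lie in $\REsample$ (so that it is processed at all), which means $v_{e_j}$ is red; and by the induction hypothesis applied to $e_j$, it is selected iff $v_{e_j}$ is active. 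Finally, the conditions ``$j<i$ and $e_j \cap e_i \neq \emptyset$'' are precisely what define an incoming arc $v_{e_j}\to v_{e_i}$ in $\CS{G_D}$. Hence $e_i \in \Greedy(\CS{G}[\REsample \cup \{e_i\}])$ iff $v_{e_i}$ has no incoming arc from an active red node, which by definition is exactly the condition for $v_{e_i}$ to be active.

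The argument is essentially bookkeeping, so I do not anticipate a real obstacle; the only point requiring care is justifying that $e_i$'s lower weight implies $\Greedy(\CS{G}[\REsample \cup \{e_i\}])$ and $\Greedy(\CS{G}[\REsample])$ agree on all edges strictly heavier than $e_i$, so that the induction hypothesis (which concerns $\Greedy(\CS{G}[\REsample \cup \{e_j\}])$ rather than $\Greedy(\CS{G}[\REsample \cup \{e_i\}])$) can be invoked uniformly for the heavier edges. Unlike the bipartite-vertex setting, no cluster-coloring subtlety arises here because each node is colored independently, which makes the inductive step slightly cleaner than in Lemma~\ref{lem:red}.
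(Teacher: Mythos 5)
Your proof is correct and follows essentially the same induction as the paper's, with the welcome addition of spelling out the monotonicity step that lets you reduce ``$e_j$ is selected by $\Greedy(\CS{G}[\REsample\cup\{e_i\}])$'' (for a heavier red $e_j$) to the inductive hypothesis about $\Greedy(\CS{G}[\REsample\cup\{e_j\}])$; the paper leaves that bridge implicit. Your closing observation about the cleaner, per-node coloring compared to the cluster-based process in Lemma~\ref{lem:red} is also accurate.
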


\begin{proof}
We prove this by induction on $i$. Clearly, $e_1 \in \Greedy({\CS{G}[\REsample \cup \{e_1\}]})$ and $v_{e_1}$ is active by definition. Now $e_i = (u,w)$ is added by $\Greedy({\CS{G}[\REsample \cup \{e_i\}]})$ if and only if no heavier edge (in $e_1,\dots,e_{i-1}$) incident to $u$ or $w$ is added by $\Greedy({\CS{G}[\REsample \cup \{e_i\}]})$. By the induction hypothesis, this happens if and only if there is no incoming arc to $v_{e_i}$ from an active red node (the red nodes are $v_{e_j}$ for $e_j \in \REsample$), which by definition means that $v_{e_i}$ is active. 
\end{proof}

Next, we bound the expected performance of the algorithm in terms of the expected performance of $\Greedy$ on the random sample $\REsample$. We first express the expected performance of $\Greedy$ using the probabilities of the nodes $v_{e_1},\dots,v_{e_m}$ to be active.

\begin{lemma}\label{lem:greedy_active_prob}
    $\E{\Greedy(\CS{G}[\REsample])} = p \sum_{i=1}^{m} w(e_i) \Pr[v_{e_i}\text{ is active}].$
\end{lemma}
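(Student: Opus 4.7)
The plan is to mirror the proof of Lemma~\ref{lem:greedy_v_a_prob_active} from the vertex case, with the simplification that here each node $v_e$ corresponds to a single edge rather than a whole cluster, so the coloring is done node-by-node. The target identity will follow from (i) a characterization of when $e_i$ ends up in $\Greedy(\CS{G}[\REsample])$ in terms of the auxiliary coloring, and (ii) the independence of the color of $v_{e_i}$ from its active status.

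First, I would argue that $e_i \in \Greedy(\CS{G}[\REsample])$ if and only if $v_{e_i}$ is both active and red. For the ``only if'' direction, $e_i \in \Greedy(\CS{G}[\REsample])$ forces $e_i \in \REsample$ (so $v_{e_i}$ is red), and then $\Greedy(\CS{G}[\REsample \cup \{e_i\}]) = \Greedy(\CS{G}[\REsample])$, so Lemma~\ref{lem:red_e_a} gives that $v_{e_i}$ is active. For the ``if'' direction, $v_{e_i}$ red means $e_i \in \REsample$, so $\REsample \cup \{e_i\} = \REsample$, and Lemma~\ref{lem:red_e_a} yields $e_i \in \Greedy(\CS{G}[\REsample])$.

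Next, I would observe that in the sequential coloring process, whether $v_{e_i}$ is active is determined solely by the colors (and active/inactive statuses) of $v_{e_1}, \dots, v_{e_{i-1}}$. The independent $\text{Bernoulli}(p)$ color assigned to $v_{e_i}$ is therefore independent of the event that $v_{e_i}$ is active. Consequently,
\begin{equation*}
\Pr[v_{e_i}\text{ active and red}] = \Pr[v_{e_i}\text{ red}] \cdot \Pr[v_{e_i}\text{ active}] = p \cdot \Pr[v_{e_i}\text{ active}].
\end{equation*}

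Finally, I would combine these observations with linearity of expectation:
\begin{equation*}
\E{\Greedy(\CS{G}[\REsample])} = \sum_{i=1}^{m} w(e_i)\Pr[e_i \in \Greedy(\CS{G}[\REsample])] = p\sum_{i=1}^{m} w(e_i)\Pr[v_{e_i}\text{ is active}],
\end{equation*}
which is the claim. There is no real obstacle here beyond cleanly stating the characterization from Lemma~\ref{lem:red_e_a}; the independence step is what made the vertex-case analogue clean as well, and it carries over verbatim because the coloring is still driven only by strictly earlier nodes in the topological order.
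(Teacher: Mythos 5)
Your proposal is correct and follows essentially the same route as the paper's proof: characterize $e_i \in \Greedy(\CS{G}[\REsample])$ as the event that $v_{e_i}$ is active and red (via Lemma~\ref{lem:red_e_a} and the fact that $\REsample \cup \{e_i\} = \REsample$ when $v_{e_i}$ is red), use that the active status of $v_{e_i}$ is determined by the colors of strictly earlier nodes and is hence independent of its own color, and conclude by linearity of expectation. The only difference is that you spell out the equivalence step slightly more explicitly than the paper does.
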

\begin{proof}
By Lemma~\ref{lem:red_e_a} (together with the fact that $e_i \in \REsample$ if and only if $v_{e_i}$ is red), we have $\Pr[e_i \in \Greedy({\CS{G}[\REsample]})] = \Pr[v_{e_i}\text{ is active and red}]$. Observe that since the colors of $v_{e_1},\dots,v_{e_{i-1}}$ determine whether $v_{e_i}$ is active or not,
the coloring of $v_{e_1},\dots,v_{e_{i}}$
determine if $e_i$ is in the greedy matching.
Furthermore, each node is colored independently, and therefore the events $\set{v_{e_i} \text{ is active}}$ and  $\set{v_{e_i} \text{ is red}}$ are independent, so we have $
    \Pr[e_i \in \Greedy({\CS{G}[\RS{E'}]})] = \Pr[v_{e_i}\text{ is active}] \cdot \Pr[v_{e_i} \text{ is red}] = \Pr[v_{e_i} \text{ is active} ]   \cdot p
$,
and so
\begin{align}
    \E{\Greedy(\CS{G}[\REsample])} &= \sum_{i=1}^{m} w(e_i) \Pr[e_i \in \Greedy({\CS{G}[\REsample]})] =p \sum_{i=1}^{m} w(e_i) \Pr[v_{e_i}\text{ is active}].\qedhere
\end{align}
\end{proof}

We now analyze the performance of the algorithm. We first observe that $e_i$ is a candidate edge of Algorithm~\ref{alg:greedy_e_a} if and only if $v_{e_i}$ is active and blue: For $e_i$ to be a candidate edge we need $e_i \in \CS{E} \setminus \REsample$ and $e_i \in \Greedy({\CS{G}}[\RS{E'} \cup \set{e_i}] )$. $v_{e_i}$ is blue if and only if $e_i \in \CS{E} \setminus \REsample$ and by Lemma~\ref{lem:red_e_a}, $v_{e_i}$ is active if and only if $e_i \in \Greedy({\CS{G}}[\RS{E'} \cup \set{e_i}] )$.

To account for the contribution of $e_i$ to $\MOnline$ we define the notion of a qualifying edge. We say that $e_i$ \textit{qualifies} if $v_{e_i}$ is active, blue, and it has no outgoing arcs to other active blue nodes. When $e_i$ qualifies, we also refer to $v_{e_i}$ as a qualifying node. In our analysis, we take only the contribution of qualifying edges into account. Note that by the definition of $\CS{G_D}$, there is an arc between each pair of nodes $v_{e},v_{e'} \in \CS{V_D}(u)$. Therefore, there is at most one qualifying edge incident to $u$ (i.e., at most one qualifying node in each cluster). Also, if an edge $e$ incident to $u$ qualifies, $v_e$ must be the lowest weight active blue node in $\CS{V_D}(u)$.

By our observations above, when $e_i$ qualifies, it is a candidate edge and there are no candidate edges of smaller weight that intersect $e_i$ (a candidate edge $e_j$ of smaller weight that intersect $e_i$ corresponds to an active blue node $v_{e_j}$ with an arc $v_{e_i} \rightarrow v_{e_j}$). We get that when ${e_i}$ qualifies, it is guaranteed that either $e_i$ or a heavier edge $e_j$ that intersect $e_i$ will be added to $M$. In the latter case, we say that $e_j$ is \textit{stealing} from $e_i$. Note that if $e_j$ is stealing from $e_i$, ${e_j}$ itself is not a qualifying edge because $v_{e_j}$ has an outgoing arc to $v_{e_i}$ (which is active and blue). In addition, observe that $e_j = (u',w')$ might be stealing from at most two qualifying edges (one incident to $u'$ and one incident to $w'$). 

In lemmata~\ref{lem:direct_reimbusement},~\ref{lem:credit_reimbusement} and~\ref{lem:per-vertex-comp}, we consider three different ways to account for the stealing edges. Each way provides us with a different lower bound on the expected performance of the algorithm. 

\begin{lemma}[Direct edge reimbursement]\label{lem:direct_reimbusement} For any arrival order of the edges in $\CS{E} \setminus \REsample$,
$\E{w(\MOnline)} \geq
    \frac{p(1-p)}{2} \cdot  \E{\Greedy(\CS{G}[\REsample])} .$
\end{lemma}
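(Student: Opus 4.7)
The plan is to combine two ingredients: a deterministic lower bound $w(\MOnline) \geq \frac{1}{2}\sum_{e_i \text{ qualifies}} w(e_i)$ that holds for every realization, and a per-edge probability bound $\Pr[e_i \text{ qualifies}] \geq p^2(1-p)\Pr[v_{e_i}\text{ is active}]$. Taking expectation in the first and substituting the second, then invoking Lemma~\ref{lem:greedy_active_prob}, yields
\[
\E{w(\MOnline)} \;\geq\; \frac{1}{2}\sum_{i} w(e_i)\Pr[e_i\text{ qualifies}] \;\geq\; \frac{p^2(1-p)}{2}\sum_{i} w(e_i)\Pr[v_{e_i}\text{ is active}] \;=\; \frac{p(1-p)}{2}\,\E{\Greedy(\CS{G}[\REsample])},
\]
which is the statement of the lemma.

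For the deterministic bound, I define a map $f$ from qualifying edges to $\MOnline$: set $f(e_i) = e_i$ if $e_i \in \MOnline$, and otherwise set $f(e_i) = e_j$ for an earlier online candidate edge $e_j \in \MOnline$ that blocked $e_i$ at one of its endpoints. In either case $f(e_i)$ shares an endpoint with $e_i$ and $w(f(e_i)) \geq w(e_i)$: when $f(e_i) = e_i$ this is trivial, and when $f(e_i) = e_j \neq e_i$ the fact that $v_{e_j}$ is active blue (since $e_j$ is a candidate edge) and adjacent to $v_{e_i}$, together with the qualifying condition on $v_{e_i}$ (no outgoing arc to an active blue node), forces $w(e_j) > w(e_i)$. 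Moreover, qualifying edges form a matching, because two qualifying edges sharing an endpoint would force the heavier one's node to have an outgoing arc to the active blue node of the lighter one. Consequently each endpoint of any $e \in \MOnline$ is incident to at most one qualifying edge, so $|f^{-1}(e)| \leq 2$, and
\[
\sum_{e_i \text{ qualifies}} w(e_i) \;\leq\; \sum_{e_i \text{ qualifies}} w(f(e_i)) \;=\; \sum_{e\in\MOnline} |f^{-1}(e)|\, w(e) \;\leq\; 2\,w(\MOnline).
\]

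For the probability bound, decompose $\Pr[e_i \text{ qualifies}] = \Pr[v_{e_i}\text{ is active and blue}]\cdot q_i$, where $q_i$ is the conditional probability that $v_{e_i}$ has no outgoing arc to an active blue node. The color of $v_{e_i}$ is independent of its activeness (which depends only on colors of strictly earlier nodes), so $\Pr[v_{e_i}\text{ is active and blue}] = (1-p)\Pr[v_{e_i}\text{ is active}]$, and it suffices to prove $q_i \geq p^2$. Write $e_i = (u,w)$ and split the outgoing neighbors of $v_{e_i}$ into the disjoint parts $A_u$ and $A_w$ of later nodes incident to $u$ and to $w$, respectively. The key fact is that if the first active node at endpoint $u$ happens to be red, then every later node in $A_u$ becomes inactive (they all share $u$ with that red node and have a higher index); analogously for $w$. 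Continue the coloring past $v_{e_i}$: if no node in $A_u\cup A_w$ is ever active then $q_i=1$, so assume otherwise and let $v_{e_s}$ be the first active node in $A_u\cup A_w$, say with $s\in I_u$ (the other case is symmetric). The event that $s$ is this first index, together with the conditioning that $v_{e_i}$ is active and blue, depends only on colors of nodes with indices strictly less than $s$ and not on $c(s)$, so $c(s)$ is an independent $\mathrm{Bernoulli}(p)$ and equals red with probability $p$. When $c(s)=$ red, every later $A_u$ node becomes inactive, and we proceed past $v_{e_s}$ to locate the first active node $v_{e_{s'}}$ in $A_w$ (if any). By the same reasoning $c(s')$ is also an independent $\mathrm{Bernoulli}(p)$ and equals red with probability $p$. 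Multiplying these two independent probabilities and noting that the cases where one or both first-active nodes are absent only improve the bound gives $q_i \geq p^2$.

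The main obstacle is this last probabilistic step: the activeness patterns on the $A_u$- and $A_w$-sides are coupled through the shared coloring history, so one cannot na\"ively multiply ``$u$-side succeeds'' by ``$w$-side succeeds''. The resolution above is to expose only the two specific colors $c(s)$ and $c(s')$ that matter, using the fact that the identification of $s$ and $s'$ depends only on colors of strictly smaller indices, which keeps $c(s)$ and $c(s')$ as genuinely independent Bernoulli$(p)$ variables. Combining the deterministic and probabilistic bounds as displayed in the first paragraph completes the proof.
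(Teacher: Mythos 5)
Your proof is correct and follows essentially the same route as the paper's: your map $f$ with $|f^{-1}(e)|\leq 2$ is a formalization of the paper's device of splitting each stealing edge's weight evenly between the (at most two) qualifying edges it steals from, and your sequential-exposure argument for $q_i\geq p^2$ (revealing the colors of the first active out-neighbor at each endpoint in turn) is exactly the paper's argument for the conditional qualification probability. The explicit justification that the identification of $s$ and $s'$ depends only on strictly earlier colors, so that $c(s)$ and $c(s')$ are fresh Bernoulli$(p)$ coins, is a welcome sharpening of a step the paper treats informally.
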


\begin{proof}
When $e_j$ is stealing, we evenly split its weight between the qualifying edges that intersect $e_j$. This way, when $e_i$ qualifies it contributes at least $w(e_i)/2$ to the weight of
$\MOnline$. We have
\begin{align}
    \E{\ALG} &\geq \frac{1}{2}\sum_{i=1}^{m} w(e_i) \Pr[e_i\text{ qualifies}].\label{eq:alg_exp_1_e_a}
\end{align}

We proceed to lower bound the probability that $e_i = (u,w)$ qualifies. Consider the coloring process of the nodes until $v_{e_i}$ such that $v_{e_i}$ is active and blue ($v_{e_1},\dots,v_{e_i}$ are colored and $v_{e_{i+1}},\dots,v_{e_m}$ are still uncolored). For $e_i$ to qualify, $v_{e_i}$ must not have outgoing arcs to active blue nodes. All the outgoing arcs of $v_{e_i}$ are to uncolored nodes in $\CS{V_D}(u) \cup \CS{V_D}(w)$. If there are no uncolored nodes in $\CS{V_D}(u) \cup \CS{V_D}(w)$, then $e_i$ qualifies. Otherwise, we continue the coloring process until reaching an active node $v_{e_j}$ in $\CS{V_D}(u)$ or $\CS{V_D}(w)$ for the first time. Without loss of generality $v_{e_j} \in \CS{V_D}(u)$. If $v_{e_j}$ is colored red, which happens independently with probability $p$, all future nodes in $\CS{V_D}(u)$ will be inactive (as $v_{e_j}$ has an outgoing arc to each one of them), and there will be no outgoing arcs between $v_{e_i}$ to active blue nodes of smaller weight in $\CS{V_D}(u)$. Conditioned on the event that $v_{e_j}$ is red, if there are no uncolored nodes in $\CS{V_D}(w)$, $e_i$ qualifies. Otherwise, we continue the coloring process until reaching the first active node $v_{e_k}$ in $\CS{V_D}(w)$. Once again, $v_{e_k}$ will be colored red independently with probability $p$, and then $e_i$ will qualify. We get that conditioned on the event that $v_{e_i}$ is active and blue, $e_i$ qualifies with probability at least $p^2$. To sum up,
\begin{align}
\begin{split}
    \Pr[e_i\text{ qualifies}] &=  \Pr\left[v_{e_i}\text{ is active and blue}\right] \cdot \Pr\left[\given{{e_i} \text{ qualifies}}{ v_{e_i} \text{ is active and blue}}\right] \\
    &\geq \Pr[v_{e_i}\text{ is active}] \cdot (1-p) p^2.
\end{split}
\label{eq:qualifies_prob_e_a}
\end{align}
By replacing Inequality~\eqref{eq:qualifies_prob_e_a} in Inequality~\eqref{eq:alg_exp_1_e_a}, and by Lemma~\ref{lem:greedy_active_prob}, we get that
\begin{align}
    \frac{\E{w(\MOnline)}}{\E{\Greedy(\CS{G}[\REsample])}} &\geq
    \frac{ \frac{(1-p) p^2}{2} \sum_{e \in \CS{E}}  w(e)\Pr[v_{e}\text{ is active}] }{p \sum_{e \in \CS{E}} w(e)  \Pr[v_{e}\text{ is active}]} =
    \frac{p(1-p)}{2}. \qedhere
\end{align}
\end{proof}

\begin{lemma}[Credit edge reimbursement]\label{lem:credit_reimbusement} For any arrival order of the edges in $\CS{E} \setminus \REsample$,
$\E{w(\MOnline)} \geq
 \frac{(1-p)(2p-1)}{p} \E{\Greedy(\CS{G}[\REsample])}$.
\end{lemma}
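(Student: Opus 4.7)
I would improve upon the direct reimbursement (Lemma~\ref{lem:direct_reimbusement}) by switching from a charge of $w(e_i)/2$ per qualifying edge to a \emph{full}-weight charge $w(e_i)$, and then separately accounting for the over-counting that occurs when a single stealing edge $e_j$ simultaneously replaces two distinct qualifying edges $e_i,e_k$---one at each of its endpoints. Whenever a qualifying $e_i$ is replaced in $\MOnline$ by a heavier stealer $e_j$ the algorithm gains at least $w(e_i)$, so charging each qualifying $e_i$ its full weight gives
\[
\E{w(\MOnline)} \;\geq\; \sum_i w(e_i)\,\Pr[e_i\text{ qualifies}] \;-\; \E{\Delta},
\]
where $\Delta$ captures only the over-counting in the double-steal case. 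Because $w(e_j)\geq\max(w(e_i),w(e_k))$, the over-charge incurred in each double-steal is at most $\min(w(e_i),w(e_k))\leq \tfrac{1}{2}(w(e_i)+w(e_k))$; splitting this symmetrically between $e_i$ and $e_k$ yields
\[
\E{\Delta} \;\leq\; \tfrac{1}{2}\sum_i w(e_i)\,\Pr[e_i \text{ is double-stolen}],
\]
where \emph{double-stolen} means $e_i$ qualifies and its stealer $e_j$ additionally replaces another distinct qualifying edge at $e_j$'s opposite endpoint.

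The key probabilistic step is to bound the double-stolen probability using the coloring-process machinery that already gave $\Pr[e_i\text{ qualifies}]\geq(1-p)p^2\,\Pr[v_{e_i}\text{ active}]$ in Lemma~\ref{lem:direct_reimbusement}. Conditioning on $v_{e_i}$ being active and blue, I would trace the coloring further into its heavier-intersecting predecessors: being double-stolen requires an active-blue predecessor $v_{e_j}$ whose opposite endpoint hosts another active-blue qualifying node. Peeling nodes in decreasing weight order and exploiting that the coloring of yet-unreached nodes is i.i.d.\ and independent of the status of earlier-processed ones, I would aim to prove the per-edge bound
\[
\Pr[e_i\text{ qualifies}] - \tfrac{1}{2}\Pr[e_i\text{ is double-stolen}] \;\geq\; (1-p)(2p-1)\,\Pr[v_{e_i}\text{ active}].
\]
Summing over $e_i$ and invoking Lemma~\ref{lem:greedy_active_prob} to rewrite the right-hand side as $\tfrac{(1-p)(2p-1)}{p}\E{\Greedy(\CS{G}[\REsample])}$ then delivers the claimed inequality.

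The main obstacle is this last probability estimate. The double-stolen event couples the predecessor-neighborhood of $v_{e_i}$ in $\CS{G_D}$ (for the existence of the stealer) with a disjoint successor-neighborhood of that stealer (for the second victim's qualifying status), spanning several distinct clusters. Carrying out the peeling cleanly and without introducing spurious correlations between coloring events in these distinct regions is the delicate part; I expect the argument to rest critically on the fact that conditional on the active/blue status of the already-processed nodes, the coloring of every yet-unreached node remains an independent $\mathrm{Bernoulli}(p)$.
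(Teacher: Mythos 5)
Your deterministic accounting is sound and in fact slightly tighter than the paper's: you charge each double-stealing edge only $\min(w(e_i),w(e_k))\le\frac12(w(e_i)+w(e_k))$ of over-count, whereas the paper charges the full $w(e_j)$. But the proof is incomplete at exactly the step you flag as the obstacle, and that step is not a routine "peeling" exercise -- it is where your route genuinely diverges from a workable one. Two problems. First, "$e_i$ is double-stolen" is not a function of the coloring alone: which heavier edge ends up in $\MOnline$ blocking $e_i$ depends on the adversarial arrival order, so to prove the lemma for every order you must first replace this event by an order-independent, coloring-based surrogate; you never do. Second, and more seriously, your surrogate would have to be bounded by conditioning on $v_{e_i}$ being active and then reasoning about \emph{heavier} nodes, i.e.\ nodes already processed by the coloring. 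Conditioning on $v_{e_i}$ active forces every active in-neighbor of $v_{e_i}$ to be blue (an active red in-neighbor would have deactivated $v_{e_i}$), so the existence of an active blue stealer candidate $v_{e_j}$ costs you no fresh $(1-p)$ factor, and the second victim's qualifying status lives in yet another cluster whose coloring is entangled with the conditioning. Nothing in your sketch shows the resulting per-edge inequality $\Pr[e_i\text{ qualifies}]-\frac12\Pr[e_i\text{ double-stolen}]\ge(1-p)(2p-1)\Pr[v_{e_i}\text{ active}]$, and it is far from obvious: when $\Pr[e_i\text{ qualifies}]$ sits near its lower bound $(1-p)p^2\Pr[v_{e_i}\text{ active}]$ you would need $\Pr[e_i\text{ double-stolen}]\le 2(1-p)^3\Pr[v_{e_i}\text{ active}]$, which does not follow from anything you have established.

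The paper sidesteps all of this by putting the correction term on the \emph{stealer} rather than on the victims. It calls $e_j=(u,w)$ \emph{penalized} if $v_{e_j}$ is active, blue, and has arcs to two distinct qualifying nodes, one in $\CS{V_D}(u)\setminus\CS{V_D}(w)$ and one in $\CS{V_D}(w)\setminus\CS{V_D}(u)$. This event is purely coloring-based (hence order-independent), any edge that steals from two qualifying edges is penalized, and redistributing $2w(e_j)$ (its matching weight plus its penalty) to its two victims pays each at least $w(e_j)\ge w(e_i)$. Crucially, $\Pr[e_j\text{ penalized}]$ is bounded by running the coloring process \emph{forward} from $v_{e_j}$: the first active node reached in $\CS{V_D}(u)\setminus\CS{V_D}(w)$ and then the first active node reached in $\CS{V_D}(w)\setminus\CS{V_D}(u)$ must each independently be blue, giving $\Pr[e_j\text{ penalized}]\le(1-p)^3\Pr[v_{e_j}\text{ active}]$, and then $\bigl((1-p)p^2-(1-p)^3\bigr)/p=(1-p)(2p-1)/p$ after dividing by Lemma~\ref{lem:greedy_active_prob}. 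If you want to salvage your victim-side accounting you would need to carry out the backward conditional estimate you describe, which I do not see how to do cleanly; the forward, stealer-side charge is the missing idea.
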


\begin{proof}
First recall that for $e_j = (u,w)$, $v_{e_j}$ has outgoing arcs only to nodes in $\CS{V_D}(u) \cup \CS{V_D}(w)$, and that there is at most one qualifying node in each cluster. Thus, $v_{e_j}$ may have outgoing arcs to at most two qualifying nodes: one in $\CS{V_D}(u) \setminus \CS{V_D}(w)$ and one in $\CS{V_D}(w) \setminus \CS{V_D}(u)$. We say that $e_j = (u,w)$ is \textit{penalized} if $v_{e_j}$ is active and blue, and it has outgoing arcs to two distinct qualifying nodes $v_{e_i} \in \CS{V_D}(u) \setminus \CS{V_D}(w)$ and $v_{e_k} \in \CS{V_D}(w) \setminus \CS{V_D}(u)$.

We associate each edge $e_i$ with a budget. If $e_i$ qualifies we deposit a value of $w(e_i)$ to its budget, and if $e_i$ is penalized we charge $w(e_i)$ from its budget.
We now argue that the total amount of weight we deposit minus the total amount of weight we charge is no more than the weight of the output matching.

Fix a coloring of the nodes and the output matching $\MOnline$. We need to show that 
\begin{align}
\sum_{i=1}^{m} w(e_i) \cdot \mathds{1}_{\{e_i\text{ qualifies}\}} \leq w(M) + \sum_{i=1}^{m}  w(e_i)\cdot \mathds{1}_{\{e_i\text{ is penalized}\}}.\label{eq:redistribute}
\end{align}
To this end, we show a way to redistribute the weights on the left hand side of Inequality~\eqref{eq:redistribute}, i.e., $w(M) + \sum_{i=1}^{m}  w(e_i)\cdot \mathds{1}_{\{e_i\text{ is penalized}\}}$, and assign them to the qualifying edges so that each qualifying edge $e_i$ gets at least $w(e_i)$. We redistribute the weights as follows: we split the weight of each stealing edge $e_j \in M$ and its penalty, $w(e_j) \cdot \mathds{1}_{\{e_i\text{ is penalized}\}}$, equally between the qualifying edges that $e_j$ steals from.

Now, consider a qualifying edge $e_i$. As discussed before, when $e_i$ qualifies, either $e_i \in M$, or a non-qualifying heavier edge $e_j$ that intersects $e_i$ is stealing $e_i$ (i.e., $e_j \in M$ instead of $e_i$). Also, recall that stealing edges do not qualify, and therefore the weight of qualifying edges is not redistributed.

If $e_i \in M$, it gets a weight of $w(e_i)$ from $w(M)$. If $e_i \notin M$, then there is a non-qualifying edge $e_j$ that steals $e_i$. Recall that $e_j$ may steal at most two qualifying edges, hence either $e_j$ steals only $e_i$, or $e_j$ steals another qualifying edge $e_k \neq e_i$. In the former case, $e_i$ gets the weight of $e_j$, and $w(e_j) \geq w(e_i)$. In the latter case, $e_j$ is penalized. Hence, a weight of $2 w(e_j)$ (its weight in $M$ and its penalty) is redistributed from $e_j$ and equally split between $e_i$ and $e_k$. Thus, $e_i$ gets a weight of $w(e_j) \geq w(e_i)$.

Overall, we get that $w(M) \geq \sum_{i=1}^{m} w(e_i) \cdot \mathds{1}_{\{e_i\text{ qualifies}\}} - w(e_i)\cdot \mathds{1}_{\{e_i\text{ is penalized}\}}$, and so
\begin{align}
    \E{w(\MOnline)} \geq \sum_{i=1}^{m} w(e_i) \Pr[e_i\text{ qualifies}] - w(e_i)\Pr[e_i\text{ is penalized}].\label{eq:alg_exp_2_e_a}
\end{align}

We now upper bound the probability that $e_j = (u,w)$ is penalized. Consider the coloring process of the nodes until reaching $v_{e_j}$ such that $v_{e_j}$ is blue and active. Now let the coloring process continue until reaching the first active node $v_{e_{t}}$ in $\CS{V_D}(u) \triangle \CS{V_D}(w)$.\footnote{$\CS{V_D}(u) \triangle \CS{V_D}(w) = (\CS{V_D}(u) \cup \CS{V_D}(w))  \setminus \left(\CS{V_D}(u) \cap \CS{V_D}(w) \right) $ is the symmetric difference of the sets.} If there is no such node, then there is at most one qualifying node in $\CS{V_D}(u) \cup \CS{V_D}(w)$, and thus ${e_j}$ is not penalized.\footnote{In this case, only an edge parallel to $e_j$ may qualify.} Without loss of generality, we assume that $v_{e_{t}} \in \CS{V_D}(u) \setminus \CS{V_D}(w)$. To have a qualifying node in $\CS{V_D}(u)$, $v_{e_{t}}$ must be colored blue, which happens independently with probability $(1-p)$. 

We further continue the coloring process until reaching an active node $v_{e_{s}} \in \CS{V_D}(u) \setminus \CS{V_D}(w)$. Once again, if there is no such node, there may be at most one qualifying node in $\CS{V_D}(u) \cup \CS{V_D}(w)$, and thus $e_j$ is not penalized. To have a qualifying node in $\CS{V_D}(w)$, $v_{e_s}$ must be colored blue, which happens independently with probability $(1-p)$.

To conclude, we get that conditioned on the event that $v_{e_j}$ is active and blue, it is penalized with probability at most $(1-p)^2$. Thus,
\begin{align}
\begin{split}
    \Pr[e_j\text{ is penalized}] &=  \Pr[v_{e_j}\text{ is active and blue}] \cdot \Pr[{e_j} \text{ is penalized} \mid v_{e_j} \text{ is active and blue}] \\
    &\leq \Pr[v_{e_i}\text{ is active}] \cdot (1-p)^3.
\end{split}
\label{eq:penalzied_prob}
\end{align}
By replacing Inequality~\eqref{eq:penalzied_prob} in~\eqref{eq:alg_exp_2_e_a}, and using Inequality~\eqref{eq:qualifies_prob_e_a} from Lemma~\ref{lem:direct_reimbusement}, we get that
\begin{align}
    \frac{\E{w(\MOnline)}}{\E{\Greedy(\CS{G}[\REsample])}} &\geq
    \frac{\left((1-p)p^2 - (1-p)^3\right) \sum_{e \in \CS{E}}  w(e)\Pr[v_{e}\text{ is active}] }{p \sum_{e \in \CS{E}} w(e)  \Pr[v_{e}\text{ is active}]} =
    \frac{(1-p)(2p-1)}{p}. \qedhere
\end{align}
\end{proof}

\begin{lemma}[Vertex reimbursement]\label{lem:per-vertex-comp} For any arrival order of the edges in $\CS{E} \setminus \REsample$,
$\E{w(\MOnline)} \geq p^2(1-p) \E{\Greedy(\CS{G}[\REsample])}$.
\end{lemma}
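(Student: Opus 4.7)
The approach is a per-vertex reimbursement, in contrast to the per-edge reimbursements used in Lemmas~\ref{lem:direct_reimbusement} and~\ref{lem:credit_reimbusement}. For each vertex $u \in \CS{V}$, let $X_u$ denote the weight of the unique edge in $\MOnline$ incident to $u$ (and $0$ if $u$ is unmatched), so that $\sum_{u \in \CS{V}} X_u = 2\,w(\MOnline)$, and thus $\E{w(\MOnline)} = \tfrac{1}{2}\sum_u \E{X_u}$.

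I introduce a per-vertex notion of qualifying: a node $v_{e_i} \in \CS{V_D}(u)$ is \emph{$u$-qualifying} if it is active, blue, and is the lowest-weight active blue node in $\CS{V_D}(u)$ (equivalently, it has no outgoing arc within $\CS{V_D}(u)$ to another active blue node). At most one edge incident to $u$ is $u$-qualifying; when it exists, denote it $e^*(u)$ and set $Q_u = w(e^*(u))$, otherwise $Q_u = 0$. Adapting the coloring argument that gives Inequality~\eqref{eq:qualifies_prob_e_a}, but requiring only that no lighter active blue node lie in $\CS{V_D}(u)$ (rather than in all of $\CS{V_D}(u) \cup \CS{V_D}(w)$), yields
\[
\Pr\bigl[v_{e_i}\text{ is }u\text{-qualifying}\bigr] \;\geq\; (1-p)\,p\,\Pr[v_{e_i}\text{ is active}].
\]

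The heart of the proof is the per-vertex bound
\[
\E{X_u} \;\geq\; p^2\,\E{Q_u} \qquad \text{for every } u \in \CS{V}.
\]
Once this is established, summing over $u \in \CS{V}$ (each edge is incident to two vertices so $\sum_u \sum_{e_i \ni u} = 2\sum_{e_i}$) and invoking Lemma~\ref{lem:greedy_active_prob} gives
\[
2\,\E{w(\MOnline)} \;=\; \sum_u \E{X_u} \;\geq\; p^2(1-p)\,p \sum_{u}\sum_{e_i \ni u} w(e_i)\Pr[v_{e_i}\text{ active}] \;=\; 2\,p^2(1-p)\,\E{\Greedy(\CS{G}[\REsample])},
\]
which is the claim of the lemma.

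To establish $\E{X_u} \geq p^2\,\E{Q_u}$, I fix $u$, condition on $e^*(u) = (u, w)$ being $u$-qualifying, and aim to show that under this conditioning $u$ is matched in $\MOnline$ to an edge of weight at least $w(e^*(u))$ with probability at least $p^2$. Note that if any candidate incident to $u$ ends up in $\MOnline$, then $X_u \geq w(e^*(u))$, since $e^*(u)$ is the lightest candidate incident to $u$; so it suffices to rule out the adversarial scenario in which every candidate incident to $u$ gets stolen by a heavier candidate incident only to $w$ (or to the other endpoint of a heavier candidate incident to $u$). The plan is to continue the coloring process on $\CS{V_D}(w) \setminus \CS{V_D}(u)$ past $v_{e^*(u)}$ and identify two (conditionally) independent coloring events, each of probability at least $p$: one concerns the first active node lighter than $v_{e^*(u)}$ in $\CS{V_D}(w)\setminus\CS{V_D}(u)$ being red, which prevents lighter $w$-only stealing of $e^*(u)$; the other concerns an analogous red coloring that prevents heavier $w$-only candidates in $\CS{V_D}(w) \setminus \CS{V_D}(u)$ from simultaneously blocking both $e^*(u)$ and every heavier candidate incident to $u$.

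The main obstacle is making these two coloring events genuinely independent (given the $u$-qualifying conditioning), and verifying that their joint occurrence forces $u$ to be matched with weight at least $w(e^*(u))$ regardless of the adversarial order on $\CS{E} \setminus \REsample$. This requires a careful case analysis distinguishing whether $e^*(u)$ itself enters $\MOnline$ or whether a heavier candidate incident to $u$ does, as well as tracking which arcs between $\CS{V_D}(u)$ and $\CS{V_D}(w)$ are decisive for each case.
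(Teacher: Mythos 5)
There is a genuine gap: the central per-vertex claim $\E{X_u} \geq p^2\,\E{Q_u}$, with $X_u$ the \emph{actual} weight matched to $u$ in $\MOnline$, is false. Take the two-edge path $u - w - x$ with $w(u,w)=1$ and $w(w,x)=2$. The node $v_{(u,w)}$ is active only if $v_{(w,x)}$ is blue, so conditioned on $(u,w)$ being $u$-qualifying, the edge $(w,x)$ is deterministically a candidate; the adversary reveals it first, it enters $\MOnline$, and $(u,w)$ is blocked. Hence $\E{X_u}=0$ while $\E{Q_u}=(1-p)^2>0$. This also pinpoints why your plan for the ``second coloring event'' cannot work: the nodes that decide whether a heavier $w$-only candidate exists are colored \emph{before} $v_{e^*(u)}$ in the process, so conditioning on $v_{e^*(u)}$ being active already fixes them, and there is no fresh independent probability-$p$ coin left to ``prevent'' such a stealer. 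A vertex can be starved with conditional probability $1$ even when it has a candidate, so no bound of the form $\E{X_u}\geq c\,\E{Q_u}$ with $c>0$ can hold vertex by vertex.

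The paper's proof sidesteps exactly this. It works with the \emph{heaviest} active node $v_{e_i}$ of $\CS{V_D}(u)$ (not the lightest active blue one) and with the global qualifying event, and it bounds a \emph{credit} $a_u$ rather than the matched weight at $u$: when $e_i=(u,w)$ qualifies but is stolen by a heavier matched edge $e_j$, the amount $w(e_j)/2\geq w(e_i)/2$ is charged to $a_u$ even though $e_j$ is not incident to $u$ and $u$ may stay unmatched (in the path example, $a_u=w(w,x)/2$). The choice of the heaviest active node is what makes the charging well-defined: the stealer must be heavier, hence cannot lie in $\CS{V_D}(u)$, so each matched edge reimburses at most two vertices and $\sum_u a_u\leq w(\MOnline)$. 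The benchmark is handled symmetrically via $g_u\leq w_u/2$. If you repair your argument by crediting $u$ for the stealing edge instead of insisting that $u$ itself be matched, you essentially recover the paper's proof.
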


\begin{proof}
For $\Greedy$, we evenly split the weight of each edge in the matching between its two endpoints: For a vertex $u \in \CS{V}$, let $g_u$ be the random variable that gets half the weight of the edge incident to $u$ in $\Greedy(\CS{G}[\REsample])$ (and $0$ if there is no such edge). We also define the random variable $a_u$ for the weight gained by the algorithm from the vertex $u$. $a_u$ gets a non-zero value only when the heaviest active node in $\CS{V_D}(u)$ qualifies. We now upper bound $g_u$ and lower bound $a_u$.

Fix a vertex $u$ and let $w_{u}$ be the random variable that gets the weight of the heaviest active node $v_{e_i} \in \CS{V_D}(u)$ (which is also the first active node from $\CS{V_D}(u)$ in the sequence $v_{e_1},\dots,v_{e_m}$). Only the colors of $v_{e_1},\dots,v_{e_{i-1}}$ determine whether $v_{e_i}$ is the heaviest active node in $\CS{V_D}(u)$. Conditioned on $v_{e_i}$ being the heaviest active node in $\CS{V_D}(u)$, since greedy matches to $u$ the edge with the heaviest active red node, we have $g_u \leq w(e_i) /2$. Therefore, $\E{g_u} \leq \E{w_u / 2}$.

We now define $a_u$. We argued before (Inequaliy~\eqref{eq:qualifies_prob_e_a}) that $e_i$ qualifies with probability at least $p^2(1-p)$. We show that when it qualifies, we can associate a weight of at least $w(e_i)/2$ with $u$. When $e_i$ qualifies it guarantees that either $e_i = (u,w)$ is added to $\MOnline$ or a heavier non-qualifying edge $e_j$ steals $e_i$. In the former case, only $u$ and $w$ are involved and even if $v_{e_i}$ is also the heaviest active node in $\CS{V_D}(w)$, we can evenly split the weight of $e_i$ between $a_u$ and $a_w$. In the latter case, since $v_{e_i}$ is the heaviest active node in $\CS{V_D}(u)$, $v_{e_j}$ cannot be in $\CS{V_D}(u)$. Therefore, $v_{e_j} \in \CS{V_D}(w)$ (as a stealing edge must intersect $e_i$). So $e_j = (w,x)$ and it may steal from another qualifying edge $e_k = (x,y)$ which may be the heaviest active edge in $\CS{V_D}(y)$. Hence, we can evenly split the weight of $e_j$ between $a_u$ and $a_y$. In any case, when $e_i$ qualifies, $a_u \geq w(e_i)/2$. Therefore $\E{a_u} \geq p^2(1-p) \cdot \E{ w_u / 2}$. 

To sum up, we have
\begin{align}
    \frac{\E{w(\MOnline)}}{\E{\Greedy(\CS{G}[\REsample])}} &\geq
    \frac{\sum_{u \in \CS{V}}{ \E{a_u} }}
    {\sum_{u \in \CS{V}}{ \E{g_u} }}
    \geq
    \frac{\sum_{u \in \CS{V}}{ p^2(1-p)\E{ \frac{w_u}{2}} }}
    {\sum_{u \in \CS{V}}{ \E{  \frac{w_u}{2} } }} = p^2(1-p). \qedhere
\end{align}
\end{proof}

By Lemma~\ref{lem:direct_reimbusement}, Lemma~\ref{lem:credit_reimbusement} and Lemma~\ref{lem:per-vertex-comp}, we get the following corollary.
\begin{corollary} \label{cor:edge_against_greedy}
For any arrival order of the edges in $\CS{E} \setminus \REsample$
\begin{align*}
    \E{\ALG} \geq \max\left\{\frac{p(1-p)}{2}, p^2(1-p), \frac{(1-p)(2p-1)}{p} \right\} \E{\Greedy(\CS{G}[\REsample])}.
\end{align*}
\end{corollary}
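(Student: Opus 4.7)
The plan is essentially a one-line assembly of the three preceding lemmas, so the main work has already been completed before we reach the corollary. Lemma~\ref{lem:direct_reimbusement}, Lemma~\ref{lem:credit_reimbusement}, and Lemma~\ref{lem:per-vertex-comp} each establish, for an arbitrary arrival order of the edges in $\CS{E} \setminus \REsample$, a lower bound on $\E{\ALG}$ in terms of $\E{\Greedy(\CS{G}[\REsample])}$, namely $\tfrac{p(1-p)}{2}$, $\tfrac{(1-p)(2p-1)}{p}$, and $p^2(1-p)$ respectively. Since all three inequalities hold simultaneously on the same input and the same arrival order, the pointwise maximum of the three coefficients is also a valid lower bound.

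Concretely, my proof would be: fix an arrival order of $\CS{E} \setminus \REsample$. Invoke each of the three lemmas to obtain the three inequalities
\[
\E{\ALG} \geq \tfrac{p(1-p)}{2}\,\E{\Greedy(\CS{G}[\REsample])},\quad
\E{\ALG} \geq p^2(1-p)\,\E{\Greedy(\CS{G}[\REsample])},\quad
\E{\ALG} \geq \tfrac{(1-p)(2p-1)}{p}\,\E{\Greedy(\CS{G}[\REsample])}.
\]
Taking the maximum on the right-hand side yields the claimed bound. There is no obstacle to overcome at this step; the three charging schemes — splitting each stealing edge's weight equally among the qualifying edges it blocks, adding penalties for doubly-stealing edges and amortizing, and an edge-endpoint splitting that compares to half the weight of the heaviest active node in each cluster — have already done the entire combinatorial work.

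The reason all three bounds are kept is that each dominates in a different regime of $p$: the vertex-reimbursement bound $p^2(1-p)$ is strongest for small $p$, the direct-reimbursement bound $p(1-p)/2$ is strongest in an intermediate range, and the credit bound $(1-p)(2p-1)/p$, which is only positive for $p > 1/2$, is strongest for $p$ close to $1$. Keeping all three in the maximum is what will later allow one to optimize over $p$ and obtain the $\tfrac{1}{11.66}$-type guarantees for the downstream applications (two-faced model, $\AOS p$, single-sample prophet inequality) that the paper announces for edge arrivals in general graphs.
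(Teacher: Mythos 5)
Your proof is correct and is exactly the paper's argument: the corollary is stated with no separate proof precisely because it follows by invoking the three reimbursement lemmas for the same (arbitrary) arrival order and taking the maximum of the three coefficients. One immaterial slip in your commentary: the regimes for the first two bounds are swapped --- the direct bound $p(1-p)/2$ dominates for $p\le 1/2$ and the vertex bound $p^2(1-p)$ only takes over on $1/2<p\le(\sqrt{5}-1)/2$ (since $p^2(1-p)\ge p(1-p)/2$ iff $p\ge 1/2$) --- but this does not affect the validity of the corollary.
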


Next, we establish a tight bound on the expected approximation ratio of $\Greedy$ on a random sample of edges, which is tight even for bipartite-graphs.

\begin{lemma}\label{lem:min_p_1/2}
Let $\RS{E}' \subseteq \CS{E}$ such that each $e \in \CS{E}$ is in $\RS{E}'$ independently with probability $p$. Then $\E{\Greedy(\CS{G}[\RS{E}'])} \geq \min\set{p,1/2} \OPT$.  Moreover, there is a sequence of (bipartite) graphs $\CS{G}_1,\CS{G}_2, \dots$ on which $\E{\Greedy(\CS{G}_k[\RS{E}'])}/ \OPT$ approaches $\min\set{p,1/2}$ as $k \rightarrow{\infty}$.
\end{lemma}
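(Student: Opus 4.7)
The plan is to adapt the weight-splitting/charging strategy from the proof of Lemma~\ref{lem:p/(1+p)} to the symmetric (non-bipartite) edge-sampling setting. Fix a maximum matching $M^*$ of $\CS{G}$. For any matching $M$, split each edge's weight evenly between its endpoints: for $e=(u,v)\in M$ set $c(M,u)=c(M,v)=w(e)/2$, and $c(M,z)=0$ otherwise, so that $w(M)=\sum_z c(M,z)$. The aim is to lower bound $\E{c(M_{E'},u)+c(M_{E'},v)}$ for each $e=(u,v)\in M^*$ (where $M_{E'}=\Greedy(\CS{G}[\RS{E}'])$) and then sum over $M^*$.

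The central structural observation is that because $\Greedy$ processes edges in strictly decreasing weight order, the events $B_u=\{\exists\,e_u\in M_{E'}\text{ incident to }u \text{ with }w(e_u)>w(e)\}$ and the symmetric $B_v$ depend only on the sampling of edges strictly heavier than $e$, and hence are independent of $\mathbf{1}[e\in\RS{E}']$. Write $b=\Pr[B_u\wedge B_v]$, $c_1=\Pr[B_u\wedge\neg B_v]$, $c_2=\Pr[\neg B_u\wedge B_v]$, $d=\Pr[\neg B_u\wedge\neg B_v]$, with $b+c_1+c_2+d=1$. A case analysis yields: under $B_u\wedge B_v$ both endpoints of $e$ are matched to strictly heavier edges, contributing at least $w(e)$; under a one-sided event such as $B_u\wedge\neg B_v$ the matched endpoint contributes at least $w(e)/2$; and under $\neg B_u\wedge\neg B_v$, if moreover $e\in\RS{E}'$ (independent, probability $p$) then greedy adds $e$ giving $w(e)$, and otherwise we lower-bound by $0$. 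Combining,
\[
\E{c(M_{E'},u)+c(M_{E'},v)}\;\geq\; w(e)\cdot\Bigl(b+\tfrac{c_1+c_2}{2}+p\,d\Bigr),
\]
and minimizing the right-hand side over the simplex $b+c_1+c_2+d=1$ gives exactly $\min\{p,1/2\}$: for $p\leq 1/2$ the minimum sits on $d=1$ (value $p$), and for $p\geq 1/2$ on $c_1+c_2=1$ (value $1/2$). Summing over $e\in M^*$, whose endpoints are disjoint, yields $\E{w(M_{E'})}\geq\min\{p,1/2\}\cdot\OPT$.

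For the matching upper bound I exhibit a sequence of graphs realizing the ratio. When $p\leq 1/2$, a single-edge graph attains ratio $p$ exactly: the edge is sampled with probability $p$, and greedy returns it precisely then. When $p\geq 1/2$ I use as a building block the four-vertex path with edges $(v_1,v_2), (v_2,v_3), (v_3,v_4)$ of weights $1, 1+\epsilon, 1$, which is the classical tight instance for $\Greedy$ achieving ratio $1/2$ at $p=1$; to realize the tight minimization case $c_1+c_2\to 1$ for intermediate $p\in(1/2,1)$, I augment each gadget with many redundant heavier blockers at a single endpoint of the middle edge, together with companion edges preventing those blockers from entering $\OPT$, and take $k$ disjoint copies so that the ratio approaches $1/2$ as $k\to\infty$. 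The main obstacle is precisely this construction for $p$ strictly inside $(1/2,1)$: the lower bound is clean once one isolates the independence of $B_u,B_v$ from $\mathbf{1}[e\in\RS{E}']$, and the boundary cases $p\leq 1/2$ and $p=1$ are standard, but for intermediate $p$ the gadget must simultaneously ensure (i) heavier blockers appear in greedy with probability tending to $1$, (ii) the blockers do not enter $\OPT$, and (iii) the companion edges enforcing (ii) do not inflate greedy's matching beyond the $w(e)/2$ per $\OPT$-edge allowance produced by the lower bound analysis, and balancing these three constraints is the delicate step.
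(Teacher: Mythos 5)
Your lower bound is correct and is essentially the paper's argument: the paper also splits each matched edge's weight evenly between its endpoints, and for each $e=(u,v)\in M^*$ compares the event that some edge incident to $u$ or $v$ of weight $\geq w(e)$ enters the greedy matching (contributing $\geq w(e)/2$) with the complementary event, under which $e$ itself is added whenever it is sampled. Your four-way decomposition into $B_u\wedge B_v$, the two one-sided events, and $\neg B_u\wedge\neg B_v$ is a slightly finer version of the same case analysis, and your explicit observation that $B_u,B_v$ are measurable with respect to the sampling of edges strictly heavier than $e$ (hence independent of $\mathbf{1}[e\in\RS{E}']$) is actually cleaner than the paper's phrasing of the corresponding independence claim. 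The single-edge instance for $p\leq 1/2$ also matches the paper.

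The genuine gap is the tightness construction for $p\in(1/2,1)$, which you correctly identify as the delicate step but do not carry out; as you note, the bare $P_4$ gadget gives ratio $(3p-2p^2)/2>1/2$ for $p<1$, and your proposed ``blockers plus companion edges'' augmentation is left unresolved. The paper's construction dissolves your three competing constraints by making each blocker and its companion OPT-edge share a left vertex. Take the bipartite graph $\CS{G}_k=(\CS{L},\CS{R},\CS{E})$ with $\CS{R}=\set{r_1,\dots,r_{2k}}$ and $\CS{L}=\CS{U}\cup\CS{Y}$, where each $u_i\in\CS{U}=\set{u_1,\dots,u_k}$ is joined to \emph{all} of $\CS{R}$ with weights just above $1$ (ordered so that greedy processes $u_1$'s edges first, then $u_2$'s, etc., and within each $u_i$ prefers smaller-index $r_j$), and each $y_i\in\CS{Y}=\set{y_1,\dots,y_k}$ is joined only to $r_1,\dots,r_k$ with weights just below $1$. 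Then $\OPT\approx 2k$ via a perfect matching that sends $\CS{U}$ to $r_{k+1},\dots,r_{2k}$. In $\Greedy(\CS{G}_k[\RS{E}'])$, when $u_i$'s sampled edges are reached, at least $k-(i-1)$ vertices of $\set{r_1,\dots,r_k}$ are still free, so $u_i$ fails to be matched inside $\set{r_1,\dots,r_k}$ with probability at most $(1-p)^{k-i+1}$; summing, the expected number of $u_i$'s escaping to $\set{r_{k+1},\dots,r_{2k}}$ is less than $(1-p)/p$, a constant. Hence w.h.p.\ the $u_i$'s fill $r_1,\dots,r_k$, every $y_i$ is blocked, and greedy's weight is at most $k+(1-p)/p$, so the ratio tends to $1/2$. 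Your constraint (i) holds because each $u_i$ has $2k$ chances to be sampled; (ii) holds because the blocking edges $(u_i,r_j)$, $j\leq k$, are not in $\OPT$; and (iii) holds because the blocker consumes $u_i$ itself, so greedy collects only one edge per $u_i$ and nothing from $\CS{Y}$. Without a concrete construction of this kind, the ``moreover'' half of the lemma for $p>1/2$ is not established.
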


\begin{proof}
For a matching $M$ we evenly split the weight of each edge $e = (u,v) \in M$ between its two endpoints. Let $c(M,u)$ be half the weight of the edge incident to $u$ in $M$ ($c(M,u) = 0$ if there is no such edge). We have $w(M) = \sum_{u \in \CS{V}} c(M,u)$. For convenience of notation, let $M_{E'} = \Greedy(\CS{G}[E'])$

Let $M^*$ be a maximum matching in $\CS{G}$. Fix $e =(u,v) \in M^*$, and let $\CS{F}_e$ be the set of edges incident to $u$ or $v$ with weight at least $w(e)$. That is, $\CS{F}_e = \set{e' : u \in e' \lor v \in e',w(e') \geq w(e) }$. Let $q$ be the probability that an edge from $\CS{F}_e$ is in the matching $M_{E'}$. 

If there is an edge of $\CS{F}_e$ in $M_{E'}$, then $c(M_{E'}, u) + c(M_{E'}, v) \geq w(e)/2$.
If none of the edges of $\CS{F}_e$ are in $M_{E'}$, then in case $e \in \RS{E}'$, $e$ will be added to $M_{E'}$ (since this is the heaviest edge incident to $u$ or $v$, so it will be processed first by $\Greedy$). Since $e \in \RS{E}'$ with probability $p$, independently of whether an edge from $\CS{F}_e$ is in $M_{E'}$ or not, we get that
\begin{align*}
    \E{c(M_{E'}, u) + c(M_{E'}, v)} &\geq q  w(e)/2 + (1-q)p  w(e)\\ 
    &\geq  q \min\set{p,1/2}  w(e) + (1-q)\min\set{p,1/2}  w(e) \\ 
    &\geq  \min\set{p,1/2} w(e).
\end{align*}
Overall,
\begin{align*}
    \E{w(M_{E'})} &\geq \sum_{(u,v) \in M^*} \E{c(M_{E'}, u) + c(M_{E'}, v)} \\ 
    &\geq \sum_{(u,v) \in M^*} \min\set{p,1/2} w(u,v) \\
    &\geq \min\set{p,1/2} w(M^*).
\end{align*}

We now show that the ratio of $\min\set{p,1/2}$ is tight. We first show a simple upper-bound of $p$ by a single instance. Consider a graph that consists of a single edge $(u,v)$ of weight $1$. Clearly, $\OPT = 1$ and $\E{\Greedy(\CS{G}[\REsample])} = p$. 

For the upper-bound of $1/2$, we construct an instance that consists of many ``traps'' which are typically misleading for greedy. The graph we construct is essentially unweighted. The weights we define only serve the purpose of tie-breaking. Consider a bipartite-graph $\CS{G}_{k} = (\CS{L}, \CS{R}, \CS{E})$, where $\CS{R} = \set{r_1,r_2,\dots, r_{2k}}$ and there are two types of vertices in the left side $\CS{L} = \CS{U} \cup \CS{Y}$. $\CS{U} = \set{u_1,\dots,u_k}$ are the vertices of the first type. Each $u_i \in \CS{U}$ is connected to all vertices in $\CS{R}$ through edges of weight larger than $1$ but arbitrarily close to $1$. More concretely, $1+\varepsilon > w(u_i,r_1) > \dots > w(u_i,r_{2k}) > 1$. Furthermore, the edges incident to $u_i$ are heavier than the edges incident to $u_{i+1}$. $ \CS{Y} = \set{y_1,\dots,y_k}$ are the vertices of the second type. Each $y \in \CS{Y}$ is connected only to $r_1,\dots,r_k$ with edges of weight smaller than $1$ but arbitrarily close to $1$. 

Clearly, the weight of $\OPT$ is arbitrarily close to $2k$.
$\Greedy$ processes all the edges incident to $u_i$ in $\REsample$ before all the edges incident to $u_{i+1}$ in $\REsample$. When the edges of $u_i$ in $\REsample$ are processed, there are at least $k - (i-1)$ available vertices from $\set{r_1,\dots,r_k}$ as $u_1,\dots,u_{i-1}$ can occupy at most $i-1$ vertices. Hence, $u_i$ is not matched to one of $r_1,\dots,r_k$ with probability at most $(1-p)^{k-(i-1)}$. So, the expected number of vertices from $\CS{U}$ that are matched outside of $\set{r_1,\dots,r_k}$ is at most $\sum_{i=1}^{k} (1-p)^{k-(i-1)} = \sum_{j=1}^{k} (1-p)^{j} < \sum_{j=1}^{\infty} (1-p)^{j} = (1-p)/p$. Since only the vertices of $\CS{U}$ can be matched to $\set{r_{k+1},\dots,r_{2k}}$, we get that the expected weight of the matching produced by $\Greedy$ is at most $k + (1-p)/p$. Thus, the ratio between $\E{\Greedy(\CS{G}_k[\RS{E}'])}$ and $\OPT$ approached $1/2$ as $k$ approaches infinity.
\end{proof}

\begin{theorem}\label{thm:greedy_e_a_c_r}
For $p \in [0,1]$, Algorithm~\ref{alg:greedy_e_a} is order-oblivious $c(p)$-competitive, where 
\begin{align}
    c(p) =
    \begin{cases}
    \frac{p^2(1-p)}{2} & p \leq \frac{\sqrt{5}-1}{2} \\
    \frac{(1-p)(2p-1)}{2p} & \text{otherwise}.
    \end{cases}
\end{align}
\end{theorem}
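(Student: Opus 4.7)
The plan is to combine Corollary~\ref{cor:edge_against_greedy}, which lower-bounds $\E{\ALG}$ in terms of $\E{\Greedy(\CS{G}[\REsample])}$ by the maximum of three expressions, with Lemma~\ref{lem:min_p_1/2}, which lower-bounds $\E{\Greedy(\CS{G}[\REsample])}$ by $\min\{p, 1/2\} \cdot \OPT$. Chaining these gives, for any arrival order of the edges in $\CS{E} \setminus \REsample$,
\[
\E{\ALG} \geq \max\!\left\{\frac{p(1-p)}{2},\ p^2(1-p),\ \frac{(1-p)(2p-1)}{p}\right\} \cdot \min\{p,\,1/2\} \cdot \OPT.
\]
Thus it suffices to perform a careful case analysis to identify, in each regime of $p$, which term of the max dominates and to simplify the product.

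For $p \leq 1/2$, the third expression $\frac{(1-p)(2p-1)}{p}$ is nonpositive and can be ignored. The inequality $\frac{p(1-p)}{2} \geq p^2(1-p)$ holds precisely when $p \leq 1/2$, so the direct-edge-reimbursement bound dominates. Multiplying by $\min\{p,1/2\} = p$ yields $\frac{p^2(1-p)}{2}$, which matches the claimed $c(p)$ since $\tfrac{\sqrt{5}-1}{2} > 1/2$.

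For $p \geq 1/2$ we have $\min\{p,1/2\} = 1/2$, and now $p^2(1-p) \geq \frac{p(1-p)}{2}$ so the first expression is never the maximum. The remaining task is to compare $p^2(1-p)$ with $\frac{(1-p)(2p-1)}{p}$, i.e., after dividing by $(1-p)>0$ and multiplying by $p$, to compare $p^3$ with $2p-1$. The sign of $q(p) = p^3 - 2p + 1$ is governed by the factorization $q(p) = (p-1)(p^2 + p - 1)$, whose positive roots are $p = 1$ and $p = \tfrac{\sqrt{5}-1}{2}$. Hence on $[1/2,\,\tfrac{\sqrt{5}-1}{2}]$ the vertex-reimbursement bound dominates, giving $p^2(1-p)\cdot \tfrac{1}{2} = \tfrac{p^2(1-p)}{2}$, and on $[\tfrac{\sqrt{5}-1}{2},\,1]$ the credit-reimbursement bound dominates, giving $\tfrac{(1-p)(2p-1)}{p}\cdot \tfrac{1}{2} = \tfrac{(1-p)(2p-1)}{2p}$, exactly matching the piecewise definition of $c(p)$.

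The main obstacle is not technical but bookkeeping: correctly identifying the transition point $p = \tfrac{\sqrt{5}-1}{2}$ as the positive root of $p^3 - 2p + 1$ and checking the two ``coincidences'' that make the final formula clean — namely, continuity at $p=1/2$ (where the dominant reimbursement switches from direct-edge to vertex, both evaluating to $\tfrac{p^2(1-p)}{2}$) and at $p = \tfrac{\sqrt{5}-1}{2}$ (where the cubic identity $p^3 = 2p - 1$ makes $\tfrac{p^2(1-p)}{2}$ and $\tfrac{(1-p)(2p-1)}{2p}$ agree). Once these are verified, the theorem follows immediately from Corollary~\ref{cor:edge_against_greedy} and Lemma~\ref{lem:min_p_1/2}.
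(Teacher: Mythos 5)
Your proposal is correct and follows exactly the paper's own route: chain Corollary~\ref{cor:edge_against_greedy} with Lemma~\ref{lem:min_p_1/2} and then determine which of the three reimbursement bounds dominates in each regime of $p$. Your case analysis (including the factorization $p^3-2p+1=(p-1)(p^2+p-1)$ locating the breakpoint $p=\tfrac{\sqrt{5}-1}{2}$) matches and, if anything, spells out the algebra more explicitly than the paper does.
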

\begin{proof}
By Corollary~\ref{cor:edge_against_greedy} together with Lemma~\ref{lem:min_p_1/2}, we get that for any arrival order of the edges in $\CS{E} \setminus \REsample$
\begin{align}
    \E{\ALG} \geq \max\left\{\frac{p(1-p)}{2}, p^2(1-p), \frac{(1-p)(2p-1)}{p} \right\} \cdot \min\left\{p,\frac{1}{2}\right\} \cdot \OPT.
\end{align}
For $p \leq 1/2$, the maximum is obtained by $p(1-p)/2$, for $1/2< p \leq (\sqrt{5}-1)/2$, the maximum is obtained by $p^2(1-p)$, and for $p > (\sqrt{5}-1)/2$ the maximum is obtained by $(1-p)(2p-1)/p$. This proves the theorem.
\end{proof}

By Theorem~\ref{thm:greedy_e_a_c_r} with $p = 1/\sqrt{2}$ together with Theorem~\ref{thm:order_oblivious_two_faced} we get the following result in the two-faced model.
\begin{corollary}
There is an algorithm with competitive-ratio of $\left(3/2 - \sqrt{2}\right) \approx \frac{1}{11.66}$ for the two-faced online matching with edge arrivals in general graphs.
\end{corollary}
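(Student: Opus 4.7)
The plan is to verify the corollary by composing two results already established in the paper: the competitive ratio of Algorithm~\ref{alg:greedy_e_a} as an order-oblivious algorithm (Theorem~\ref{thm:greedy_e_a_c_r}) and the black-box reduction from order-oblivious algorithms to the two-faced model (Theorem~\ref{thm:order_oblivious_two_faced}). Concretely, I would instantiate Algorithm~\ref{alg:greedy_e_a} with the sampling parameter $p = 1/\sqrt{2}$, feed the resulting order-oblivious algorithm into the reduction of Theorem~\ref{thm:order_oblivious_two_faced}, and verify that the competitive ratio that comes out matches $3/2 - \sqrt{2}$.

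The first step would be to choose $p$ optimally. Since $1/\sqrt{2} \approx 0.707$ exceeds the threshold $(\sqrt{5}-1)/2 \approx 0.618$ that separates the two cases in Theorem~\ref{thm:greedy_e_a_c_r}, the relevant formula is $c(p) = (1-p)(2p-1)/(2p)$. I would optimize by rewriting it as $c(p) = 3/2 - 1/(2p) - p$, take the derivative, and observe that $c'(p) = 1/(2p^2) - 1$ vanishes at $p = 1/\sqrt{2}$, which is indeed the maximizer (second derivative negative). Substituting,
\begin{align*}
c\!\left(\tfrac{1}{\sqrt{2}}\right) &= \frac{(1-1/\sqrt{2})(\sqrt{2}-1)}{\sqrt{2}} = \frac{(\sqrt{2}-1)^2}{2} = \frac{3-2\sqrt{2}}{2} = \frac{3}{2} - \sqrt{2}.
\end{align*}

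Next, Theorem~\ref{thm:greedy_e_a_c_r} guarantees that with this choice of $p$, Algorithm~\ref{alg:greedy_e_a} is order-oblivious $\bigl(3/2-\sqrt{2}\bigr)$-competitive for the random-order online matching problem with edge arrivals in general graphs. The random-order matching problem with edge arrivals fits the batched online selection framework of Section~\ref{sec:batched} by taking $\cU = \CS{E}$, $\cP = \{\{e\} : e \in \CS{E}\}$ (singleton batches, one per edge), and $\cJ$ to be the family of matchings in $\CS{G}$. Therefore Theorem~\ref{thm:order_oblivious_two_faced} applies directly, and produces from Algorithm~\ref{alg:greedy_e_a} (via Algorithm~\ref{alg:two_faced}) a two-faced algorithm for the problem with the same competitive ratio $3/2 - \sqrt{2} \approx 1/11.66$, as claimed.

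The only place that requires any care, rather than mechanical composition, is confirming that the order-oblivious analysis we plug in is truly order-oblivious in the sense of Definition~\ref{def:order-oblivious}: the guarantee of Theorem~\ref{thm:greedy_e_a_c_r} is stated for any arrival order of the edges in $\CS{E}\setminus\RS{E}'$ (this is exactly how Lemmas~\ref{lem:direct_reimbusement},~\ref{lem:credit_reimbusement}, and~\ref{lem:per-vertex-comp} are proved), so the order-oblivious hypothesis of Theorem~\ref{thm:order_oblivious_two_faced} is satisfied. Hence no additional work is required beyond the choice of $p$ and the substitution above.
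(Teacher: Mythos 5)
Your proposal is correct and matches the paper's own derivation, which obtains the corollary exactly by instantiating Theorem~\ref{thm:greedy_e_a_c_r} with $p=1/\sqrt{2}$ and composing with the reduction of Theorem~\ref{thm:order_oblivious_two_faced}. The extra checks you include (optimality of $p=1/\sqrt{2}$ within the relevant branch of $c(p)$, and that edge arrivals fit the batched framework with singleton batches) are correct and only make explicit what the paper leaves implicit.
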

\subsection{AOS\texorpdfstring{$p$}{p} Edge Arrivals in General Graphs}

We now use our order-oblivious analysis to derive results for the greedy-based algorithm in the $\AOS p$ model. The definition of the online matching problem with edge arrivals in the $\AOS p$ model is analogous to the vertex arrival case (see Section~\ref{sec:bipartite_aos_p}). Also, as in the vertex arrivals case, for $p \leq 1/\sqrt{2}$, we simply use Algorithm~\ref{alg:greedy_e_a} and replace the sampling-phase with the history set $\RS{H}$.

\begin{theorem}\label{thm:p_aos_e_a_competitive}
For $p \leq 1/\sqrt{2}$, Algorithm~\ref{alg:greedy_e_a} with the sampling-phase replaced with $\RS{E'} \leftarrow \RS{H}$ is $c(p)$-competitive in the $\AOS p$ model, where
\begin{align}
    c(p) =
    \begin{cases}
    p^2 / 2 & p \leq 1/3 \\
    p(1-p)/ 4 & 1/3 < p \leq 1/2 \\
    p^2(1-p)/2 & 1/2 < p \leq (\sqrt{5} - 1)/2\\
    (1-p)(2p-1)/(2p) & (\sqrt{5} - 1)/2 < p \leq 1/\sqrt{2}.
    \end{cases}
\end{align}
\end{theorem}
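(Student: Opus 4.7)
The plan is to combine Corollary~\ref{cor:edge_against_greedy} with a bound on $\E{\Greedy(\CS{G}[\RS{H}])}$ in terms of $\E{\OPT(\CS{G}[\RS{O}])}$, in close analogy with the proof of Theorem~\ref{thm:p_aos_p(1-p)}. First observe that since each $e \in \CS{E}$ belongs to $\RS{H}$ independently with probability $p$, Algorithm~\ref{alg:greedy_e_a} with $\RS{E}' \leftarrow \RS{H}$ behaves exactly as in the order-oblivious setting of Section~\ref{sec:edge_general}. Hence, for any arrival order of the edges in $\RS{O}$,
\[
\E{\ALG} \;\geq\; \max\!\left\{\frac{p(1-p)}{2},\ p^2(1-p),\ \frac{(1-p)(2p-1)}{p}\right\} \E{\Greedy(\CS{G}[\RS{H}])}.
\]
It then remains to lower bound $\E{\Greedy(\CS{G}[\RS{H}])}$ by a multiple of $\E{\OPT(\CS{G}[\RS{O}])}$.

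To do so, I would use the coupling trick from the vertex-arrival case. Draw $x_e \in [0,1]$ i.i.d.\ uniform, and set $X = \{e : x_e \leq p\}$, $Y = \{e : x_e \leq 1-p\}$. Marginally $X$ has the same distribution as $\RS{H}$ and $Y$ has the same distribution as $\RS{O}$; since $\E{\Greedy(\CS{G}[\RS{H}])}$ and $\E{\OPT(\CS{G}[\RS{O}])}$ depend only on the marginals, the fact that in the actual $\AOS p$ model $\RS{H}$ and $\RS{O}$ are disjoint is irrelevant. For $p \leq 1/2$ we have $X \subseteq Y$, and conditional on $Y$ each $e \in Y$ lies in $X$ independently with probability $q = p/(1-p)$. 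Applying Lemma~\ref{lem:min_p_1/2} to $\CS{G}[Y]$ with sampling probability $q$ and taking expectation over $Y$ yields
\[
\E{\Greedy(\CS{G}[\RS{H}])} \;\geq\; \min\!\left\{\frac{p}{1-p},\,\frac{1}{2}\right\} \E{\OPT(\CS{G}[\RS{O}])}.
\]
For $p > 1/2$ this coupling degenerates, so instead I use the fact that $\Greedy$ is a $1/2$-approximation to the maximum matching on whatever graph it is run on, combined with a monotonicity argument: swap the roles of $X$ and $Y$ so that $Y \subseteq X$ (valid since $1-p \leq p$), giving $\E{\OPT(\CS{G}[X])} \geq \E{\OPT(\CS{G}[Y])}$, and therefore
\[
\E{\Greedy(\CS{G}[\RS{H}])} \;\geq\; \tfrac{1}{2}\E{\OPT(\CS{G}[\RS{H}])} \;\geq\; \tfrac{1}{2}\E{\OPT(\CS{G}[\RS{O}])}.
\]
For $p \in [1/3, 1/2]$ the first bound already gives $1/2$, so the two cases agree and we obtain a uniform ratio of $\min\{p/(1-p),\,1/2\}$ for all $p \leq 1/\sqrt{2}$.

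Combining the two bounds and asking which of the three terms $\{p(1-p)/2,\ p^2(1-p),\ (1-p)(2p-1)/p\}$ attains the maximum on each range yields the four-piece formula: the relevant breakpoints come from $p(1-p)/2 = p^2(1-p)$ at $p = 1/2$ and $p^2(1-p) = (1-p)(2p-1)/p$ at $p^3 = 2p-1$, i.e.\ $p = (\sqrt{5}-1)/2$; combined with the breakpoint $p = 1/3$ coming from the greedy-versus-OPT bound, we get exactly the four subintervals $[0,1/3]$, $(1/3,1/2]$, $(1/2,(\sqrt{5}-1)/2]$, $((\sqrt{5}-1)/2,1/\sqrt{2}]$ and the claimed ratios $p^2/2$, $p(1-p)/4$, $p^2(1-p)/2$ and $(1-p)(2p-1)/(2p)$, respectively. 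The only genuine obstacle is the small monotonicity argument for $p > 1/2$ (which did not appear in the vertex case, where the order-oblivious bound $p(1-p)$ was the active term throughout); the rest is a routine case-analysis of piecewise products.
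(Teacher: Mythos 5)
Your proposal is correct and follows essentially the same route as the paper: combine Corollary~\ref{cor:edge_against_greedy} with the coupling $X=\{e : x_e\le p\}$, $Y=\{e : x_e\le 1-p\}$ and Lemma~\ref{lem:min_p_1/2} at sampling rate $q=p/(1-p)$ to get $\E{\Greedy(\CS{G}[\RS{H}])}\ge \min\{p/(1-p),1/2\}\,\E{\OPT(\CS{G}[\RS{O}])}$, then optimize over the three reimbursement bounds to locate the breakpoints $1/3$, $1/2$, and $(\sqrt5-1)/2$. The one place you go beyond the paper's write-up is the $p>1/2$ regime, where the conditional-probability coupling degenerates; your explicit fix (reverse the inclusion so $Y\subseteq X$ and use that $\Greedy$ is a pointwise $1/2$-approximation, giving $\E{\Greedy(\CS{G}[\RS{H}])}\ge\tfrac12\E{\OPT(\CS{G}[\RS{H}])}\ge\tfrac12\E{\OPT(\CS{G}[\RS{O}])}$) is correct and actually more careful than the paper, which simply invokes $\min\{p/(1-p),1/2\}=1/2$ without comment.
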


\begin{proof}
Each edge $e \in \CS{E}$ is chosen to $\RS{H}$ independently with probability $p$. Hence, by Corollary~\ref{cor:edge_against_greedy}, we have $\E{\ALG} \geq \max\{(1-p)p/2, (1-p)p^2, (1-p)(2p-1)/p \} \E{\Greedy(\CS{G}[H])}$. 

We now relate $\E{\Greedy(\CS{G}[H])}$ to $\E{\OPT(\CS{G}[\RS{O}])}$. We begin with the case $p \leq 1/2$. For each $e \in \CS{E}$ we draw $x_e \in [0,1]$ uniformly at random. Let $X = \set{e | x_e \leq p}$, and $Y = \set{e | x_e \leq 1-p}$. We have $\E{\textsc{Greedy}(\CS{G}[\RS{X}])} = \E{\textsc{Greedy}(\CS{G}[\RS{H}])}$, $\E{\OPT(\CS{G} [\RS{Y}] ) } = \E{\OPT(\CS{G}[\RS{O}])}$, and $\Pr[u \in X | u \in Y] = p /(1-p) $. 

We now apply Lemma~\ref{lem:min_p_1/2} with edge sampling probability of $q = p/(1-p)$, and obtain that $\E{ \given{\textsc{Greedy}(\CS{G} [\RS{ X }] )}{Y} } \geq  \min\set{p/(1-p), 1/2} \cdot \OPT(\CS{G} [\RS{ Y }] )$. By taking the expectation over $Y$ we get that $\E{\textsc{Greedy}(\CS{G}[\RS{X}])} \geq  \min\set{p/(1-p), 1/2} \cdot \E{\OPT(\CS{G} [\RS{ Y }] )}$. Therefore, we have $\E{\textsc{Greedy}(\CS{G}[\RS{H}])} \geq \min\set{p/(1-p), 1/2} \cdot \E{\OPT(\CS{G}[\RS{O}])}$. Overall,
\begin{align*}
    \E{\ALG} &\geq  \max\left\{\frac{p(1-p)}{2}, p^2(1-p), \frac{(1-p)(2p-1)}{p} \right\} \E{\Greedy(\CS{G}[H])} \\
    &\geq \max\left\{\frac{p(1-p)}{2}, p^2(1-p), \frac{(1-p)(2p-1)}{p} \right\} \cdot \min\set{\frac{p}{1-p}, \frac{1}{2}} \cdot  \E{\OPT(\CS{G}[\RS{O}])}. 
\end{align*}
For $p \leq 1/2$, it holds that $\max\{(1-p)p/2, (1-p)p^2, (1-p)(2p-1)/p \} = (1-p)p/2$.  For $p \leq 1/3$, we have $\min\set{p/(1-p), 1/2} = p/(1-p)$, and thus $\E{\ALG} \geq p^2 \E{\OPT}/2$. For $1/3 < p \leq 1/2$, we have $\min\set{p/(1-p), 1/2} = 1/2$, and so $\E{\ALG} \geq p(1-p) \E{\OPT}/4$.

For $1/2 <  p \leq (\sqrt{5} - 1)/2$, it holds that  $\max\{(1-p)p/2, (1-p)p^2, (1-p)(2p-1)/p \} = (1-p)p^2$, and $\min\set{p/(1-p), 1/2} = 1/2$. Thus, $\E{\ALG} \geq p^2(1-p) \E{\OPT}/2$.

For $(\sqrt{5} - 1)/2 < p \leq 1/\sqrt{2}$, it holds that $\max\{(1-p)p/2, (1-p)p^2, (1-p)(2p-1)/p \} = (1-p)(2p-1)/p$ and $\min\set{p/(1-p), 1/2} = 1/2$, so we get $\E{\ALG} \geq (1-p)(2p-1)\E{\OPT}/(2p) $.
\end{proof}

For $p > 1/ \sqrt{2}$, we randomly discard some of the edges from $\RS{H}$ and achieve the same competitive-ratio as in the $p=1/\sqrt{2}$ case.

\begin{theorem}
For $p > 1/\sqrt{2}$, let $\RS{H}' \subseteq \RS{H}$ such that each $e \in \RS{H}$ is drawn to $\RS{H'}$ independently with probability $\left(1+\sqrt{2}\right)(1-p)/p$. Then, Algorithm~\ref{alg:greedy_e_a} with the sampling-phase replaced with $\RS{E'} \leftarrow \RS{H'}$ is $\left(3/2 - \sqrt{2}\right)$-competitive.
\end{theorem}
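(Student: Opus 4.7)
The plan is to mimic the proof from the vertex arrival case (for $p > 1/2$), showing that the random subsampling reduces the problem to the $p = 1/\sqrt{2}$ regime of Theorem~\ref{thm:p_aos_e_a_competitive}.

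First I would verify that the subsampling probability $q := (1+\sqrt{2})(1-p)/p$ is a valid probability for $p > 1/\sqrt{2}$. Plugging in $p = 1/\sqrt{2}$ gives $q = (1+\sqrt{2})(\sqrt{2}-1) = 1$, and $q$ is decreasing in $p$, so $q \in [0,1]$ throughout the regime. Since each edge is placed in $\RS{H}$ independently with probability $p$ and then independently kept in $\RS{H'}$ with probability $q$, each $e \in \CS{E}$ is in $\RS{H'}$ independently with probability $pq = (1+\sqrt{2})(1-p)$. Moreover, $\RS{H'}$ and $\RS{O}$ are disjoint by construction.

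Next, I would condition on the random set $\CS{F} := \RS{O} \cup \RS{H'}$. Because membership of each edge in $\RS{H'}$ versus $\RS{O}$ is determined by independent coin flips per edge, conditionally on $\CS{F}$, each $e \in \CS{F}$ belongs to $\RS{H'}$ independently across edges with probability
\begin{align*}
\Pr[e \in \RS{H'} \mid e \in \CS{F}] &= \frac{pq}{pq + (1-p)} = \frac{(1+\sqrt{2})(1-p)}{(1-p)(2+\sqrt{2})} = \frac{1+\sqrt{2}}{2+\sqrt{2}} = \frac{1}{\sqrt{2}}.
\end{align*}
Thus, conditionally on $\CS{F}$, the input restricted to $\CS{F}$ is distributed exactly as an $\AOS p$ instance with edge sampling probability $1/\sqrt{2}$ on the subgraph $\CS{G}[\CS{F}]$, where the algorithm sees the sample $\RS{H'}$ and the online edges $\RS{O}$ arrive adversarially.

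I would then apply Theorem~\ref{thm:p_aos_e_a_competitive} with $p = 1/\sqrt{2}$, which lies in the range $(\sqrt{5}-1)/2 < p \leq 1/\sqrt{2}$, yielding competitive-ratio
\begin{align*}
c(1/\sqrt{2}) &= \frac{(1-1/\sqrt{2})(2/\sqrt{2}-1)}{2/\sqrt{2}} = \frac{(\sqrt{2}-1)^2}{2} = \frac{3}{2} - \sqrt{2}.
\end{align*}
Hence $\E{\ALG \mid \CS{F}} \geq (3/2 - \sqrt{2}) \cdot \E{\OPT(\CS{G}[\RS{O}]) \mid \CS{F}}$; taking expectation over $\CS{F}$ finishes the proof. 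The only real subtlety is verifying that the conditional distribution of $\RS{H'}$ given $\CS{F}$ is exactly the product Bernoulli law with parameter $1/\sqrt{2}$, which follows from independence across edges of the original sampling.
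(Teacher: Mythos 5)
Your proposal is correct and follows essentially the same argument as the paper: condition on $\RS{O}\cup\RS{H'}$, observe that each edge of this set lies in $\RS{H'}$ independently with probability $\frac{pq}{pq+1-p}=\frac{1}{\sqrt{2}}$, and invoke Theorem~\ref{thm:p_aos_e_a_competitive} at $p=1/\sqrt{2}$ before averaging over the conditioning. The only difference is that you spell out the computation of the conditional probability and the value $c(1/\sqrt{2})=3/2-\sqrt{2}$, which the paper leaves implicit.
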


\begin{proof}
Let $\CEsample \subseteq \CS{E}$.
Conditioned on $\RS{O} \cup \RS{E'} = \CEsample$, for every $e \in \CEsample$ we have $\Pr[e \in \RS{E'} | \RS{O} \cup \RS{E'} = \CEsample] = 1/\sqrt{2}$ independently of all other edges. Therefore, conditioned on $\RS{O} \cup \RS{E'} = \CEsample$ we have an instance of the problem with edge sampling probability of $1/\sqrt{2}$. Hence, we get that $\E{\ALG | \RS{O} \cup \RS{E'} = \CEsample} \geq \left(3/2 - \sqrt{2}\right){\E{\OPT |  \RS{O} \cup \RS{E'} = \CEsample}} $. By taking the expectation over $\RS{O} \cup \RS{E}'$, we get $\E{\ALG} \geq \left(3/2 - \sqrt{2}\right)\E{\OPT}$.
\end{proof}

\section{Discussion}\label{sec:discussion}
In this paper we study weighted matching problems in sample-based adversarial-order online models. At the base of our method lies an order-oblivious analysis which we use to derive results in both the recent adversarial-order model with a sample and our new two-faced model which strengthens the single-sample prophet inequality setting. Through this method, we unveil more relations between different online models, and provide a comprehensive picture of the performance of the greedy-based  algorithms which we study.

While our analysis of the greedy-based algorithm for weighted bipartite matching is tight, we believe that this is not the case for edge arrivals in general graphs. Finding the exact competitive-ratio in this case is an interesting open question for future research. Beyond the greedy-based algorithm, it would be interesting to close the remaining gaps between the lower and upper bounds
on the competitive ratio for the matching problems which we study.
Inspired by the relation to the random-order model, and specifically by the optimal algorithm for weighted bipartite matching with vertex arrivals in the random-order model~\cite{DBLP:conf/esa/KesselheimRTV13}, one could hope to improve upon the greedy-based algorithm by replacing the offline greedy algorithm with the offline maximum matching. In Appendix~\ref{apx:opt_based}, we show that this approach leads to a non-competitive algorithm. Hence, in case our lower bound is not tight, a different algorithmic approach should be designed to improve it.

Another exciting direction is to study unweighted matching problems in sample-based adversarial-order models. In particular, it would be interesting to discover whether the $(1-1/e)$ barrier for unweighted bipartite matching with vertex arrivals in the standard worst-case model can be surpassed by using a sample.

\bibliographystyle{plain}
\bibliography{citations}

\appendix
\section{Tight Examples}

\subsection{Proof of Theorem~\ref{thm:alg_upper_bound}}\label{apx:tight_example}
\begin{proofw}
Fix $p \in [0,1]$. For an integer $k \geq 1$, we construct $\CS{G}_k =(\CS{L},\CS{R},\CS{E})$ with $\CS{R}=\set{r_1,\dots,r_{k + k/p}}$ and $\CS{L}= \CS{L_1}\cup \CS{L_2} \cup \CS{L_3}$ where $\CS{L_1}=\set{u_1,\dots,u_k}$, $\CS{L_2} = \set{v_1,\dots, v_{k(1-p)/p}}$ and $\CS{L_3} = \set{y_1,\dots,y_{k}}$.\footnote{we assume that $1/p$ and $(1-p)/p$ are integers. When this is not the case, we can round down the terms $k(1-p)/p$ and $k/p$.} Each vertex in $\CS{L_1} \cup \CS{L_2}$ is connected to all vertices in $\CS{R}$. On the other hand, the vertices of $\CS{L_3}$ are connected only to the first $k$ vertices in $\CS{R}$, i.e., $r_1,\dots,r_k$.

The graph is essentially unweighted: All edge weights are between $1$ and $1+\varepsilon$ for an arbitrarily small $\varepsilon$. Hence, we treat all weights as $1$ when accounting for their contribution to the weight of the matching. The different weights are used to define an order on the edges, which determines the tie-breaking rules against the algorithm.

The heaviest edges in the graph are the $k$ edges $\{(u_1,r_1),(u_2,r_2), \dots,(u_k,r_k)\}$ (the order among these edges may be arbitrary).
All remaining edge weights are ordered as follows: Firstly by the index of the subset to which their left side vertex belongs ($\CS{L}_1,\CS{L}_2$ or $\CS{L}_3$) in decreasing order. Secondly, by the index of their left side vertex in decreasing order, and finally by the index of their right side vertex in decreasing order.

More formally, let $e_1 = (x_{i_1}, r_{j_1})$, $e_2 = (x_{i_2}, r_{j_2})$, and let $\ell_1,\ell_2 \in \set{1,2,3}$ be the indices of the subsets of $\CS{L} = \CS{L_1}\cup \CS{L_2} \cup \CS{L_3}$ such that $x_{i_1} \in \CS{L}_{\ell_1}$ and $x_{i_2} \in \CS{L}_{\ell_2}$. Then, $w(e_1) > w(e_2)$ if $e_1 \in \{(u_1,r_1), \dots,(u_k,r_k)\}$ and $e_2 \notin \{(u_1,r_1), \dots,(u_k,r_k)\} $ or if
$(\ell_1,i_1,j_1) < (\ell_2,i_2,j_2)$ lexicographically.

To lower bound $\OPT$, observe that there is a perfect matching in $\CS{G}_k$, and therefore $\OPT \geq k+k/p$.
To upper bound $\E{\ALG}$, we upper bound the number of vertices $r \in \CS{R}$ such that there is a candidate edge incident to $r$.  Let $\RS{O} = \CS{L} \setminus \RS{L}'$ be the set of vertices that arrive after the sampling phase. Let $s_i$ and $o_i$ be the random variables that gets the number of vertices in $\CS{L}_i \cap \RS{L'}$ and $\CS{L}_i \cap \RS{O}$, respectively.

Consider the matching $\Greedy(\CS{G}[L'])$. In this matching, greedy first matches the edges of the form $(u_i,r_i)$, then it proceeds to the vertices in $\CS{L_2} \cap \RS{L'}$ and matches them to the available vertices in $\CS{R}$ with the lowest index. Finally, the vertices of $\CS{L_3} \cap \RS{L'}$ are matched to the remaining available vertices in $\{r_1,\dots,r_k\}$ (if there are any).

First, observe that if $u_i \in \RS{L'}$, then there will be no candidate edges incident to $r_i$. This is because $(u_i,r_i)$ is the heaviest edge incident to both $u_i$ and $r_i$, and therefore it always participates in $\Greedy(\CS{G}[\RS{L'} \cup \set{x_i}])$ (for every vertex $x_i \in \CS{L} $). So, the vertices in $\CS{L_1} \cap \RS{O}$ and $\CS{L_3} \cap \RS{O}$, can only have candidate edges to vertices $r_j \in \set{r_1,\dots,r_k}$ such that $u_j \in \RS{O}$.

For each $v_i \in \CS{L_2} \cap \RS{O}$, in the matching $\Greedy(\CS{G}[\RS{L'} \cup \set{v_i}])$, $v_i$ can only be matched to one of $\{r_1,\dots,r_{s_1+s_2 +1} \}$. This is because there are only $s_1 + s_2 + 1$ vertices from $\CS{L_1} \cup \CS{L_2}$ in $\RS{L'} \cup \set{v_i}$, and greedy matches them to the vertices in $\CS{R}$ with the lowest indices. 
Therefore, we get that the candidate edge for $v_i$, $(v_i,r_j)$ is either to $r_j \in \set{r_1,\dots,r_k}$ such that $u_j \in \RS{O}$, or, in case $s_1+s_2 \geq k$, it may also be that $r_j \in \{ r_{k +  1},\dots, r_{s_1+s_2+1} \}$. 

To sum up, the vertices in $\CS{R}$ that may have an incident candidate edge are $\{r_1,\dots,r_k : u_j \in O\}$ and $\{r_{k+1},\dots,r_{s_1+s_2+1}\}$. We have $|\{r_1,\dots,r_k : u_j \in O \} | = o_1$ and  $|\{r_{k+1},\dots,r_{s_1+s_2+1}\}| = \max\{s_1 + s_2 + 1 - k, 0\}$. Hence, at most $\max\{s_1 + s_2 + 1 - k, 0\} + o_1 $ vertices in $\CS{R}$ are matched by the algorithm. Since $s_1 + o_1 = |\CS{L_1}| = k$, we have $\max\{s_1 + s_2 + 1- k, 0\} + o_1  = \max\{s_2+1, o_1\} \leq \max\{s_2,o_1\} + 1$. 

It remains to upper-bound $\E{\max\set{s_2, o_1}}$. We have $\E{o_1} = \E{s_2} = (1-p)k$ as every vertex is drawn to $\RS{L'}$ independently with probability $p$. We also use the trivial upper-bound of $|\CS{R}| = k + k/p$ on the size of the matching. For $0< \delta <1$, we have 
\begin{align}
\begin{split}
    \E{\max\set{s_2, o_1} } &\leq (1+\delta)(1-p)k \Pr\left[\max\set{s_2, o_1} \leq (1+\delta)(1-p)k \right] \\
    &\quad +  (k+k/p) \Pr\left[\max\set{s_2, o_1} > (1+\delta)(1-p)k  \right].~\label{eq:o_o_max_exp}
\end{split}
\end{align}
By Chernoff bound we get that
$\Pr\left[o_1 >  (1+\delta)(1-p)k  \right] \leq e^{-\delta^2 (1-p)k/3}$, and for $\delta = \sqrt{3\log{k}/(k (1-p))}$, we obtain $\Pr\left[o_1 \geq  (1+\delta)(1-p)k  \right] \leq 1/k$. Similarly, we get the same upper-bound for $\Pr[s_2 >  (1+\delta)(1-p)k ]$. Hence, by applying a union bound we get that $\Pr\left[\max\set{s_2, o_1} > (1+\delta)(1-p)k  \right] \leq 2/k$.  By substituting the last inequality in~\eqref{eq:o_o_max_exp} and using the trivial upper-bound of $1$ on the first probability in~\eqref{eq:o_o_max_exp}, we get that
\begin{align}
    \E{\ALG} \leq \E{\max\set{o_1,s_2}} + 1 \leq \left(1+\delta \right)(1-p)k +  2(1+1/p) + 1.\label{eq:o_o_ALG_performance}
\end{align}
Together with the fact that $\OPT \geq k(1+1/p)$, we obtain
\begin{align*}
    \frac{\E{\ALG}}{\OPT} &\leq \frac{\left(1+\delta \right)(1-p)k +  2(1+1/p) + 1}{k(1+1/p)} \leq  \left(1 + \sqrt{\frac{3}{1-p} \cdot \frac{\log{k}}{k}}\right) \frac{p(1-p)}{1+p} + \frac{3}{k},
\end{align*}
which approaches $p(1-p)/(1+p)$ as $k \rightarrow{\infty}$.
\end{proofw}

\subsection{Proof of Theorem~\ref{thm:aosp_alg_upper_bound}}\label{apx:aosp_tight_example}
\begin{proofw}
We use the construction of $\CS{G}_k$ as in the proof of Theorem~\ref{thm:alg_upper_bound} with a slight modification. It is easy to verify that our upper-bound on $\E{\ALG}$ by Inequality~\eqref{eq:o_o_ALG_performance} holds for any size of $\CS{L_3}$.
In the maximum matching on the entire graph the vertices of $\CS{L_3}$ are matched to  $\set{r_1,\dots,r_k}$. Since now we compare the performance of the algorithm to $\E{\OPT(\CS{G}[O])}$, we add more vertices to $\CS{L_3}$ so that with high probability, at least $k$ vertices from $\CS{L_3}$ will be in $\RS{O}$, and $\OPT(\CS{G}[O])$ will be able to use them to match all the vertices in $\set{r_1,\dots,r_k}$.

More concretely, we take $\CS{L_3} = \set{y_1,\dots,y_{k^2/(1-p)}}$ instead of $\set{y_1,\dots,y_k}$.
Let $o_i$ and $h_i$ be the random variables that get the number of vertices in $\CS{L}_i \cap \RS{O}$ and $\CS{L}_i \cap \RS{H}$, respectively. Since all vertices in $\left(\CS{L_1} \cup \CS{L_2}\right) \cap \RS{O}$ can be matched to vertices in $\set{r_{k + 1},\dots, r_{k+k/p}}$ (as $|\CS{L_1}\cup \CS{L_2}| = k/p$), and the vertices in $\set{r_1,\dots,r_k}$ can be matched to vertices in $\CS{L_3} \cap \RS{O}$, we have $\OPT(\CS{G}[O]) \geq \min\set{k,o_3} + o_1 + o_2$. Observe that $\E{o_1} = (1-p)k$, $\E{o_2} = k(1-p)^2/p$ and $\E{o_3} = k^2$ as every vertex is drawn to $\RS{O}$ independently with probability $1-p$. 
By applying a Chernoff bound, we obtain $\Pr[o_3 < k] \leq e^{-(k-1)^2/2}$ and thus
\begin{align*}
     \E{\OPT(\CS{G}[O])} \geq (1- e^{-(k-1)^2/2})k + (1-p)k + k(1-p)^2/p = k(1-e^{-(k-1)^2/2})/p.
\end{align*}
Together with Inequality \eqref{eq:o_o_ALG_performance}, we obtain
\begin{align*}
    \frac{\E{\ALG}}{\E{\OPT}} &\leq \frac{\left(1 + \sqrt{\frac{3}{1-p} \cdot \frac{\log{k}}{k}}\right)(1-p)k + 2(1+1/p) + 1}{k(1-e^{-(k-1)^2/2})/p} =  (1 + o(1)) p(1-p).\qedhere
\end{align*}
\end{proofw}

\section{Upper Bounds for the Secretary Problem}

\subsection{Proof of Theorem~\ref{thm:secretary_ob_upperbound}}\label{proof:secretary_ob_upperbound}
\begin{proofw}
Let $\ALG$ be an order-oblivious algorithm for the secretary problem. Fix $n \in \mathbb{N}$. $\ALG$ can be viewed as a family of functions $\{P^{i}_{j}\}_{1\leq i <  j  \leq n}$, $P^{i}_{j} : \mathbb{R}^{j}_{\geq 0} \rightarrow [0,1]$, such that $P^{i}_{j}(x_1,\dots,x_j)$ is the probability that $\ALG$ accepts $x_j$ conditioned on setting the length of the sampling phase to $k=i$, reaching round $j > i$ (without accepting) and observing the values $x_1,\dots,x_j$ at the first $j$ online rounds. 
As noted by Kaplan et al.~\cite{kaplan2020competitive}, a result by Moran et al.~\cite{Moran} (Corollary 3.7) implies that there is an infinite set $\cV \subseteq \mathbb{N}$ such that $\ALG$ (i.e., $\{P^{i}_{j}\}_{1\leq i <  j  \leq n}$) on inputs from $\cV$ depends only on ordinal information. Hence, there is an input $\cC = \{\alpha_1,\dots,\alpha_n\} \subseteq \cV$ such that $\alpha_1 > \alpha_2 > \dots > \alpha_n$ and $\alpha_1$ can be arbitrarily larger than $\alpha_2$. 

Consider the following partition of $\{P^{i}_{j}\}_{1\leq i <  j  \leq n}$ to $n+1$ subsets $\{P^{0}_{j}\}_{j > 0}, \{P^{1}_{j}\}_{j > 1}, \dots, \{P^{n-1}_{n}\}$. Each subset $\{P^{i}_{j}\}_{j > i}$ can be viewed as an algorithm for the secretary problem in the $\AOS$ model with history set of size $h = i$ and online set of size $n-i$.\footnote{The elements from the history set can be randomly permuted and fed to $\ALG$ as the first $h$ elements.}
We can therefore follow the result by Kaplan et al.~\cite{kaplan2020competitive} (Theorem 2.2), and get that for $i \geq 1$, $\Pr[\ALG(\cC) \text{ accepts } \alpha_1 \mid k = i] \leq \frac{i}{n} \cdot \frac{n-i}{n-1}$. If $k = 0$, this is exactly the adversarial-order setting with no sample. In this case, it is well known that no online algorithm can accept the largest element with probability greater than $1/(n-1)$ (see~\cite{lecture_kesselheim_16} for example). In other words,  $\Pr[\ALG(\cC) \text{ accepts } \alpha_1 \mid k = 0] \leq \frac{1}{n-1}$. Therefore, we have
\begin{align*}
    \Pr[\ALG(\cC) \text{ accepts } \alpha_1] &\leq \sum_{i = 0}^{n} \Pr[\ALG(\cC) \text{ accepts } \alpha_1 \mid k = i ]\Pr[k = i] \\
    &\leq \sum_{i=1}^{n}\frac{i}{n} \cdot \frac{n-i}{n-1} \Pr[k=i] + \frac{1}{n-1} \Pr[k=0] \leq  \frac{1}{4}\left(1  + \frac{5}{n-1}\right),
\end{align*}
where the last inequality follows from the fact that $\frac{i}{n} \cdot \frac{n - i}{n - 1}$ is maximized for $i = n/2$.

It holds that $\OPT(\cC) = \alpha_1$, and $\E{\ALG(\cC)} \leq \alpha_1 \Pr[\ALG(\cC) \text{ accepts } \alpha_1] + \alpha_2$. Therefore, $\E{\ALG} / \OPT \leq \Pr[\ALG \text{ accepts } \alpha_1] + \alpha_2/ \alpha_1$. Since we can choose $\alpha_1$ to be arbitrarily larger than $\alpha_2$, we get that $\E{\ALG}/ \OPT \leq \frac{1}{4}\left(1  + \frac{5}{n-1}\right)$. 
\end{proofw}

\subsection{Proof of Theorem~\ref{thm:sec_min_p_1/2}}\label{proof:thm_sec_min_p_1/2}
\begin{proofw}
Let $\ALG$ be an algorithm for the secretary problem in the $\AOS p$ model, and fix $n \in \mathbb{N}$ and $p \in [0,1]$. $\ALG$ can be viewed as an order-oblivious algorithm for the secretary problem with $k = Binom(n,p)$. Therefore, following the proof of Theorem~\ref{thm:secretary_ob_upperbound}, there is an input $\cC = \set{\alpha_1,\dots,\alpha_n}$, such that $\alpha_1 > \alpha_2 > \dots > \alpha_n$, $\alpha_1$ is arbitrarily larger than $\alpha_2$, and
\begin{align*}
    \Pr[\ALG \text{ accepts } \alpha_1] &\leq \sum_{i = 1}^{n} \frac{i}{n} \cdot \frac{n-i }{n-1}  \Pr[k = i]  + \frac{1}{n-1} \Pr[k = 0]\\
    &= \frac{1}{n(n-1)}\sum_{i = 0}^{n} i(n-i) \binom{n}{i} p^i (1-p)^{n-i} + \frac{p^n}{n-1}\\
    &= \frac{(1-p)}{n-1}\sum_{i = 0}^{n-1} i\binom{n-1}{i} p^i (1-p)^{n-1-i} + \frac{p^n}{n-1}\\
    &= \frac{(1-p)}{n-1} (n-1)p + \frac{p^n}{n-1} \\
    &= p(1-p) + \frac{p^n}{n-1}.
\end{align*}

Let $\cI = (\cC, p)$ denote the input instance in the $\AOS p$ model. We have $\E{\OPT(\cI)} \geq (1-p) \alpha_1$, and  $\E{\ALG(\cI)} \leq \alpha_1 \Pr[\ALG \text{ accepts } \alpha_1] + \alpha_2$. Since we can choose $\alpha_1$ to be arbitrarily larger than $\alpha_2$, we get the upper-bound of $p + \frac{1}{n-1} \cdot \frac{p^n}{(1-p)}$.

For $p > 1/2$, we show that $\E{\ALG(\cI)} / \E{\OPT(\cI)} \leq \frac{1}{2}\left(1+\frac{5}{n-1}\right)$. Otherwise,
\begin{align*}
    \E{\ALG(\cI)} > \E{\OPT(\cI)}\frac{1}{2}\left(1+\frac{5}{n-1}\right) > \frac{p}{2}\left(1+\frac{5}{n-1}\right)> \frac{1}{4}\left(1+\frac{5}{n-1}\right)\OPT,
\end{align*}
where the second inequality follows from the fact that $\E{\OPT(\cI)} \geq p \OPT$.
Now since $\ALG(\cI)$ is equivalent to an order-oblivious algorithm with a sampling phase of length $Binom(n,p)$, this contradicts Theorem~\ref{thm:secretary_ob_upperbound}. 
\end{proofw}

\section{Replacing Greedy with the Maximum Matching}\label{apx:opt_based}
Inspired by the optimal algorithm for bipartite matching with vertex arrivals in the random-order model~\cite{DBLP:conf/esa/KesselheimRTV13}, one could hope to improve upon the order-oblivious competitive-ratio of the greedy-based algorithm by using the offline maximum matching instead of the offline greedy matching at each online round. We show that this approach leads to a non-competitive algorithm. We call the algorithm obtained from Algorithm~\ref{alg:greedy_v_a} by replacing $\Greedy$ with $\OPT$, the \textit{opt-based} algorithm. 

\begin{theorem}\label{thm:non_competitive}
For any $p \in [0,1]$, there is an infinite sequence of bipartite graphs $\CS{G}_1, \CS{G}_2, \dots$ such that the competitive-ratio of the opt-based algorithm on $\CS{G}_{k}$ approaches $0$ as $k \rightarrow{\infty}$. 
\end{theorem}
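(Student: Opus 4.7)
The plan is to exhibit a single ``heavy'' left-vertex whose opt-selected candidate edge is essentially always blocked in the online phase by a cheap online vertex that precedes it, so the algorithm cannot collect any of the dominant weight. Concretely, for each $k\ge 1$, I would construct $\CS{G}_k$ with $\CS{L}=\{u,v_1,\dots,v_k\}$, $\CS{R}=\{r_1,\dots,r_k\}$, the heavy edges $(u,r_i)$ of equal weight $M_k=k^2$ for every $i$, and the light edges $(v_i,r_i)$ of weight $1$ for every $i$. Every maximum matching pairs $u$ with some $r_j$ and each $v_i$ with $i\ne j$ to $r_i$, giving $\OPT(\CS{G}_k)=M_k+(k-1)$.

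The analysis of the opt-based algorithm would proceed by decomposing on the sample $\Rsample$. \emph{Case 1 ($u\in\Rsample$).} Then $u$ never arrives online and no heavy edge is ever added; each online $v_\ell$ contributes at most $1$, so the total weight is at most $k$. \emph{Case 2 ($u\notin\Rsample$ while some $v_i\notin\Rsample$).} The key structural observation is that every maximum matching of $\CS{G}_k[\Rsample\cup\{u\}]$ pairs $u$ with some $r_j$ for which $v_j$ is online: sending $u$ to an $r_{j'}$ with $v_{j'}\in\Rsample$ would force $v_{j'}$ to be unmatched, losing exactly one unit of weight compared with the alternative in which $u$ takes a free index. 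Since each online $v_\ell$ has the unique candidate edge $(v_\ell,r_\ell)$ (its only edge), the adversary places $u$ last in the online order; then every $r_j$ with $v_j$ online is already occupied when $u$ arrives, blocking its candidate regardless of how the opt subroutine broke ties among the equally heavy edges. Again the total weight is at most $k$. \emph{Case 3 ($u\notin\Rsample$ and all $v_i\in\Rsample$).} Here $u$ is the unique online vertex, its candidate edge is necessarily free, and the algorithm collects $M_k$; but this event has probability only $(1-p)p^k$.

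Combining the three cases I would obtain $\E{\ALG(\CS{G}_k)}\le k+(1-p)p^k M_k$. Substituting $M_k=k^2$ and $\OPT(\CS{G}_k)=k^2+k-1$, both $k/\OPT(\CS{G}_k)\le 1/k$ and $(1-p)p^k M_k/\OPT(\CS{G}_k)\le (1-p)p^k$ tend to zero as $k\to\infty$ for every fixed $p\in(0,1)$. The boundary values are handled by the same bound: for $p=0$ Case~3 has probability $0$, and for $p=1$ Case~3 cannot arise since $u$ is always sampled (and the algorithm's output is in fact empty). Hence $\E{\ALG(\CS{G}_k)}/\OPT(\CS{G}_k)\to 0$.

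The main obstacle will be the structural claim in Case 2: I need to verify carefully that, whatever tie-breaking rule the opt subroutine uses to pick an edge from $u$ among the equally weighted alternatives, the chosen $r_j$ always corresponds to an online $v_j$, so the adversary's ``put $u$ last'' plan works uniformly. This reduces to the one-line weight comparison above, but it is precisely the step that exploits the gap between opt-based and greedy-based selection, since greedy's candidate is determined by a local threshold and is immune to this pathological restructuring.
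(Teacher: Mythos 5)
Your proposal is correct and uses essentially the same construction and argument as the paper: a single heavy left vertex connected to all of $\CS{R}$ with equally weighted edges, plus $k$ light ``blocker'' vertices each connected to a distinct $r_i$, with the key observation being that the opt subroutine always routes the heavy vertex to an $r_j$ whose blocker is online, which the adversary then schedules to arrive first. The only cosmetic differences are that you fix the heavy weight concretely as $M_k=k^2$ while the paper lets it be arbitrarily large, and you present a three-way case split where the paper merges your Cases~1 and~2 into a single bound.
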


\begin{proof}
Let $\CS{G}_k =\CS{(L,R,E)}$ where $\CS{R}=\set{r_1,r_2,\dots, r_k}$ and $\CS{L}=\set{v, u_1,\dots,u_k}$. $v$ is connected to all vertices in $\CS{R}$ with $w(v,r_i) = a$ for all $1 \leq i\leq k$ (one should think of $a$ as being arbitrarily large). Each $u_i$ is connected only to $r_i$ through an edge of weight $w(u_i,r_i) = 1$.
 
We choose $a$ large enough so that $k/a$ is negligible. 
Clearly, $\OPT \geq a $. Let $\RS{O} = \CS{L} \setminus \RS{L}'$ be the set of vertices that arrive after the sampling phase. To upper bound the expected value of $\ALG$, note that when $v \notin \RS{O}$, the maximum matching in $\CS{G}[\RS{O}]$ is upper bounded by $k$. Therefore, in case $v \notin \RS{O}$, $\ALG$ obtains a negligible value compared to $\OPT$. We can therefore focus on the case $v \in \RS{O}$. 

Observe that if $\set{u_1,\dots,u_k} \cap \RS{O} \neq \emptyset$, then $v$ is matched in $\OPT(\CS{G}[\RS{L'}\cup \set{v}])$ to some $r_i$ such that $u_i \notin \RS{\RS{L'}}$. This is because the maximum matching includes as many vertices as possible from $\set{u_1,\dots,u_k} \cap \RS{L'}$ in addition to $v$. Consider the order in which the vertices in $\set{u_1,\dots,u_k} \cap \RS{O}$ arrive first, and $v$ arrives afterwards. Each $u_i \in \RS{O}$ is matched by the algorithm to $r_i$. Then, when $v$ arrives, by the observation above the selected candidate edge is $(v,r_i)$ for some $r_i$ such that $u_i \notin \RS{L'}$ (in other words, $u_i \in \RS{O}$). Therefore, $r_i$ is already occupied by $u_i$ which arrived earlier. We have $\E{\given{\ALG}{\set{u_1,\dots,u_k} \cap \RS{O} \neq \emptyset}  } \leq k $. For the second case, $\set{u_1,\dots,u_k} \cap \RS{O} = \emptyset$, we use the trivial upper-bound of $(a+k)$ on the expected value of $\ALG$. For convenience of notation, let $A$ denote the event $\set{u_1,\dots,u_k} \cap \RS{O} = \emptyset$. We get that
\begin{align*}
    \E{\ALG} &= \E{\given{\ALG}{A}}\Pr\left[A\right] + \E{\given{\ALG}{\neg{A}}}\Pr\left[\neg{A}\right] \leq (a+k)\cdot \Pr[ A] +  k\cdot 1 
    \leq a p^k + 2k.
\end{align*}
Hence, $\E{\ALG}/\OPT \leq \frac{ap^k + 2k}{ a}$. Since $2k/ a$ is arbitrarily small, we get that $\E{\ALG}/\OPT= o(1)$ (note that if $p=1$, then $\E{\ALG} = 0$). 
\end{proof}

\end{document}